\documentclass[10pt,twocolumn]{article}
\usepackage[T1]{fontenc}
\usepackage[utf8]{inputenc}
\usepackage{lmodern}
\usepackage[letterpaper,margin=0.75in,columnsep=0.25in]{geometry}
\usepackage{amsmath,amssymb,amsthm}
\usepackage{booktabs,tabularx,array,microtype}
\usepackage[numbers,sort&compress]{natbib}
\usepackage[hidelinks]{hyperref}
\newtheorem{theorem}{Theorem}[section]
\newtheorem{lemma}[theorem]{Lemma}
\newtheorem{corollary}[theorem]{Corollary}
\newtheorem{proposition}[theorem]{Proposition}
\theoremstyle{definition}
\newtheorem{definition}[theorem]{Definition}

\theoremstyle{remark}

\allowdisplaybreaks[1]
\title{Conditional Indexing Lower Bounds and Normalization Boundaries\protect\\for Elastic Subsequence Matching}
\author{Zemin Chao \quad Zitong Li \quad Xiangyu Gao \quad Hongzhi Wang\thanks{Corresponding author.}}
\date{August 2026}
\hypersetup{pdftitle={Conditional Indexing Lower Bounds and Normalization Boundaries for Elastic Subsequence Matching},pdfauthor={Zemin Chao, Zitong Li, Xiangyu Gao, Hongzhi Wang}}
\begin{document}
\maketitle
\begin{abstract}
We study static indexes for exact subsequence matching over all equal-length windows of a time series. Under deterministic SETH, for every fixed $\beta\ge1$ and $\delta\in(0,1)$, we exclude $O(N^\beta)$ preprocessing together with $O(N^{1-\delta})$ worst-case query time on domains containing the reduction's polylogarithmic query lengths. The bounded-error counterpart assumes randomized SETH. The results cover specified warping, bottleneck, edit, subsequence, local-alignment, and complexity-adjusted scores. Their reductions share a data-independent query interface but require different arguments for all-window soundness. We give complete representative proofs for the distinct mechanisms and state the changes needed by related scores. For soft-DTW, integer scaling yields an exact Boolean-predicate reduction in Word-RAM; a model with exact transcendental primitives requires a separate assumption. For independent window-wise Z-normalization, balanced coordinate gadgets and shifted-window moment bounds yield exact lower bounds for squared-cost DTW and discrete Fr\'echet distance. We give finite-word comparisons and a specified length-dependent rounding rule preserving these gaps. Raw and normalized discrete Fr\'echet constructions also exclude feasible-solution approximation within any fixed factor below two. Finally, we prove conditional signal-attenuation bounds for normalized anchor constructions and a fast index under the all-window common-mean promise.
\end{abstract}

\section{Introduction}
\label{in:introduction}

Subsequence matching asks which contiguous part of a long time series resembles a short query.  Elastic scores accommodate local misalignment, missing observations, and differences in sampling, using warping paths, edit operations, or matched subsequences \cite{berndt1994using,chen2004marriage,vlachos2002discovering}. The indexing question is whether preprocessing the target can guarantee a polynomial saving over linear query time on every legal input.  We study a static target $S$ of length $N$, an online query $Q$ of length $m$, and \emph{all} contiguous length-$m$ windows of $S$.  The decision predicate asks whether any such window has distance at most a supplied threshold, or similarity at least a threshold.  This candidate domain is part of the problem: windows may start at any position, including inside an encoding block or across two blocks.

We give reductions from Orthogonal Vectors Indexing for the measures in Table~\ref{in:catalog}. The target encoding depends only on the stored vectors, while the query encoding depends only on the online vector. Correctness must account for every physical window and every permitted alignment. This requires different arguments for different scores: edit operations may remove anchors, warping paths may repeat matches, subsequence alignments may skip symbols, and local alignments may restart. We bound these alternatives and give finite-word representations for the constructed inputs and thresholds.

\subsection{Main results}

\begin{theorem}[Main indexing statement]
\label{in:master}
Consider any exact Word-RAM score slice specified in Table~\ref{in:catalog}, with the definitions and parameter restrictions proved in the corresponding sections.  For every fixed $\beta\ge1$ and $\delta\in(0,1)$, deterministic SETH excludes a uniform exact static index with preprocessing time $O(N^\beta)$ and worst-case query time $O(N^{1-\delta})$ on a length domain covering the reduction's hard family.  The dimension coefficient is chosen \emph{after} $(\beta,\delta)$: $d=\lceil c\log_2n\rceil$ for a sufficiently large fixed $c$, yielding $N=nd^{O(1)}$ and $m=d^{O(1)}$.

Under randomized SETH, the same exclusion holds for every fixed per-input two-sided error $p<1/2$, allowing $c$ to depend also on $p$. Error is measured jointly over preprocessing and query randomness; both running-time bounds are worst case over inputs and random choices. The raw and independently normalized DFD constructions additionally exclude feasible-solution approximation with each fixed $1\le\rho<2$, under the corresponding deterministic or randomized assumption and output guarantee.
\end{theorem}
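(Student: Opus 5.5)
The theorem is a meta-statement assembled from per-score reductions. The plan is to isolate one generic transfer lemma from Orthogonal Vectors Indexing (OVI) and then discharge each row of Table~\ref{in:catalog} by checking a fixed list of properties.

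\textbf{Generic transfer lemma.} We use the standard fact that deterministic SETH implies the following. For every $\beta\ge1$ and $\delta\in(0,1)$ there is a $c$ such that no algorithm preprocesses $n$ vectors in $\{0,1\}^d$, with $d=\lceil c\log_2 n\rceil$, in time $O(n^{\beta})$ and then answers each online orthogonality query in $O(n^{1-\delta})$ time. The usual argument splits a CNF-SAT instance into $n^{k}$ blocks, builds one index per block, and queries all halves; if the index existed, the total running time would be $2^{(1-\varepsilon)\cdot\text{vars}}$.

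Each score slice then needs a reduction with five properties:
\begin{enumerate}
\item the target $S$ is computed from the stored vectors alone, with $N=n\,d^{O(1)}$;
\item the query $Q$ and threshold $\tau$ are computed from the online vector alone, with $m=d^{O(1)}$, in $d^{O(1)}$ time;
\item \emph{completeness}: an orthogonal pair yields a window of $S$ that meets $\tau$;
\item \emph{soundness}: if no pair is orthogonal, every length-$m$ window of $S$, at every start offset and under every permitted alignment, misses $\tau$;
\item all entries and thresholds fit in $O(\log N)$-bit words.
\end{enumerate}

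Given these properties, an index for the score with preprocessing $O(N^\beta)=O(n^\beta d^{O(\beta)})$ and query $O(N^{1-\delta})=O(n^{1-\delta}d^{O(1)})$ answers OVI in $O(n^{1-\delta'})$ per query for any $\delta'<\delta$. The polylogarithmic factors are absorbed because $c$ is chosen after $(\beta,\delta)$ and $d=\Theta(\log n)$. Preprocessing $O(n^{\beta}\,\mathrm{polylog}\, n)$ is handled by invoking the OVI bound with $\beta+1$ in place of $\beta$. This matches the order of quantifiers in the statement.

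\textbf{Randomized and approximation versions.} For the randomized version we use the randomized-SETH form of OVI. A per-input error $p<1/2$ is amplified by $O(\log n)$ independent repetitions of both preprocessing and query with a majority vote. This multiplies the costs by $O(\log n)$, which is again absorbed, and pushes the error below $1/\mathrm{poly}(n)$, so a union bound over all queries of the SAT split goes through. The constant $c$ may therefore depend on $p$.

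For the DFD approximation clause, we show that the raw and normalized DFD gadgets have a gap: the YES value is at most $\tau$ and every NO window has value at least $2\tau$. Any $\rho<2$ feasible-solution approximation then decides the instance. The returned window is checked directly, since it certifies value below $2\tau$, and that is impossible in NO instances.

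\textbf{Per-score discharge.} Properties 1, 2 and 5 follow uniformly from a shared anchor-and-coordinate gadget: separators around each vector block and coordinate gadgets comparing $a_i$ with $b_i$. Each section of the paper proves property 4 for its score. The expected obstacle is soundness over all windows, and each score needs its own mechanism:
\begin{itemize}
\item for warping and Fr\'echet, a path that repeats matches must pay at a separator;
\item for edit distance, deleting an anchor must cost more than the total slack;
\item for subsequence and LCS-type scores, skipped symbols must be accounted for by a weighted count;
\item for local alignment, restarts must not beat the global gadget.
\end{itemize}
For windows that straddle two blocks, the argument is that such a window must misalign at least one anchor, which costs more than the YES/NO gap.

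The normalized variants add one more step. After per-window Z-normalization, balanced gadgets keep the mean and variance nearly fixed, and the shifted-window moment bounds guarantee that the gap survives a specified length-dependent rounding.

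For soft-DTW, integer scaling with temperature $\gamma=1/\mathrm{poly}$ makes the log-sum-exp term deviate from the hard minimum by less than half the gap. The predicate is then exact in Word-RAM once the gap is rescaled to an integer.

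Assembling these per-section lemmas with the transfer lemma proves the theorem.
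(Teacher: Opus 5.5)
Your architecture matches the paper's. It has the OnlineOV bound of Lemma~\ref{lem:rt-ovi} with $c$ chosen after $(\beta,\delta)$ and $p$, and a transfer with exponent slack, where your $\beta+1$ plays the role of the paper's $\beta+\tfrac12$. It amplifies entire independent index copies and takes a union bound over the polynomially many SAT-split queries. It uses per-score all-window certificates, and it verifies a returned DFD window against a threshold strictly between $\rho$ and $2$.

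The step that fails as written is soft-DTW. The covered slice fixes a positive rational $\gamma=a/b$ independent of $n$ and $d$. Setting the temperature to $\gamma=1/\mathrm{poly}$ proves hardness only for a size-dependent temperature, and by Proposition~\ref{fw:elementary}(2) that excludes nothing for any fixed $\gamma$. With $\gamma$ fixed, the only certified deviation bound is $\gamma\ln|\Pi|\le 2\gamma m\ln 3$ from Lemma~\ref{rs:entropy}. This is $\Theta(d)$, which exceeds the raw DTW separation of $d$ versus $d+1$. You must instead keep $\gamma$ and scale the \emph{costs} by some $\kappa=\Theta(m)$ with $\kappa>2\gamma m\ln 3$. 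The anchor spacing must be raised with it, so that the first-departure and first-cell barriers still exceed $\tau+\gamma\ln|\Pi|$ with $\tau=\kappa d$. Lemma~\ref{rs:soft-gap} takes $K=4a+1>2\gamma\ln 3$, so no logarithm is evaluated, together with $\kappa=Km+2$ and $C=\kappa(d+4)+Km+1$. Scaling every entry by $\kappa$ is equivalent to temperature $\gamma/\kappa$ on the unscaled instance. If that is what you meant by ``temperature $1/\mathrm{poly}$'', the reduction must still query the index at the fixed $\gamma$, with a scaled target and a scaled threshold.

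There are two smaller inaccuracies. First, a straddling window must pay more than the \emph{threshold}, not merely more than the YES/NO gap. The rest of the path adds only nonnegative cost, so raw DTW needs $C-\Delta=d+1>\tau=d$. Also, in the buffered EDR/Swale constructions a shifted window may contain a complete core and be accepted in a YES instance. So the correct requirement is your property 4 (soundness in NO instances), not ``every straddling window pays at an anchor.'' Second, properties 1, 2 and 5 do not follow uniformly from one shared gadget. The encodings differ by score and mostly use no separators, and cDDTW has entries of magnitude $2^{O(d)}$. Ordinary normalized DFD also needs a rational threshold in $[1/\sqrt v,2/\sqrt v)$, such as $m/\lfloor\sqrt{J_0}\rfloor$ from Corollary~\ref{cor:z:rational}. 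It further needs the exact quadratic-field sign test to verify a returned window.
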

\begin{proof}
Each score section supplies its all-window equivalence, target/query separability, size bound, and numerical certificate. Lemmas~\ref{lem:rt-ovi} and~\ref{lem:rt-transfer} give the exact indexing exclusions.  The two DFD approximation results use their explicit distance-$1$ versus distance-$2$ gaps (up to a common normalization factor) and polynomial-in-$m$ verification of the returned feasible window or coupling.
\end{proof}

The statement concerns fixed, exactly implementable parameter slices; a uniform algorithm accepting parameters explicitly is also ruled out if its input domain contains a proved slice.  It does not establish an unconditional $\Omega(N)$ bound, a result for constant $m$, or a result for every prescribed small coefficient in $m=\Theta(\log N)$. The shorthand conditional $N^{1-o(1)}$ refers only to a uniform domain covering the hard lengths needed for the exponents in question.

\begin{table*}[tp]
\centering
\small
\setlength{\tabcolsep}{4pt}
\renewcommand{\arraystretch}{1.15}
\begin{tabularx}{\textwidth}{@{}p{0.17\textwidth}p{0.38\textwidth}X@{}}
\toprule
Score family & Score on one window $X$ and query $Q$ & Proved setting and consequence \\
\midrule
DTW / cDTW & $\min_{\pi\in\Pi}\sum_{(i,j)\in\pi}|x_i-q_j|$ & Global endpoints; unrestricted or fixed Sakoe--Chiba band; exact indexing lower bound \\
cWDTW & $\min_{\pi\in\Pi}\sum_{(i,j)\in\pi}\omega(|i-j|)|x_i-q_j|$ & Fixed exactly implementable weights with a uniform positive lower bound \\
Itakura-DTW & DTW formula with $\pi\in\Pi_m^{\rm Ita}$ & Fixed path rule containing the complete diagonal on the hard lengths \\
Discrete Fr\'echet & $\min_{\pi\in\Pi}\max_{(i,j)\in\pi}|x_i-q_j|$ & Exact indexing and feasible approximation with each fixed $1\le\rho<2$ \\
ERP & $\min_{\mathcal A}\sum_{(a,b)\in\mathcal A}|a-b|$ & Global edit columns with gap value $g=0$; fixed representable gaps by common translation \\
TWED & Minimum sum of the three transition costs in~\eqref{em:twed-dp} & Local unit timestamps, infinite boundaries, fixed $\nu,\lambda\ge0$ \\
MSM & Minimum Move and Split/Merge costs in~\eqref{em:msm-dp} & Fixed exactly implementable penalty $\lambda>0$ and the stated initialization \\
cDDTW & $\operatorname{DTW}(\mathcal D(X),\mathcal D(Q))$ & Endpoint-trimmed internal derivative; absolute local cost; fixed band or unrestricted paths \\
LCSS & $L(X,Q)=\max\{|\mathcal A|:\mathcal A\text{ is an ordered match}\}$ & Fixed integer-symbol tolerance slice, without an additional time band \\
EDR & Minimum number of substitutions, insertions, and deletions & Matches cost zero; other edits cost one; fixed tolerance slice \\
Swale & $W(X,Q)=(3L(X,Q)-|X|-|Q|)r$ & Fixed reward $+r$ and one-sided gap cost $-r$; global recurrence with no mismatch diagonal \\
TS-SW & $T(X,Q)=\max_{i,j}H_{i,j}$, with reward $4-|x_i-q_j|$ & Zero-restart local recurrence and unit gap penalty; exact lower bound with a positive vanishing gap \\
Raw CID & $\displaystyle\|X-Q\|_2\,\frac{\max\{\mathrm{CE}(X),\mathrm{CE}(Q)\}}{\min\{\mathrm{CE}(X),\mathrm{CE}(Q)\}}$ & Positive-complexity raw inputs; exact squared rational comparison \\
Soft-DTW & $-\gamma\ln\sum_{\pi\in\Pi}\exp(-C_\pi/\gamma)$ & Absolute path cost, fixed positive rational $\gamma$; exact Word-RAM predicate; separate extended-model assumption \\
Normalized squared DTW / DFD & $\begin{gathered}\min_\pi\sum_{(i,j)\in\pi}(Z_X(i)-Z_Q(j))^2\\\min_\pi\max_{(i,j)\in\pi}(Z_X(i)-Z_Q(j))^2\end{gathered}$ & Independent positive-variance normalization; exact finite-word predicates \\
Ordinary normalized DFD & Square root of normalized squared DFD & Exact rational-threshold indexing and each fixed feasible approximation factor $\rho<2$ \\
Rounded normalized scores & Replace each local square $a$ by $\lfloor R(m)a\rfloor/R(m)$ & Prescribed dyadic $R(m)\ge8m^5$; subsequent DP operations are exact integer operations \\
\bottomrule
\end{tabularx}
\caption{Score definitions and scope of the indexing results. The shared exponent, length-domain and randomness conventions are in Theorem~\ref{in:master}. Symbols $\mathcal D$, $\mathrm{CE}$, $H$, and the edit recurrences are defined in their score sections. The normalized construction uses squared local costs; the raw DTW rows use absolute costs. Table~\ref{tab:gap-catalog} gives the numerical separations.}
\label{in:catalog}
\end{table*}

The constructions start from a bounded asymmetric embedding of Boolean vectors. For each measure, we use its recurrence to rule out alignments that could invalidate the encoded distance gap. The arguments use direct anchor matches, cumulative costs, projection onto complete cores, or mandatory endpoint costs. Derivative DTW requires an inverse construction and a bound on derivatives across block boundaries \cite{keogh2001derivative}. For CID, we also control the multiplicative complexity factor~\cite{batista2011cid}.

Soft-DTW requires an additional entropy bound. Scaling the integer input costs preserves a strict gap after soft aggregation, giving an exact decision reduction on the Word-RAM. The reduction does not evaluate exponentials. The model with unit-cost exact transcendental operations has a separate OnlineOV assumption. For approximate score evaluation, we state the uniform error bound needed to preserve the decision gap.

Independent window normalization is handled by a different embedding. Zero-sum, equal-energy coordinate gadgets give common moments to the canonical blocks, while explicit perturbation bounds control all shifted windows. The cost at the first cell then excludes those windows for squared-cost DTW and DFD. We give exact comparisons of the resulting quadratic-root expressions and a length-dependent dyadic rounding rule that preserves the gap. We also examine when normalization attenuates amplitude-based constructions. In particular, requiring a common mean for all windows forces periodicity and leaves at most $m$ distinct windows; the normalized reduction only requires equal moments on canonical blocks.

\paragraph{Organization of the proofs.}
The common complexity argument is proved once in Section~\ref{sec:rt-model}. The remaining proofs are organized by the mechanism that excludes an unintended witness. Section~\ref{sec:rt-raw} gives the complete first-departure argument for additive DTW; bottleneck aggregation, positive weights, and fixed path restrictions reuse it with explicit changes to the threshold. Section~\ref{em:section} treats actual edit transitions, with ERP as the complete alignment-column argument and separate sketches for the variation and Split/Merge costs. Section~\ref{ds:section} proves the repeated-anchor and buffered-core arguments before transferring the latter to Swale. The inverse-derivative boundary calculation, the soft-min gap, and the independently normalized moment estimates have distinct premises and are proved separately. A proof sketch specifies the inherited lemma and the changed inequalities; all statements retain their parameter and output restrictions.

The constructions and their quantitative gaps are collected in Table~\ref{tab:gap-catalog}. In addition to the indexing exclusions, Section~\ref{fw:normalization} establishes explicit perturbation bounds for absolute-cost DTW, soft-DTW, and DFD under a shared anchor skeleton. It also gives an $O(N)$-space index with $O(N^2)$ preprocessing and polynomial-in-$m$ query time on the all-window common-mean promise. These auxiliary results delimit the normalization argument and are separate from the positive normalized lower bounds.

\subsection{Related work}

\paragraph{Subsequence search and preprocessing.}
Feature-space subsequence indexing~\cite{faloutsos1994fast}, exact DTW indexes~\cite{keogh2002exact,niennattrakul2010exact}, and large-scale search with pruning and early abandonment~\cite{rakthanmanon2012searching} reduce practical search costs.  KV-Match addresses normalization and time warping~\cite{wu2019kvmatch}; FSMDTW reduces the average overhead of candidate-window lower-bound evaluation~\cite{chao2025fsmdtw}. Stronger safe distance bounds improve filtering \cite{lemire2009faster,webb2021tight}.  These numerical lower bounds support algorithms; our conditional \emph{running-time} lower bounds address a different question and do not preclude those improvements.

A direct predecessor is Chao et al.~\cite{chao2022inherent}, who give conditional query lower bounds for raw $L_p$ subsequence matching after polynomial preprocessing and connect a zero-bandwidth slice to raw DTW. The present reductions treat the specified elastic paths, all shifted equal-length windows, fixed parameter restrictions, and independent normalization. Goldenberg, Rubinstein, and Saha~\cite{goldenberg2020preprocessing} study comparison after separately preprocessing two strings, with related hardness under BP-SETH.  That result has a different query object and hypothesis from our static-text, all-window SETH formulation.

\paragraph{Sequence complexity and low-distance DTW.}
Conditional pairwise lower bounds for edit distance, DTW, LCS, and Fr\'echet distance underpin fine-grained sequence complexity \cite{backurs2015edit,bringmann2015quadratic,abboud2015tight,bringmann2014why}. Faster low-distance or approximate algorithms \cite{kuszmaul2019dynamic}, comparison to a fixed sequence \cite{abboud2021fixed}, and dynamic DTW data structures \cite{bringmann2024dynamic} make input and resource restrictions particularly important when comparing bounds.

Boneh, Golan, and Kociumaka~\cite{boneh2026low} give low-distance DTW and pattern-matching reductions.  We observe that specializing the query vector family to a singleton in their Section~5.1 construction makes the text depend only on the data family and dimension, while the pattern depends only on the online vector and has length $O(d^2)$.  Combining this separation with OnlineOV gives fixed-text indexing hardness for their setting. Their seven-symbol unit-mismatch cost and arbitrary nonempty substring domain differ from the numerical absolute costs and equal-length windows considered here. The latter require separate soundness arguments.

\paragraph{Fr\'echet queries and proximity structures.}
Blank~\cite{blank2026imbalanced} establishes one-dimensional discrete Fr\'echet whole-curve query hardness after polynomial preprocessing, including a below-two approximation barrier.  Gudmundsson et al. \cite{gudmundsson2024mapmatching} prove preprocessing barriers for two-dimensional continuous Fr\'echet whole-trajectory queries. Bringmann et al.~\cite{bringmann2022frechetann} study preprocessing--query bounds for continuous Fr\'echet near-neighbor search over curve sets. Whole curves, a curve collection, and all windows of one text have different candidate domains: discarding text prefixes and suffixes can create a match, so whole-curve soundness does not establish all-window soundness.  The general themes of preprocessing hardness and a factor-two DFD barrier are consequently established background.

Curve hashing and deterministic approximate near-neighbor structures \cite{driemel2017lsh,filtser2020approximate} use approximation and curve-complexity restrictions.  Aronov et al.~\cite{aronov2020proximity} consider DTW proximity problems with segment queries, again a distinct query domain. These results concern different candidate families and approximation guarantees from those in Table~\ref{in:catalog}. OnlineOV itself is prior work~\cite{goldstein2017data,gajulapalli2026online}; our self-contained derivation records the exponent order and randomized interface needed for the reductions.

Sections~\ref{sec:rt-model} and~\ref{fw:framework} give the computational model and reduction lemmas. Sections~\ref{sec:rt-raw}--\ref{ds:section} treat the raw scores, and Section~\ref{sec:rt-soft} treats soft-DTW. Sections~\ref{fw:normalization} and~\ref{sec:z} address normalization.

\section{Preliminaries}
\label{sec:rt-model}

\subsection{Static subsequence queries}

For a sequence $S$ of length $N$, let $\operatorname{Win}_m(S)$ contain all $N-m+1$ consecutive length-$m$ windows. Preprocessing receives only $S$ and constructs an index $\mathcal I(S)$. Subsequently a query supplies a sequence $Q$, a threshold $\tau$, and any explicit score parameters. A query reads the stored index without modifying it. For a finite-valued minimizing score $D$, write
\begin{align}
 D_{\rm sub}(S,Q)&=\min_{X\in\operatorname{Win}_{|Q|}(S)}D(X,Q),\nonumber\\
 \text{answer Yes}\quad&\Longleftrightarrow\quad D_{\rm sub}(S,Q)\le\tau.
 \label{eq:rt-query}
\end{align}
For a maximizing similarity, use the maximum and reverse the threshold inequality. Preprocessing is independent of $Q$ and $\tau$. Window starts range over every physical position, including those inside an encoding block or crossing a boundary.

\begin{definition}[Query-length domain]\label{def:query-domain}
A query-length domain is a prescribed family $\mathcal L=(\mathcal L_N)_{N\ge1}$ with $\mathcal L_N\subseteq\{1,\ldots,N\}$. Correctness and the stated query-time bound must hold uniformly for every legal query with $|Q|\in\mathcal L_N$. Supporting all lengths means $\mathcal L_N=\{1,\ldots,N\}$; a single size-dependent slice may instead have $\mathcal L_N=\{m(N)\}$. Equal-length windows are defined relative to the current query and do not require $m(N)=O(1)$.
\end{definition}

A parameter fixed in a theorem is independent of $n$ and $d$; the reduction's constants may depend on it. A uniform parameterized algorithm accepts the parameter explicitly and must handle all legal values. Restricting that algorithm to a proved fixed slice is therefore valid. Query time includes reading the query, threshold, and explicit parameters, as well as producing the required output. The hard queries below have $m=\operatorname{polylog}N$, so their input-reading cost is $N^{o(1)}$.

\subsection{Computation and complexity hypotheses}

We use a uniform Word-RAM with word length $w=\Theta(\log N)$. The same finite algorithms apply at all input sizes. All hard inputs have explicitly encoded integers or rationals occupying $O(1)$ words, and every required primitive is implemented by $O(1)$ word operations. Most encodings have polynomial magnitude in $d=\Theta(\log N)$; inverse-derivative reconstruction has $2^{O(d)}$ magnitudes and still uses $O(\log N)$ bits. Section~\ref{fw:arithmetic} gives the numerical criterion. The separately specified extended Real-RAM for soft-DTW is not part of this default model.

Preprocessing and query times are worst case over legal inputs and, when applicable, all random choices. For each fixed target and query, correctness probability is taken jointly over preprocessing and query randomness. A fixed error bound $p<1/2$ concerns that input; it does not promise simultaneous correctness of all possible queries. The SAT reductions below use a fixed query collection, and amplification supplies the batch guarantee they require.

Deterministic SETH asserts that for every $\varepsilon>0$ some fixed $k$-SAT has no $O^*(2^{(1-\varepsilon)v})$ deterministic algorithm on $v$ variables~\cite{impagliazzo2001ksat,impagliazzo2001complexity}. Randomized SETH makes the same assertion for algorithms with error at most $1/3$ on each input and worst-case running time. We use the deterministic assumption for deterministic indexes and the randomized assumption for randomized indexes. Randomized SETH appears in lattice hardness~\cite{aggarwal2018gapsvp,bennett2020bdd}; the preprocessing BDD result uses a separate nonuniform hypothesis, while the random-algebraic approximation work of Bukh et al.~\cite{bukh2022randomalgebraic} uses a one-sided Monte Carlo formulation. These conventions are distinct from the two-sided per-input assumption adopted here.

\subsection{The OnlineOV lower bound}

Online Orthogonal Vectors (OnlineOV) preprocesses an ordered family $A=(a^{(1)},\ldots,a^{(n)})$ of vectors in $\{0,1\}^d$ and answers, for an online vector $q\in\{0,1\}^d$, whether some $\langle a^{(i)},q\rangle=0$. Repeated records are allowed. Its preprocessing--query tradeoffs are established background \cite{goldstein2017data,gajulapalli2026online}; the quantifiers used here are recorded explicitly. All complexity logarithms are base two; soft-min formulas explicitly use natural logarithms.

\begin{lemma}[Amplification with preprocessing randomness]\label{lem:rt-amplification}
Fix $(A,q)$ and suppose its joint preprocessing-and-query error is at most $p<1/2$. Run an odd number $K$ of independent copies of the entire index and answer by majority. The error is at most
\[
 \exp\bigl(-2K(1/2-p)^2\bigr).
\]
For $M\ge2$ and fixed $\lambda>0$, it is at most $M^{-(\lambda+1)}$ when
\[
 K\ge\frac{(\lambda+1)\ln M}{2(1/2-p)^2}.
\]
Preprocessing time, stored space, and query time increase by at most a factor $K$.
\end{lemma}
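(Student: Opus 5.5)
The plan is to treat the $K$ copies as independent Bernoulli trials and apply Hoeffding's inequality to their number of errors. Fix $(A,q)$ and let $X_k\in\{0,1\}$ indicate that copy $k$ answers incorrectly. Each copy draws fresh preprocessing randomness, builds its own index $\mathcal I_k(A)$, and then uses fresh query randomness. Because of this, the pairs (preprocessing coins, query coins) for different copies are mutually independent. By hypothesis, each satisfies $\Pr[X_k=1]\le p$.

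The one point needing care is that the error probability must be \emph{joint} over both sources of randomness. If a single index were shared and only the query were repeated, the trials would be correlated through the preprocessing coins, and majority voting would not amplify. Replicating the entire index removes that correlation. This is the only conceptual step; the rest is a routine calculation.

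\textbf{Step 1 (majority error is a tail event).} Since $K$ is odd, the majority answer is wrong only if $\sum_k X_k\ge (K+1)/2>K/2$. Let $\mu=\mathbb E\sum_k X_k\le pK$. Then a wrong majority requires $\sum_k X_k-\mu\ge K(1/2-p)$.

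\textbf{Step 2 (apply Hoeffding).} For independent $[0,1]$ variables, Hoeffding's inequality gives
\[
\Pr\Bigl[\sum_k X_k-\mu\ge tK\Bigr]\le \exp(-2Kt^2).
\]
Taking $t=1/2-p>0$ yields the bound $\exp\bigl(-2K(1/2-p)^2\bigr)$. If some $\Pr[X_k=1]$ is strictly smaller than $p$, I would handle it either by stochastic domination or simply by using $\mu\le pK$, since enlarging the deviation only helps.

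\textbf{Step 3 (choice of $K$).} Suppose $K\ge(\lambda+1)\ln M/(2(1/2-p)^2)$. Then $2K(1/2-p)^2\ge(\lambda+1)\ln M$, so the error is at most $M^{-(\lambda+1)}$. Rounding $K$ up to an odd integer preserves this inequality.

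\textbf{Step 4 (resource accounting).} Preprocessing runs $K$ independent preprocessings, so its time and stored space each multiply by $K$. A query runs $K$ queries and takes a majority, which adds $O(K)$ time. Since each query already takes at least one step, this extra cost is absorbed into the factor $K$, possibly after adjusting constants. Because $p$ and $\lambda$ are fixed, $K=O(\log M)$.
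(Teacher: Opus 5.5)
Your proposal is correct and follows essentially the same route as the paper's proof. Both replicate the entire index, including the preprocessing coins, to obtain independent error indicators; both observe that a wrong majority requires $\sum_k X_k\ge K/2$, apply Hoeffding's inequality, read off $M^{-(\lambda+1)}$ from the choice of $K$, and charge the resources $K$-fold.
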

\begin{proof}
Let $E_i$ indicate an incorrect answer in copy $i$. Independence includes the preprocessing coins, so $E_1,\ldots,E_K$ are independent and have expectations at most $p$. An incorrect majority requires $\sum_i E_i\ge K/2$. Hoeffding's inequality~\cite{hoeffding1963probability} gives the bound; the choice of $K$ gives the stated power of $M$. Constructing and querying $K$ copies gives the resource bounds. Repeating queries against only one randomized preprocessing outcome would not supply this independence.
\end{proof}

\begin{lemma}[Parameter-sensitive OnlineOV lower bound]
\label{lem:rt-ovi}
Fix $b\ge1$ and $\epsilon\in(0,1)$. Under deterministic SETH, there exists $c_0(b,\epsilon)$ such that, for every fixed $c\ge c_0$, no uniform OnlineOV index on $d=\lceil c\log_2 n\rceil$ dimensions has preprocessing $O(n^b)$ and query time $O(n^{1-\epsilon})$. Under randomized SETH the same holds for every fixed $p\in[0,1/2)$, allowing $c_0$ also to depend on $p$, with the joint per-input error and worst-case time conventions above.
\end{lemma}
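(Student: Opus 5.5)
We argue by contradiction via the standard split-and-list reduction from $k$-SAT to OnlineOV, with the exponent order fixed explicitly. Suppose that for some fixed $c\ge c_0$ there is a uniform index on $d=\lceil c\log_2 n\rceil$ dimensions with preprocessing $O(n^b)$ and query time $O(n^{1-\epsilon})$. Choose $\varepsilon'>0$ small in terms of $(b,\epsilon)$, let $k$ be the clause width supplied by deterministic SETH for $\varepsilon'$, and apply the Sparsification Lemma. A $k$-CNF formula on $v$ variables then becomes $2^{\varepsilon' v}$ formulas, each with at most $C(k,\varepsilon')v$ clauses. Only this clause constant needs to be dominated by $c$, so $c_0$ will depend on $(b,\epsilon)$ through $k$ and $\varepsilon'$.

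For each sparsified formula, split the variables into a part $V_1$ of size $v_1=\lceil \alpha v\rceil$ and the remainder $V_2$, where $\alpha\in(0,1)$ is chosen below. Stored vectors encode assignments to $V_1$: coordinate $j$ is $0$ iff clause $j$ is satisfied by that partial assignment. Query vectors do the same for $V_2$, so orthogonality is equivalent to satisfiability. This gives $n=2^{v_1}$ stored vectors and $2^{v-v_1}$ queries, each of dimension $Cv\le C v_1/\alpha = (C/\alpha)\log_2 n$. We pad with zero coordinates to reach exactly $d=\lceil c\log_2 n\rceil$, which requires $c\ge C/\alpha$. The total time is
\[
2^{\varepsilon' v}\bigl(n^b+2^{v-v_1}n^{1-\epsilon}\bigr)\mathrm{poly}(v)
=2^{\varepsilon' v}\bigl(2^{b\alpha v}+2^{(1-\alpha\epsilon)v}\bigr)\mathrm{poly}(v).
\]
Take $\alpha=1/(b+1)$, which makes $b\alpha<1$ and $1-\alpha\epsilon<1$. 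Then set $\varepsilon'=\tfrac12\min\{1-b\alpha,\alpha\epsilon\}$. The runtime becomes $O^*(2^{(1-\varepsilon'')v})$ for some $\varepsilon''>0$, contradicting SETH. Setting $c_0=C(k,\varepsilon')/\alpha$ completes the quantifier order: $(b,\epsilon)\to\varepsilon'\to k\to C\to c_0$, after which $c$ may be any fixed value $\ge c_0$.

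The main obstacle is handling uniformity and "for every fixed $c\ge c_0$" correctly. The index is assumed to exist only for one specific $c$, so the SAT algorithm must choose $n$ from $v$ to make $\lceil c\log_2 n\rceil\ge Cv$. We do this by rounding: we enlarge $v_1$ slightly, or duplicate records to raise $n$ to a suitable power of two. Repeated records are allowed, and this duplication changes $n$ by at most a constant factor, so the exponents are unaffected. We also need the hidden constants in $O(n^b)$ to be uniform, which the uniform-algorithm assumption supplies.

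For randomized SETH the reduction is the same, with error handled by Lemma~\ref{lem:rt-amplification}. Each sparsified formula issues $M=2^{v-v_1}$ queries, and there are $2^{\varepsilon' v}$ formulas. We amplify with $K=O(\log M/(1/2-p)^2)=O(v)$ independent copies of the whole index, rebuilding preprocessing each time so that the independence hypothesis of the lemma holds. Each query then errs with probability at most $M^{-(\lambda+1)}$. A union bound over all $2^{O(v)}$ queries gives total error below $1/3$ once $\lambda$ is a suitable constant. The extra factor $K=\mathrm{poly}(v)$ is absorbed into $O^*$, and the dependence of $K$ on $p$ is why $c_0$ may depend on $p$ in the randomized statement.
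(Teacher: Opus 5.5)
Your proposal is correct and is essentially the paper's proof: sparsify, split the variables with a data-side fraction depending only on $b$, encode unsatisfied clauses as ones, zero-pad to $\lceil c\log_2 n\rceil$ with $c_0$ fixed by the sparsification clause constant, and in the randomized branch amplify over independent full rebuilds with a union bound over the predetermined query collection. The differences are cosmetic. You split at $1/(b+1)$ instead of $1/(2b)$, and your bookkeeping lands exactly on the excluded $O^*(2^{(1-\varepsilon')v})$ bound; this suffices under the $O^*$ form of SETH, but you should say the saving is exactly $\varepsilon'$ rather than ``some $\varepsilon''>0$'', whereas the paper keeps strict slack. The record-duplication step is unnecessary, since $n=2^{v_1}$ already pads correctly. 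Your explanation of $c_0$'s dependence on $p$ via $K$ is also unnecessary, because $K=\mathrm{poly}(v)$ is absorbed into $O^*$.
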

\begin{proof}
We first prove the deterministic statement. Put $\theta=1/(2b)$ and $\eta=\theta\epsilon$. Deterministic SETH supplies a fixed clause width $k$ for which $k$-SAT on $v$ variables has no $O^*(2^{(1-\eta/2)v})$ algorithm. Sparsification with parameter $\eta/4$ produces at most $2^{\eta v/4}$ formulas, each with at most $s v$ clauses for a fixed $s$ depending on $k,\eta$~\cite{impagliazzo2001complexity}. Here $O^*$ suppresses only polynomial factors in $v$.

For one sparse formula, assign $\lfloor\theta v\rfloor$ variables on the data side and the remaining variables on the query side. A coordinate is one exactly when the partial assignment leaves the corresponding clause unsatisfied. A data vector and query vector are orthogonal if and only if their combined assignment satisfies the formula. Keep all assignment records, even when two records give the same vector. Thus
\[
 n=2^{\lfloor\theta v\rfloor},\qquad
 \#\text{queries}=2^{v-\lfloor\theta v\rfloor}.
\]
Choose a fixed $c_0$ large enough that $s v\le c_0\lfloor\theta v\rfloor$ for all sufficiently large $v$. For every fixed $c\ge c_0$, zero padding yields dimension $d=\lceil c\log_2 n\rceil$ without changing orthogonality. The same algorithm and padding rule apply at every sufficiently large input size.

The supposed index preprocesses in $O^*(2^{v/2})$ time and answers the whole query family in
\[
 O^*\!\left(2^{v-\theta v}\,2^{\theta(1-\epsilon)v}\right)
 =O^*(2^{(1-\eta)v}).
\]
Generating all query vectors takes $O^*(2^{(1-\theta)v})$ and is absorbed because $\eta<\theta$. Data generation is absorbed as well. Summing over the sparse formulas gives $O^*(2^{(1-3\eta/4)v})$, which is strictly faster than the excluded $O^*(2^{(1-\eta/2)v})$ bound.

For the randomized statement use randomized SETH to choose $k$, and amplify each index using Lemma~\ref{lem:rt-amplification}. Taking $K=O(v/(1/2-p)^2)$ makes each fixed query's error at most $2^{-4v}$. The complete collection of sparse formulas and partial-assignment queries is determined by the SAT input, independently of all answers. A union bound over this collection gives total error below $1/3$ for sufficiently large $v$; no independence between different queries is required. The factor $K$ is polynomial in $v$, so the same exponent contradiction holds. This proves the randomized claim under its own hypothesis.
\end{proof}

The order is essential: first fix the desired preprocessing exponent and query saving, and only then select the logarithmic dimension coefficient. This does not give hardness for every small fixed $c$.

In particular, the deterministic quantifiers are
\[
\begin{gathered}
 \forall b\ge1\ \forall\epsilon\in(0,1)\ \exists c_0>0\\
 \forall c\ge c_0\ \exists n_0\ \forall n\ge n_0,
\end{gathered}
\]
where $b,\epsilon,c$ are fixed before $n$ varies; the randomized statement fixes $p$ before choosing $c_0$. For a fixed small $c$, storing the answers to all $2^d=n^{c+o(1)}$ Boolean queries takes at most $O(nd2^d)=n^{c+1+o(1)}$ preprocessing time and permits direct lookup. This is why an arbitrary polynomial preprocessing budget cannot be excluded by fixing one universal dimension coefficient.

\section{Reduction framework}
\label{fw:framework}

The reduction has two maps: the stored vectors determine one long target, and an arriving vector determines one short query and its threshold. We first give these maps for the common numerical construction. We then prove the indexing transfer once and collect the witness and numerical lemmas used to instantiate it. Score-specific sections change the encoding only where their recurrence requires it.

\subsection{The common encoding}\label{fw:construction}

The common numerical construction starts with the following bounded embedding. Its two encoding families are asymmetric: only cross-family distances encode orthogonality.
\begin{lemma}[Bounded cross-family embedding]\label{lemma:raw-embedding}\label{lem:raw-embedding}
For $u,v\in\{0,1\}^d$, concatenate the coordinate gadgets
\[
\begin{array}{ll}
P_0=(0,0),&P_1=(2,0),\\
R_0=(1,0),&R_1=(0,1)
\end{array}
\]
to obtain $P(u),R(v)\in\{0,1,2\}^{2d}$ in $O(d)$ time. Then
\begin{align}
\|P(u)-R(v)\|_1&=d+2\langle u,v\rangle,\label{rs:embedding-l1}\\
\|P(u)-R(v)\|_\infty&=
\begin{cases}1,&\langle u,v\rangle=0,\\2,&\langle u,v\rangle>0.\end{cases}
\label{rs:embedding-linf}
\end{align}
For every common scale $s>0$, translation $b\in\mathbb R$, and $p\in\{1,\infty\}$,
\[
\|(sP(u)+b\mathbf1)-(sR(v)+b\mathbf1)\|_p
=s\|P(u)-R(v)\|_p.
\]
\end{lemma}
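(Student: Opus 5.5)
The proof is a direct coordinatewise computation, since both norms decompose over the $d$ two-entry gadgets. I will first record the four possible gadget differences $P_a-R_b$ for $a,b\in\{0,1\}$:
\begin{align*}
P_0-R_0&=(-1,0), & P_0-R_1&=(0,-1),\\
P_1-R_0&=(1,0), & P_1-R_1&=(2,-1).
\end{align*}
Each pair with $ab=0$ thus has $\ell_1$ norm $1$ and $\ell_\infty$ norm $1$. The single pair with $ab=1$ has $\ell_1$ norm $3$ and $\ell_\infty$ norm $2$. The entries of every gadget lie in $\{0,1,2\}$, so $P(u),R(v)\in\{0,1,2\}^{2d}$. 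Concatenation clearly takes $O(d)$ time.

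Next I sum over coordinates. The $\ell_1$ norm of the concatenated difference is the sum of the gadget $\ell_1$ norms, which is $\sum_{k=1}^d(1+2u_kv_k)=d+2\langle u,v\rangle$; this proves \eqref{rs:embedding-l1}. The $\ell_\infty$ norm is the maximum of the gadget $\ell_\infty$ norms. Every gadget contributes at least $1$, and some gadget contributes $2$ exactly when $u_kv_k=1$ for some $k$, that is, when $\langle u,v\rangle>0$. This gives the two cases of \eqref{rs:embedding-linf}, and it uses $d\ge1$ so that the maximum is over a nonempty set and equals $1$ in the orthogonal case.

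Finally, for the affine invariance, the translation $b\mathbf1$ cancels in the difference, which leaves $s(P(u)-R(v))$. Absolute homogeneity of any norm with $s>0$ then gives the factor $s$ for both $p=1$ and $p=\infty$.

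I expect no real obstacle. The only care needed is getting the $(2,-1)$ gadget difference right, because it alone produces both the additive excess of $2$ and the bottleneck value $2$. The asymmetry should also be noted: same-family comparisons are not controlled, and the statement only claims the cross-family identities.
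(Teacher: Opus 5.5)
Your proposal is correct and follows the paper's own argument: compute the four gadget differences to get local $\ell_1$ cost $1+2u_kv_k$ and local $\ell_\infty$ cost $1+u_kv_k$, then sum or maximize over coordinates, and let the translation cancel while positive scaling factors out. Your explicit note that $d\ge1$ is needed for the orthogonal case of the $\ell_\infty$ identity to equal $1$ is a small detail the paper leaves implicit.
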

\begin{proof}
The four Boolean pairs give local $L_1$ cost $1+2u_iv_i$ and local $L_\infty$ cost $1+u_iv_i$. Sum the former and maximize the latter over coordinates. Translation cancels in every difference; positive scaling multiplies its absolute value by $s$.
\end{proof}

For $z\in[0,\Delta]^{2d}$ and $C>\Delta$, define
\[
B_C(z)=(C,z_1,2C,z_2,\ldots,2dC,z_{2d},(2d+1)C).
\]
Its length is $m=4d+1$. Given the ordered family $A=(u_1,\ldots,u_n)$, concatenate $S=B_C(P(u_1))\circ\cdots\circ B_C(P(u_n))$, and set $Q=B_C(R(v))$. A \emph{canonical} window is a whole block; every other length-$m$ window starts at a nonzero phase within a block. Paths start at $(1,1)$, end at $(m,m)$, and use steps $(1,0),(0,1),(1,1)$; a band restricts visited cells to $|i-j|\le W$.

For the unweighted absolute-cost construction, the two maps are therefore
\begin{align*}
 A&\longmapsto B_C(P(u_1))\circ\cdots\circ B_C(P(u_n)),\\
 q&\longmapsto B_C(R(q)).
\end{align*}
The target has $n$ blocks and is fixed before $q$ is known. Each canonical window is compared with the query through one stored vector. The common anchors encode coordinate positions; the bounded data values encode orthogonality. On the canonical diagonal the anchors cancel, leaving $d+2\langle u,q\rangle$ under absolute summation and $1$ versus $2$ under bottleneck aggregation. Choosing the anchor separation then makes an unintended local match exceed the decision threshold.

For example, at $d=1$ with $C=4$, the data blocks for bits zero and one are $(4,0,8,0,12)$ and $(4,2,8,0,12)$. The query for bit one is $(4,0,8,1,12)$. Their diagonal absolute costs are one and three. The first-departure argument in Section~\ref{sec:rt-raw} shows why a low-cost path cannot avoid this distinction, while its first-cell argument deals with every shifted window. The example illustrates the maps; the proof quantifies over all dimensions, stored vectors, and physical offsets.

For each score three tasks remain: exhibit an accepted canonical window in a Yes instance; exclude a nonorthogonal canonical witness; and account for all other windows. The third task is essential because a shifted window can combine parts of two vectors. The buffered constructions in Section~\ref{ds:section} use a slightly different form: any accepted window containing a complete core certifies an orthogonal core, while every window without one is rejected. Thus rejecting every noncanonical window even in a Yes instance is a sufficient template, not a necessary property of a reduction.

\subsection{Transfer to static indexing}\label{fw:transfer}

\begin{definition}[Uniform index-preserving reduction]\label{def:rt-reduction}
A reduction consists of a data encoder $A\mapsto S_A$ and a query encoder $q\mapsto(Q_q,\tau_q)$, with public dimension and model parameters. The first encoder does not use the query; the second does not inspect the stored data. The encoders are uniform for all sufficiently large $d$. Their construction times are $n^{1+o(1)}$ and $n^{o(1)}$, their lengths are at most $n^{1+o(1)}$ and $n^{o(1)}$, and these bounds are uniform over input contents. Every generated value and required construction operation is legal in the declared model. Finally, orthogonality is equivalent to the threshold predicate over \emph{all} physical windows, not only the designated block starts.
\end{definition}

\begin{lemma}[Index-preserving transfer]
\label{lem:rt-transfer}
Suppose a deterministic reduction constructs a target $S_A$ of length $N=n^{1+o(1)}$, in $n^{1+o(1)}$ time using only $A$, and constructs $(Q_q,\tau_q)$ in $n^{o(1)}$ time using only $q$ and public parameters. Suppose all encoded quantities occupy $O(1)$ words and \eqref{eq:rt-query} is equivalent to OnlineOV, including every window. For any fixed $\beta\ge1$ and $\delta\in(0,1)$, the respective SETH assumption excludes preprocessing $O(N^\beta)$ and query time $O(N^{1-\delta})$, provided the allowed query lengths include the reduction's hard family for the coefficient selected in Lemma~\ref{lem:rt-ovi}. The randomized version preserves every fixed joint per-input error bound $p<1/2$.
\end{lemma}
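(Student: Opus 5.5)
We argue by contradiction, composing the hypothesized string index with the reduction to get an OnlineOV index that Lemma~\ref{lem:rt-ovi} rules out. Fix $\beta\ge1$ and $\delta\in(0,1)$, and suppose some index $\mathcal I$ achieves preprocessing $O(N^\beta)$ and query time $O(N^{1-\delta})$ on a length domain containing the hard query lengths.

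\textbf{Choice of parameters.} We select the OnlineOV exponents first:
\[
b=\beta+1 \quad\text{(or any fixed } b>\beta\text{)},\qquad \epsilon=\delta/2.
\]
Lemma~\ref{lem:rt-ovi} then supplies $c_0(b,\epsilon)$, and in the randomized case $c_0(b,\epsilon,p)$. We fix any $c\ge c_0$ and set $d=\lceil c\log_2 n\rceil$. This fixes the hard family of query lengths $m=d^{O(1)}$, which the domain hypothesis guarantees is legal. The order of quantifiers matches Lemma~\ref{lem:rt-ovi}: $c$ is chosen after $(\beta,\delta)$.

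\textbf{The OnlineOV index.} Preprocessing runs the data encoder on $A$ to form $S_A$ and then builds $\mathcal I(S_A)$. Each query runs the query encoder to obtain $(Q_q,\tau_q)$, queries $\mathcal I$ once, and returns its answer.

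\textbf{Correctness.} Correctness follows directly from the all-window equivalence in the hypothesis. The query encoder never inspects $A$, and the data encoder never sees $q$, so this is a legitimate static index. Every encoded value occupies $O(1)$ words, so the simulation is legal in the Word-RAM with $w=\Theta(\log N)=\Theta(\log n)$. Here we use $N=n^{1+o(1)}$ and $N\ge n$ for the nontrivial family.

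\textbf{Time accounting.} This is the main step. Preprocessing costs
\[
n^{1+o(1)}+O(N^\beta)=O\bigl(n^{\beta(1+o(1))}\bigr)\le O(n^{b})
\]
for large $n$, since $b>\beta$. Each query costs
\[
n^{o(1)}+O(N^{1-\delta})=n^{(1+o(1))(1-\delta)}+n^{o(1)}\le O\bigl(n^{1-\delta/2}\bigr)=O(n^{1-\epsilon}).
\]
The $o(1)$ terms are controlled uniformly because the length bounds $N\le n\,d^{O(1)}$ with $d=O(\log n)$ are uniform over input contents and hold for all sufficiently large $n$. This gives an OnlineOV index with preprocessing $O(n^b)$ and query time $O(n^{1-\epsilon})$, contradicting Lemma~\ref{lem:rt-ovi}.

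\textbf{The main obstacle.} The delicate point is absorbing the $n^{o(1)}$ slack, so we choose explicit slack: a preprocessing exponent strictly above $\beta$ and query saving $\delta/2$. A second subtlety is that the uniformity of the encoders is needed so that a single finite algorithm results, as Lemma~\ref{lem:rt-ovi} requires.

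\textbf{Randomized case.} We keep the same $p$ for the OnlineOV index. The reduction is deterministic, so for each fixed $(A,q)$ the composed error equals the joint preprocessing-and-query error of $\mathcal I$ on $(S_A,Q_q,\tau_q)$, which is at most $p$. Worst-case time bounds transfer exactly as above. Lemma~\ref{lem:rt-ovi} with its randomized hypothesis, which already absorbs amplification via Lemma~\ref{lem:rt-amplification}, then gives the contradiction.
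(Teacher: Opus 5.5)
Your proposal is correct and follows the paper's proof essentially step for step. You compose the two encoders with the hypothesized index and invoke Lemma~\ref{lem:rt-ovi} with $(b,\epsilon)$ fixed before $c$; the paper takes $b=\beta+\tfrac12$ where you take $b=\beta+1$, which makes no difference. You then absorb the $n^{o(1)}$ factors into the exponent slack, and observe that deterministic, coin-independent encoders preserve the joint per-input error event with no new amplification.
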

\begin{proof}
Fix $(\beta,\delta)$, and also $p$ when randomness is allowed. Apply Lemma~\ref{lem:rt-ovi} with
\[
 b=\beta+\tfrac12,\qquad \epsilon=\delta/2
\]
to select a sufficient fixed $c$, and put $d=\lceil c\log_2 n\rceil$. A hypothetical target index yields an OnlineOV preprocessing algorithm by first computing $S_A$ and then building the index on $S_A$. Its time is
\[
 n^{1+o(1)}+O(N^\beta)=n^{\beta+o(1)}.
\]
On the arrival of $q$, generate $(Q_q,\tau_q)$ and invoke the target query algorithm once. This takes
\[
 n^{o(1)}+O(N^{1-\delta})=n^{1-\delta+o(1)}.
\]
For sufficiently large $n$ these two bounds are respectively $O(n^{\beta+1/2})$ and $O(n^{1-\delta/2})$, contradicting the selected OnlineOV lower bound. The hard-length coverage premise ensures that every generated query is within the target index's correctness and time guarantees.

Both encoding maps are deterministic and independent of the target index's random coins. Their all-window equivalence therefore makes the composed OnlineOV answer incorrect exactly when the target index's answer is incorrect. This preserves the joint per-input error $p$ without a new amplification step. Worst-case running time over random choices is preserved as well. The reduction needs the returned Boolean answer, not an evaluation of the score by the encoder.
\end{proof}

More generally, if $N\le n^{1+o(1)}$ and $|Q|=n^{o(1)}$, then for any fixed $\eta_P>0$ and $0<\eta_Q<\delta$, sufficiently large $n$ satisfies $N^\beta=O(n^{\beta+\eta_P})$, $N^{1-\delta}=O(n^{1-\delta+\eta_Q})$, and $|Q|=O(n^{\eta_Q})$. Indeed, the vanishing exponent terms are eventually smaller than $\eta_P/\beta$, $\eta_Q/(1-\delta)$, and $\eta_Q$, respectively. Thus the exponent slack can be made arbitrarily small; the transfer proof uses one convenient fixed choice.

Here and below, ``hard length coverage'' means that there is $n_0$ such that $|Q_q|\in\mathcal L_{|S_A|}$ for every $n\ge n_0$ and every reduction input $(A,q)$, with the selected $c$ and all other external parameters fixed. Each construction below specifies its own $m=m_F(d)=d^{O(1)}$ and target size. For example, the normalized construction has $m=12d+1$, raw DTW has $m=4d+1$, and the inverse-derivative construction adds two physical endpoint samples. Correctness and time guarantees must hold uniformly over the declared length domain. The result applies to an index supporting all lengths, and to restricted domains with this coverage. It asserts neither that constant-length queries are hard nor an unconditional $\Omega(N)$ bound.

\subsection{Block encodings}\label{fw:blocks}

A data encoder concatenates query-independent containers $G(a^{(1)}),\ldots,G(a^{(n)})$, possibly with boundary strings.  Each container has a designated length-$m$ canonical window $X_i$; the query encoder produces $Q(q)$ of length $m$.  Containers, boundaries, query length, and their generation costs are $d^{O(1)}$ in our constructions, so $N=nd^{O(1)}=n^{1+o(1)}$ and query generation is $n^{o(1)}$. Canonical windows refer to designated starting positions.  Every other physical starting position must also be accounted for.

\begin{lemma}[Phase and cyclic-order obstruction]\label{fw:phase}
For a seamless concatenation of length-$m$ blocks, a length-$m$ window at offset $0<t<m$ is a suffix of length $m-t$ followed by the next block's prefix of length $t$. Suppose each block has distinct anchors $c_1,\ldots,c_r$, $r\ge2$, at common positions $1=p_1<\cdots<p_r=m$, with no data symbol equal to an anchor.  The query contains these anchors once in the same order and also has disjoint data symbols.  Every global order-preserving alignment of an offset window must delete an anchor, pair an anchor with data, or pair two different anchors.  For paths without gaps, one of the latter two events is unavoidable.
\end{lemma}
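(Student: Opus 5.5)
The first claim is bookkeeping of positions, so I would dispose of it directly. Index the concatenation so that block $k$ occupies positions $(k-1)m+1,\ldots,km$. A window starting at position $(k-1)m+1+t$ with $0<t<m$ covers positions $(k-1)m+1+t$ through $km+t$. These are the last $m-t$ symbols of block $k$ followed by the first $t$ symbols of block $k+1$.

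The main tool for the rest is the cyclic order of the anchors. Since $p_1=1$ and $p_r=m$, a suffix of length $m-t$ contains exactly the anchors with $p_j\ge t+1$. This set is nonempty because it contains $c_r$. A prefix of length $t$ contains the anchors with $p_j\le t$, which includes $c_1$. So the anchor subsequence of the window has the form
\[
c_{j},c_{j+1},\ldots,c_r,c_1,\ldots,c_{j'},\qquad j'<j .
\]
This is a nontrivial rotation of the sequence $c_1,\ldots,c_r$: a proper suffix followed by a nonempty prefix. It is possible that $j'=j-1$, in which case every anchor appears exactly once, but in a different cyclic start. In every case $c_r$ appears before $c_1$ in the window, while in the query $c_1$ precedes $c_r$.

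For the general claim I would argue by contradiction. Suppose an order-preserving alignment deletes no anchor, never pairs an anchor with data, and never pairs two different anchors. Then every anchor in the window is paired, and only with an equal anchor of the query. Since anchors are distinct and the query contains each anchor once, the window's occurrences of $c_r$ and $c_1$ must be matched to the query's $c_r$ and $c_1$. The window has $c_r$ before $c_1$, but the query has $c_1$ before $c_r$, so these two pairs cross. Monotone alignments cannot contain crossing pairs. Hence one of the three events must occur.

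For paths without gaps I would use the extra structure of warping paths. Every element of the window appears in at least one cell of the path, so no anchor is ever deleted. The same crossing argument then applies. If the window's $c_r$ were paired only with the query's $c_r$ at position $m$, and the window's $c_1$ only with the query's $c_1$ at position $1$, monotonicity would be violated. Therefore some anchor is paired with data or with a different anchor. Here I would use that the data symbols are disjoint from the anchors on both sides, so that "paired with a non-equal symbol" splits exactly into the two named events.

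I expect the main obstacle to be the degenerate case where the window's anchors are a full rotation, so no anchor is missing and a counting argument fails. The crossing argument for the pair $(c_r,c_1)$ handles this case uniformly, so I would make that pair the centerpiece of the proof.
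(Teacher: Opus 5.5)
Your proof is correct and follows the paper's argument. You read off the rotated anchor order $c_{s+1},\ldots,c_r,c_1,\ldots,c_s$ of the offset window, then note that two anchors appearing in opposite orders in window and query cannot both be paired only with their same-named counterparts, since repeated warping pairings cannot reverse order; you use the pair $(c_r,c_1)$ where the paper uses $(c_{s+1},c_1)$, which is immaterial. Two small corrections: because both blocks share the anchor positions, the suffix and prefix anchor sets always partition $\{c_1,\ldots,c_r\}$, so your ``degenerate'' case $j'=j-1$ is in fact the only case; and in the no-gap paragraph the lemma only gives the query's anchor order, not that its $c_1,c_r$ sit at positions $1$ and $m$, though the order is all your argument needs.
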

\begin{proof}
The two lengths follow by counting positions before and after the block boundary.  There is a unique $s$ with $p_s\le t<p_{s+1}$; the window's anchor order is $c_{s+1},\ldots,c_r,c_1,\ldots,c_s$. If none of the listed events occurs, all anchors on both sides must be paired with their same-named counterpart.  In particular, $c_1$ and $c_{s+1}$ would need to appear in the same order on both sides, a contradiction.  Repeated pairings cannot reverse their order.
\end{proof}

The certificate can be written as
\[
 \mathfrak C_F=(\mathsf{Enc}_A,\mathsf{Enc}_Q,
                   \mathsf{ori},m,\tau,\mathcal M),
\]
where $\mathsf{ori}$ records minimization or maximization and $\mathcal M$ specifies the computational model, including its parameter representation. For minimization its central obligation is
\begin{align*}
 &\exists a\in A:\langle a,q\rangle=0\\
 &\hspace{8mm}\Longleftrightarrow
 \min_{X\in\operatorname{Win}_{m}(S_A)}F(X,Q_q)\le\tau_q;
\end{align*}
for maximization replace $\min,\le$ with $\max,\ge$. Every offset is quantified, and each optimum ranges over all legal witnesses at that offset. The threshold is generated from the query and public parameters. The certificate also records the separate sizes and construction times of the two encoders, their uniformity in $d$, and the legality of all required numerical operations. These obligations are shared by the score-specific arguments below.

\begin{proposition}[Certificate closure]\label{fw:certificate}
A reduction certificate consists of its two encoders, orientation, query length, threshold, parameter domain, and machine model.  Besides the resource conditions of Lemma~\ref{lem:rt-transfer}, it must prove that an orthogonal vector gives an accepted canonical window, and that a No OnlineOV instance has \emph{no} accepted physical window. Such a certificate yields the corresponding deterministic or randomized indexing lower bound in its declared model and covered length domain.
\end{proposition}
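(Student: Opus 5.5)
The plan is to reduce the proposition to Lemma~\ref{lem:rt-transfer}, whose only semantic premise is the all-window equivalence
\[
 \exists a\in A:\langle a,q\rangle=0\iff \min_{X\in\operatorname{Win}_m(S_A)}F(X,Q_q)\le\tau_q
\]
(with $\max,\ge$ for maximization). So the first step is to show that the two proof obligations listed in the certificate imply this equivalence. For the forward direction, suppose $\langle a^{(i)},q\rangle=0$. The certificate gives an accepted canonical window, namely the designated window $X_i$ of container $G(a^{(i)})$. A canonical window is one of the physical windows of $S_A$, so the minimum over $\operatorname{Win}_m(S_A)$ is at most $\tau_q$ (at least $\tau_q$ when maximizing).

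For the reverse direction I will argue by contraposition. If no stored vector is orthogonal to $q$, the instance is a No OnlineOV instance. The certificate then asserts that no physical window is accepted: every offset, including those inside a block or straddling a boundary, is rejected under every legal witness at that offset. Hence the optimum over all windows exceeds $\tau_q$. I will stress that this is exactly why the obligation is phrased over physical windows rather than canonical ones. The proof does not require noncanonical windows to be rejected in Yes instances, which accommodates the buffered constructions.

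The second step is to check the resource hypotheses of Lemma~\ref{lem:rt-transfer}. These are part of the certificate by assumption: encoder separability, the length and time bounds $N=n^{1+o(1)}$ and $|Q|=n^{o(1)}$, uniformity in $d$, $O(1)$-word encodings, and model legality. The threshold is produced by the query encoder, so the composed OnlineOV query never inspects stored data. For a fixed $(\beta,\delta)$, and $p$ when randomized, I will invoke Lemma~\ref{lem:rt-transfer} with the coefficient $c$ selected in Lemma~\ref{lem:rt-ovi}. The covered length domain supplies the hard-length coverage premise. The lemma then gives the deterministic exclusion, and the randomized one with the same joint per-input error, because the encoders are deterministic.

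The main obstacle I expect is interpretive rather than technical: ensuring that the declared model $\mathcal M$ is the one used in Lemma~\ref{lem:rt-transfer}. In particular, the parameter representation and threshold must be legal $O(1)$-word values there, so that the composed algorithm runs on the Word-RAM. I will state explicitly that the conclusion holds only in that declared model and covered domain; for example, a soft-DTW certificate in an extended model requires that model's own SETH-type assumption.
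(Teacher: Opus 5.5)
Your proposal is correct and matches the paper's proof: completeness yields the forward implication, the all-window No condition yields the reverse one, and Lemma~\ref{lem:rt-transfer} then gives the lower bound. The only step the paper makes explicit that you fold into notation is for a maximization score. There the paper first converts via Proposition~\ref{fw:elementary}, replacing $H$ by $-H$ and $\tau$ by $-\tau$, and only then applies the transfer lemma, which is stated for the minimization predicate.
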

\begin{proof}
Completeness gives the forward implication of the threshold predicate; the all-window No condition gives the reverse implication.  For a maximization score first apply Proposition~\ref{fw:elementary}, then Lemma~\ref{lem:rt-transfer}.
\end{proof}
A useful stronger sufficient template rejects each nonorthogonal canonical window and every noncanonical window separately.  It is not required: a shifted buffered window may retain a full data core and be accepted in a Yes instance.  A valid alternative proves that any accepted such window certifies an orthogonal core.  In either form, the proof covers \emph{every} physical offset and \emph{every} legal witness, including offsets within containers or boundary encodings. An order obstruction alone is not a numerical cost bound.

\subsection{Alignment bounds}

\begin{lemma}[Dominance, charging, projection, and initialization]
\label{fw:witness}
Let $\pi$ range over the legal global witnesses and let $\tau$ be the threshold.  Each of the following is a sufficient exclusion rule.
\begin{enumerate}
\item \emph{Dominance.} If a witness cost dominates each of its actual nonnegative local event costs, and every bad witness contains an event of cost at least $\Lambda>\tau$, no bad witness has cost at most $\tau$.
\item \emph{Cumulative charging.} If $\operatorname{cost}(\pi)\ge L+\sum_e\chi_e(\pi)$ with nonnegative charges, and every bad witness has total charge at least $\Gamma$ with $L+\Gamma>\tau$, every low-cost witness is good. For maximization replace this bound by $\operatorname{score}(\pi)\le U-\sum_e\chi_e(\pi)$ and require $U-\Gamma<\tau$.
\item \emph{Core projection.} Suppose every full witness projects to a legal core witness $P(\pi)$.  If $\operatorname{cost}_{\rm full}(\pi)\ge B+ \operatorname{cost}_{\rm core}(P(\pi))$, then $\operatorname{OPT}_{\rm full}\ge B+\operatorname{OPT}_{\rm core}$. For maximization, reversing both inequalities gives $\operatorname{OPT}_{\rm full}\le B+\operatorname{OPT}_{\rm core}$.
\item \emph{Initial barrier.} If every witness pays an unavoidable $B_0(X,Q)>\tau$ and all subsequent increments are nonnegative, the window is rejected.
\end{enumerate}
\end{lemma}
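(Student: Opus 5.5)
Each of the four rules is a short inequality chain, so I would prove them separately for a fixed window and threshold. For the minimization versions the target in each case is the same: every bad witness $\pi$ (or, for rule~4, every witness) satisfies $\operatorname{cost}(\pi)>\tau$. Then no bad witness is accepted, and the optimum over legal witnesses at that offset is attained, if at all, only by good ones. For maximization I would use the reversed inequalities with the same structure.

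\emph{Dominance.} Let $\pi$ be bad and $e$ an event with cost $\ge\Lambda$. The hypothesis says the witness cost dominates each of its nonnegative local event costs, so $\operatorname{cost}(\pi)\ge\operatorname{cost}(e)\ge\Lambda>\tau$. I would note that this covers both additive and bottleneck aggregation: a sum of nonnegative terms and a maximum each dominate every term. That remark justifies the rule's later use for DTW and DFD alike.

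\emph{Cumulative charging.} Let $\pi$ be bad. Nonnegativity of the charges together with total charge at least $\Gamma$ gives $\operatorname{cost}(\pi)\ge L+\Gamma>\tau$, so any witness with cost $\le\tau$ is good. In the maximization version, $\operatorname{score}(\pi)\le U-\Gamma<\tau$, so any witness with score $\ge\tau$ is good.

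\emph{Core projection.} For each full witness $\pi$, $P(\pi)$ is a legal core witness, so $\operatorname{cost}_{\rm core}(P(\pi))\ge\operatorname{OPT}_{\rm core}$. Hence $\operatorname{cost}_{\rm full}(\pi)\ge B+\operatorname{OPT}_{\rm core}$. Taking the infimum over $\pi$ gives the claim, and the maximization case is symmetric with suprema. The only point needing care is that $P$ need not be surjective or injective. The argument uses only that its image lies in the legal core witnesses, so it holds regardless.

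\emph{Initial barrier.} Every witness has cost equal to $B_0$ plus nonnegative increments, so its cost is $\ge B_0>\tau$. Minimizing over witnesses, the window's score exceeds $\tau$ and the window is rejected.

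I expect no real obstacle. The one subtlety I would state explicitly is that these rules apply to one window and its set of legal witnesses. Quantifying over all physical offsets is handled separately by the certificate in Proposition~\ref{fw:certificate}.
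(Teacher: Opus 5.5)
Your proposal is correct and matches the paper's proof: each rule is the same one-line inequality chain (a dominated event of cost $\Lambda>\tau$, the bound $L+\Gamma>\tau$ or $U-\Gamma<\tau$, feasibility of $P(\pi)$ for the core optimum followed by taking the extremum, and $B_0>\tau$ surviving nonnegative increments). One small caution concerns your framing sentence that the optimum is attained ``only by good ones'': it should be restricted to optima of cost at most $\tau$, since in a No instance the optimum can be a bad witness of cost above $\tau$, as the paper notes for non-diagonal optimal paths in No blocks.
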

\begin{proof}
For dominance, the total is at least $\Lambda>\tau$. For charging it is at least $L+\Gamma$ (or at most $U-\Gamma$). For projection each projected witness is feasible for the core optimization; taking the appropriate extremum proves the stated bound. For initialization the remaining nonnegative sum cannot remove $B_0$.
\end{proof}
Sum and maximum aggregation satisfy dominance; averaging and soft-min need not.  Charges require a valid total-cost inequality, without counting the same operation more than the inequality allows.  A core projection must preserve consumption and order, not just discard unwanted costs.  Local restarts can bypass an initial barrier.  Thus cheap deletion, repeated matching, split/merge operations, and rewards must be checked against the actual recurrence for each score.

\subsection{Arithmetic and soft minima}\label{fw:arithmetic}

\begin{lemma}[Entropy-robust soft minimum]\label{fw:entropy}
Let a nonempty finite witness family satisfy $|\Pi(X,Q)|\le M(N)$, uniformly over all relevant windows.  For $\gamma>0$, put
\[
\begin{gathered}
F_\gamma=-\gamma\ln\sum_{\pi\in\Pi}e^{-C(\pi)/\gamma},\\
C^*=\min_{\pi\in\Pi}C(\pi).
\end{gathered}
\]
Then
\[
C^*-\gamma\ln M(N)\le F_\gamma\le C^*.
\]
Consequently, a Yes witness of cost at most $a$ and a uniform No lower bound $b$ give a strict soft-min threshold whenever $b-a>\gamma\ln M(N)$; choose $a\le\tau<b-\gamma\ln M(N)$. If every complete witness cost is scaled by $\kappa>0$, it suffices that $\kappa(b-a)>\gamma\ln M(N)$.
\end{lemma}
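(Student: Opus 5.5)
The plan is to sandwich the partition function between its largest term and $M(N)$ times that term, and then read off the threshold conditions by monotonicity. Write $Z=\sum_{\pi\in\Pi}e^{-C(\pi)/\gamma}$. Since $\Pi$ is nonempty and finite, $C^*$ is attained by some $\pi^*$, and every summand is positive. Two elementary estimates on $Z$ suffice.

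First, keeping only the term for $\pi^*$ gives $Z\ge e^{-C^*/\gamma}$.

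Second, each term satisfies $e^{-C(\pi)/\gamma}\le e^{-C^*/\gamma}$ because $C(\pi)\ge C^*$. There are at most $M(N)$ terms, so $Z\le M(N)e^{-C^*/\gamma}$.

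Since $-\gamma\ln(\cdot)$ is strictly decreasing on $(0,\infty)$ for $\gamma>0$, the first estimate yields $F_\gamma\le C^*$ and the second yields $F_\gamma\ge C^*-\gamma\ln M(N)$. The bound $|\Pi(X,Q)|\le M(N)$ holds uniformly over all relevant windows, so the same inequality is valid simultaneously for every window.

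For the threshold consequence, consider the soft score of a window of $S$ with respect to $Q$.
\begin{itemize}
\item A Yes window has a witness of cost at most $a$, so $C^*\le a$ and hence $F_\gamma\le C^*\le a\le\tau$.
\item Every No window has every witness of cost at least $b$, so $C^*\ge b$ and hence $F_\gamma\ge b-\gamma\ln M(N)>\tau$.
\end{itemize}
A $\tau$ with $a\le\tau<b-\gamma\ln M(N)$ exists exactly when $b-a>\gamma\ln M(N)$. The minimum over windows, $F_{\gamma,\rm sub}$, inherits the same dichotomy: in a Yes instance some window is at most $\tau$, and in a No instance all windows exceed $\tau$.

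For the scaled version, apply the statement to the cost function $\kappa C$. The witness family is unchanged, so the entropy term $\gamma\ln M(N)$ is unchanged. The Yes and No bounds become $\kappa a$ and $\kappa b$, which gives the condition $\kappa(b-a)>\gamma\ln M(N)$.

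No step is a real obstacle. The only care needed is twofold: the bound $M(N)$ must be uniform over all windows, not just canonical ones, and the inequality $\tau<b-\gamma\ln M(N)$ must be strict so that No windows are separated strictly.
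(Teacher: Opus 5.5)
Your proposal is correct and follows the paper's proof essentially step for step: you sandwich the partition sum between $e^{-C^*/\gamma}$ and $M(N)e^{-C^*/\gamma}$, apply the decreasing map $-\gamma\ln$, use the upper bound for Yes windows and the lower bound for No windows, and replace $a,b$ by $\kappa a,\kappa b$ under uniform witness-cost scaling. Your insistence that the scaling act on every complete witness cost, with the witness family unchanged, matches the paper's caveat that scaling the data coordinates alone does not justify that last inference.
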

\begin{proof}
The partition sum lies between $e^{-C^*/\gamma}$ and $M(N)e^{-C^*/\gamma}$.  Apply the decreasing function $-\gamma\ln$. The Yes bound uses the upper inequality and the No bound the lower inequality.  Scaling all witness costs replaces $a,b$ by $\kappa a,\kappa b$.  Scaling data coordinates alone is insufficient for this last inference unless its effect on every witness is proved.
\end{proof}
Here $\ln$ denotes the natural logarithm.  The lemma is algebraic; it does not itself provide exact transcendental machine operations.

For an integer $z$ put $\operatorname{bits}(z)=1+\lceil\log_2(|z|+1)\rceil$. A rational is represented by an integer numerator and a positive integer denominator, with total bit length the sum of their lengths. Reduction legality concerns the required intermediate values as well as the final input: cross products before comparison and values before cancellation must also fit. Since $N=n^{1+o(1)}$ and $d=\lceil c\log_2 n\rceil$ for fixed $c$,
\[
\begin{gathered}
 \log N=\Theta(\log n),\qquad d=\Theta(\log N),\\
 \log d=\Theta(\log\log N).
\end{gathered}
\]

\begin{proposition}[Finite-word numerical criterion]\label{fw:bits}
For $N=n^{1+o(1)}$ and $d=\lceil c\log n\rceil$ with fixed $c$, assume every input, threshold, and intermediate quantity needed by the reduction is rational $p/q$, $q>0$, with $|p|,q\le2^{ad}d^b$ for fixed constants $a,b$. If each required primitive is implemented by $O(1)$ word operations, the reduction is valid on a $w=\Theta(\log N)$ Word-RAM. The values use $O(d)=O(\log N)$ bits; polynomial bounds in $d$ improve this to $O(\log d)=O(\log\log N)$ bits.
\end{proposition}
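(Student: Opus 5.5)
The plan is to reduce the claim to two facts: every required quantity fits in a constant number of words, and every required primitive is a legal constant-time Word-RAM operation on such words. Once both hold, any reduction that is legal in exact rational arithmetic runs on the $w=\Theta(\log N)$ machine with the same operation counts.

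\emph{Bit length.} From $|p|,q\le 2^{ad}d^b$ we get
\[
\operatorname{bits}(p)+\operatorname{bits}(q)\le 2\bigl(ad+b\log_2 d+O(1)\bigr)=O(d).
\]
Since $d=\lceil c\log_2 n\rceil$ with $c$ fixed, and $N=n^{1+o(1)}$ gives $\log N=\Theta(\log n)$, this is $O(\log N)$ bits. With word length $w=\Theta(\log N)$, each numerator or denominator therefore occupies $O(1)$ words, and the constant depends only on $a,b,c$ and the word-length constant. In the polynomial case ($a=0$) the same computation gives $O(\log d)$ bits. Because $d=\Theta(\log N)$, this is $O(\log\log N)$.

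\emph{Primitives.} The premise already charges $O(1)$ word operations to each primitive. The point to check is that the operands a primitive produces internally, such as the cross products $p_1q_2$ and $p_2q_1$ in a rational comparison, also stay within $O(1)$ words. This is why the hypothesis covers intermediate quantities and not only inputs. A product of two values each bounded by $2^{ad}d^b$ is bounded by $2^{2ad}d^{2b}$, which is still $O(d)$ bits, so a multi-word product needs $O(1)$ schoolbook word operations. Any quantity the reduction actually stores or compares is covered by the stated bound directly.

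\emph{Uniformity.} The constants $a,b,c$ are fixed independently of $n$, so one algorithm with one fixed number of words per value works at every input size. This gives the uniformity required in Definition~\ref{def:rt-reduction}. Encoder times then scale only by a constant factor relative to their arithmetic operation counts, and Lemma~\ref{lem:rt-transfer} applies unchanged.

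The main obstacle is interpretive rather than computational: showing that the constant word count is uniform over $n$. Compared with the bit-length calculation, this is essentially the whole proof.
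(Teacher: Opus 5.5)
Your proposal is correct and follows the paper's proof: take binary logarithms of the $2^{ad}d^b$ bound to get $O(d)=O(\log N)$ bits, hence $O(1)$ words; let the primitive hypothesis supply the operation count; and specialize to polynomial magnitudes for $O(\log d)$ bits. Your explicit check that cross products remain within $O(d)$ bits, and your uniformity remark, are additions that the paper handles in the discussion around the proposition rather than in its proof.
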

\begin{proof}
Taking binary logarithms gives $O(d+\log d)=O(d)$ bits for each numerator and denominator, and $d=\Theta(\log N)$.  Thus each value occupies a constant number of words.  The assumed primitive implementation supplies the operation bound.  Under polynomial amplitudes the same calculation gives $O(\log d)$ bits.
\end{proof}
Polynomially many integer additions, or rational additions with a common bounded denominator, increase bit length only by $O(\log d)$; a fixed number of multiplications preserves $O(d)$ bits.  General unequal-denominator sums require a separate bound on final and intermediate numerators and denominators, including cross-products before cancellation.  A constant number of words is not automatically one word, nor does representation alone establish a primitive's time.

\begin{corollary}[Bounded algebraic closure]\label{fw:algebraic-closure}
Suppose atomic numerators and denominators have magnitude $2^{O(d)}$. Consider quantities formed by $d^{O(1)}$ additions, subtractions, absolute values, maxima and minima, together with a fixed number of multiplications. If all final and required intermediate numerators and denominators still have magnitude $2^{O(d)}$, each is representable in $O(1)$ machine words on the hard family. The same assertion with polynomial-in-$d$ magnitude gives $O(\log d)$ bits.
\end{corollary}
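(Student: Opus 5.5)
The corollary should follow by reducing it to Proposition~\ref{fw:bits}. The hypotheses already grant that every final and required intermediate numerator and denominator has magnitude $2^{O(d)}$. Fix the hard family with $d=\lceil c\log_2 n\rceil$, fixed $c$, so that there is a fixed constant $a$ with $|p|,q\le 2^{ad}$ for every quantity produced. Taking binary logarithms gives $O(d)$ bits for each numerator and denominator. Since $d=\Theta(\log N)$ and $w=\Theta(\log N)$, each quantity then occupies $O(1)$ words, exactly as in the proof of Proposition~\ref{fw:bits}. I would say explicitly that the constant hidden in $2^{O(d)}$ must be uniform over the hard family, so that the word count is a fixed constant independent of $n$.

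Next I would check that the listed operations are legal and cheap when applied to values of this size. Additions, subtractions and absolute values of $O(1)$-word integers take $O(1)$ word operations. Maxima and minima are comparisons; for rationals these use cross products, which have magnitude $2^{O(d)}$ and are therefore also $O(1)$ words. A fixed number of multiplications of $O(1)$-word numbers costs $O(1)$ word operations each. I would argue that the cross products and unreduced sums are themselves among the required intermediate quantities. The hypothesis on intermediates therefore covers them, which avoids the cancellation caveat stated after Proposition~\ref{fw:bits}.

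For the polynomial variant, magnitudes $d^{O(1)}$ give $\log_2$ bounds of $O(\log d)$. Since $d=\Theta(\log N)$, this is $O(\log\log N)$ bits, matching the last clause of Proposition~\ref{fw:bits} with $a=0$.

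The main subtlety is that $d^{O(1)}$ additions with unequal denominators could in principle blow up bit length. The statement does not rule this out on its own; it only assumes the magnitude bound on every required intermediate. The proof will therefore make clear that the representability conclusion relies on that assumption, and not on the operation count alone.
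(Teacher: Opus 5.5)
Your proposal is correct and follows the paper's route: both apply Proposition~\ref{fw:bits} to the assumed $2^{O(d)}$ (or $d^{O(1)}$) magnitude bounds. Both also note that unequal-denominator sums and cross products are handled only because the hypothesis bounds every required intermediate, not by the operation count alone. Your extra paragraph on the word-operation cost of each primitive goes beyond what the corollary claims, since the paper treats it purely as a representation result and leaves the $O(1)$-time primitive condition to the declared model. You can drop that paragraph, or state its content as a model assumption rather than as part of the proof.
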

\begin{proof}[Proof sketch]
Apply Proposition~\ref{fw:bits} to the asserted magnitude bounds. Integer or common-denominator sums of polynomially many terms add only $O(\log d)$ bits; fixed-degree products add only a constant factor to the bit bound. For unrelated rational denominators the magnitude premise must be checked explicitly, including intermediate cross products. This is a representation conclusion; the primitive-operation condition remains part of the model.
\end{proof}

\begin{proposition}[Uniform numerical error]\label{fw:precision}
Suppose Yes scores satisfy $F\le a$ and No scores satisfy $F\ge b$. If a computed score satisfies $|\widehat F-F|\le\varepsilon$ on \emph{all} reduction instances, and $b-a>2\varepsilon$, comparison with $(a+b)/2$ separates them, assuming these quantities can be represented and compared in the declared model. For a soft-min score with underlying cost bounds $a,b$, the sufficient condition is $b-a>\gamma\ln M+2\varepsilon$.
\end{proposition}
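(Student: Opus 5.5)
The argument is a triangle inequality on each side of the threshold, followed by a check of the soft-min specialization. Let the threshold be $t=(a+b)/2$ and let the decision rule answer Yes exactly when $\widehat F\le t$.

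First consider a Yes instance. There the true score satisfies $F\le a$. The uniform error bound then gives
\[
\widehat F\le F+\varepsilon\le a+\varepsilon.
\]
Since $b-a>2\varepsilon$, we have $a+\varepsilon<(a+b)/2=t$, so $\widehat F<t$ and the rule answers Yes.

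Next consider a No instance. There $F\ge b$, so
\[
\widehat F\ge F-\varepsilon\ge b-\varepsilon.
\]
The same hypothesis gives $b-\varepsilon>t$, so $\widehat F>t$ and the rule answers No.

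Both strict inequalities hold on every reduction instance. This uses the hypothesis that the error bound is uniform over all instances, not merely typical ones. The comparison of $\widehat F$ with $(a+b)/2$ is taken to be a legal primitive by the representability assumption in the statement. In the Word-RAM setting this is supplied by Proposition~\ref{fw:bits}: $(a+b)/2$ is a rational with denominator at most doubled, so it still has $O(d)$ bits.

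For the soft-min clause, apply Lemma~\ref{fw:entropy} to the underlying cost bounds $a$ and $b$.

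In a Yes instance there is a witness of cost at most $a$. Hence $C^*\le a$, and the upper inequality of the lemma gives $F_\gamma\le C^*\le a$.

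In a No instance every witness costs at least $b$. Hence $C^*\ge b$, and the lower inequality gives $F_\gamma\ge b-\gamma\ln M$.

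The soft scores are therefore separated by the effective gap
\[
(b-\gamma\ln M)-a=b-a-\gamma\ln M.
\]
Applying the first part with $b$ replaced by $b-\gamma\ln M$ requires $b-a-\gamma\ln M>2\varepsilon$, which is exactly the stated condition. The separating threshold is then $(a+b-\gamma\ln M)/2$.

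There is no real obstacle here. The one point needing care is the soft-min threshold. It must be the midpoint of the shifted bounds $a$ and $b-\gamma\ln M$, not the midpoint $(a+b)/2$ of the raw cost bounds. One should also note that representability of this threshold in the declared model is part of the stated assumption, not something proved here.
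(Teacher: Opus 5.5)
Your proof is correct and matches the paper's: the computed Yes value is at most $a+\varepsilon$ and the No value at least $b-\varepsilon$, so the midpoint separates them, and for soft-min you substitute the bounds $a$ and $b-\gamma\ln M$ from Lemma~\ref{fw:entropy} and use their midpoint $(a+b-\gamma\ln M)/2$. One minor imprecision is harmless: your aside that $(a+b)/2$ has its denominator ``at most doubled'' holds only when $a,b$ share a denominator, but representability of the threshold is assumed in the statement anyway.
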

\begin{proof}
The computed Yes value is at most $a+\varepsilon$ and the No value is at least $b-\varepsilon$; the midpoint lies strictly between them. For soft-min substitute the bounds $a$ and $b-\gamma\ln M$ from Lemma~\ref{fw:entropy}, and use their midpoint $(a+b-\gamma\ln M)/2$.
\end{proof}
This is a sufficient certificate for a score-estimation implementation, not a necessary output of an exact Boolean index.  An exact predicate reduction generates finite inputs and thresholds and invokes the assumed index; it need not evaluate the score itself.  In particular, an $\exp/\ln$ formula does not prevent a Word-RAM lower bound for its mathematically exact threshold predicate.  Excluding a stronger machine with unit-cost exact transcendental primitives requires an OnlineOV assumption in that stronger model.  Numerical error and randomized answer error are separate quantities.

\subsection{Decision and approximation}

\begin{proposition}[Elementary transfer rules]\label{fw:elementary}
The conclusion of Lemma~\ref{lem:rt-transfer} also applies to the following variants, subject to its size, uniformity, model, and length-domain conditions.
\begin{enumerate}
\item A finite-valued maximization score $H$ with threshold $\tau$ can be replaced by $D_H=-H$ with threshold $-\tau$.
\item A hard slice with a parameter fixed independently of $n,d$ implies hardness of the uniform problem accepting that parameter as input. Hardness for every fixed parameter requires a reduction for each such parameter; its constants may depend on that parameter.
\item A Yes bound $D_{\rm sub}\le\tau$ and a No bound $D_{\rm sub}\ge\tau+\Delta(N)$ suffice whenever $\Delta(N)>0$ and the threshold predicate is exact, even if $\Delta(N)\to0$.
\item A threshold-sensitive witness-or-failure output, an exact optimum, or a complete report is at least as hard as decision, provided its claimed time includes the necessary output and decision costs.
\end{enumerate}
\end{proposition}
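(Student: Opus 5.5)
The plan is to treat each of the four items as a short wrapper around Lemma~\ref{lem:rt-transfer}. For each variant I will exhibit a modified certificate that still meets the hypotheses of that lemma: the data encoder depends only on $A$, the query encoder only on $q$ and public parameters, the sizes are $n^{1+o(1)}$ and $n^{o(1)}$, all values fit in $O(1)$ words, and all-window equivalence with OnlineOV holds.

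\textbf{Item 1.} Negation is an exact, finite-word operation. Then $\max_X H(X,Q)\ge\tau$ holds if and only if $\min_X(-H(X,Q))\le-\tau$. The encoders are unchanged, and the threshold map $q\mapsto-\tau_q$ stays query-local. The Boolean predicate is literally the same, so any index for $H$ answers the $D_H$ predicate with no overhead, and Lemma~\ref{lem:rt-transfer} applies.

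\textbf{Item 2.} A uniform algorithm that takes the parameter as input, when fed the fixed value, is an index for the fixed slice. Reading a constant-size parameter costs $O(1)$, so it inherits the slice's lower bound. The second sentence is a quantifier remark: the constants $c_0$ and $n_0$ of Lemma~\ref{lem:rt-ovi} may depend on the parameter. I will note that no uniformity across parameters is claimed.

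\textbf{Item 3.} This follows by comparison. Yes instances satisfy $D_{\rm sub}\le\tau$, and No instances satisfy $D_{\rm sub}\ge\tau+\Delta>\tau$. So the exact predicate $D_{\rm sub}\le\tau$ already equals orthogonality, and the size of $\Delta$ is irrelevant because the index decides the exact predicate. The only point to check is that $\tau$ itself is representable. That is part of the certificate and needs no bound on $\Delta$.

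\textbf{Item 4.} Each stronger output is turned into the decision answer at small extra cost. An exact optimum is compared with $\tau$. A complete report is tested for nonemptiness. A witness-or-failure output is mapped to Yes exactly when a witness is returned. I expect this item to be the main obstacle, since it is the only one needing care about costs. The conversion must cost $n^{o(1)}$, which holds because the output is read only up to its first element or compared once. The claimed time must also already include the output size, so a report of superpolynomial length is not implicitly charged. For witness outputs I would either trust the output guarantee or verify the witness in $m^{O(1)}=n^{o(1)}$ time. Randomized error is preserved because the post-processing is deterministic. Together with Lemma~\ref{lem:rt-transfer}, this gives the claimed exclusions.
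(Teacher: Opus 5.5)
Your proposal is correct and follows the paper's proof item by item. You negate the score and threshold, restrict the uniform algorithm to the fixed slice, use the positive gap to keep the No inequality strict, and convert the stronger outputs by a failure test, one threshold comparison, or an emptiness test, with deterministic post-processing preserving the error probability. Your remarks on witness verification and output-size accounting match the paper's proviso that a returned location must carry the threshold guarantee or be verifiable within the claimed time.
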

\begin{proof}
For the first claim, $\min_X(-H(X,Q))=-\max_XH(X,Q)$. For the second, restrict the uniform algorithm to the hard slice. For the third, a positive gap preserves the strict No inequality. For the last, test failure, compare the returned optimum with the threshold, or test whether the report is empty.  A returned location alone must either include the threshold guarantee or permit verification within the claimed time.  These transformations preserve the stated error probability.
\end{proof}

An exact lower bound alone gives no constant-factor approximation lower bound.  Rubinstein's independent ANN theorem states that, under SETH, for fixed preprocessing and query exponents $\beta>0$ and $\delta>0$, some $\epsilon>0$ rules out deterministic $(1+\epsilon)$-ANN with $O(n^\beta)$ preprocessing and $O(n^{1-\delta})$ queries under Euclidean, Manhattan, Hamming, and edit distance~\cite{rubinstein2018hardness}. Its transfer to another score needs an additional reduction.

\begin{proposition}[Multiplicative-gap interface]\label{fw:approx}
Suppose a reduction satisfies all the conditions of Lemma~\ref{lem:rt-transfer} except that its decision equivalence is replaced by the uniformly comparable bounds
\[
\text{Yes: }D_{\rm sub}\le\tau_q,\qquad
\text{No: }D_{\rm sub}>g(N)\tau_q,
\]
where $D_{\rm sub}\ge0$, $\tau_q>0$, and $g(N)>1$. Then the same conditional lower bound holds for this gap decision problem, whose answer is unrestricted between its two promise regions. It also holds for an algorithm returning a feasible cost $\widehat D$ with
\[
D_{\rm sub}\le\widehat D\le\rho(N)D_{\rm sub},
\qquad \rho(N)<g(N)
\]
for all sufficiently large $N$, when the returned cost can be compared with $g(N)\tau_q$ within the transfer's query budget.
\end{proposition}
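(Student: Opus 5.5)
The plan is to reduce both claims to Lemma~\ref{lem:rt-transfer} by composing the hypothetical gap or approximation algorithm with the given encoders, so that the composite exactly decides OnlineOV. For the gap decision problem, observe that the encoders produce only instances lying in one of the two promise regions. An orthogonal vector gives $D_{\rm sub}\le\tau_q$, and a No OnlineOV instance gives $D_{\rm sub}>g(N)\tau_q$. Any algorithm correct on the promise therefore answers OnlineOV correctly on every reduction input, and its answers outside the promise are never consulted. The size, uniformity, encoder-separability, finite-word and hard-length-coverage conditions are inherited verbatim. The proof of Lemma~\ref{lem:rt-transfer} then runs unchanged: choose $b=\beta+\tfrac12$, $\epsilon=\delta/2$, select $c$ through Lemma~\ref{lem:rt-ovi}, and compare running times. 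For the randomized version, note that the encoders are deterministic and independent of the index coins, so the per-input error $p$ transfers as before.

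For the approximation claim, build a gap decider from the approximation algorithm. On query $q$, generate $(Q_q,\tau_q)$, obtain $\widehat D$, and answer Yes iff $\widehat D\le g(N)\tau_q$; since $\rho(N)<g(N)$ is a strict comparison, I would rather test $\widehat D\le\rho(N)\tau_q$ if $\rho(N)$ is available, though comparing against $g(N)\tau_q$ as the statement prescribes also works. In a Yes instance, $\widehat D\le\rho(N)D_{\rm sub}\le\rho(N)\tau_q<g(N)\tau_q$, using $\tau_q>0$ and, for sufficiently large $N$, $\rho(N)<g(N)$. In a No instance, feasibility gives $\widehat D\ge D_{\rm sub}>g(N)\tau_q$. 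Hence the test separates exactly the two promise regions. The small-$N$ exceptions are absorbed by the ``sufficiently large $n$'' clause of Lemma~\ref{lem:rt-transfer}.

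The main obstacle is the cost and legality of the final comparison, not the inequalities. The quantities $g(N)\tau_q$ (or $\rho(N)\tau_q$) and $\widehat D$ must be representable and comparable within the $N^{1-\delta}$ query budget in the declared model. This is exactly the stated hypothesis, and I would invoke it directly, noting for concrete gaps such as $g=2$ with rational $\tau_q$ that Proposition~\ref{fw:bits} gives $O(1)$-word comparisons. A second point to address explicitly is that if the algorithm returns a window or coupling rather than a number, its cost is recomputed in polynomial-in-$m$ time, which is $N^{o(1)}$ and thus absorbed into the query budget. Finally, the error accounting concerns a Boolean event determined by the algorithm's output, so the joint error $p$ is preserved without amplification.
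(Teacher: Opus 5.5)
Your proposal is correct and matches the paper's proof: every reduction instance lies in a promise region, the test against $g(N)\tau_q$ separates them via $\widehat D\le\rho(N)\tau_q<g(N)\tau_q$ in Yes instances and $\widehat D\ge D_{\rm sub}>g(N)\tau_q$ in No instances, and Lemma~\ref{lem:rt-transfer} then applies with unchanged error semantics. Your asides are harmless and not needed for the stated proposition: the alternative test against $\rho(N)\tau_q$ needs that quantity to be comparable, which the hypothesis does not supply, and recomputing a returned window's cost belongs to Corollary~\ref{rs:dfd-approx}.
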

\begin{proof}
Every reduction instance lies in a promise region.  In a Yes instance, $\widehat D\le\rho(N)\tau_q<g(N)\tau_q$; in a No instance, $\widehat D\ge D_{\rm sub}>g(N)\tau_q$.  This comparator answers OnlineOV, so the transfer lemma applies with unchanged error semantics.
\end{proof}
For example, a Yes threshold $\tau_N=\Theta(\log N)$ and a No lower bound $\tau_N+\kappa$, for constant $\kappa>0$, permit
\[
g(N)=1+\frac{(1-\eta)\kappa}{\tau_N}
     =1+\Theta(1/\log N),\qquad 0<\eta<1.
\]
Choose a fixed representable $\eta$; the No inequality is then strict. This excludes suitable high-accuracy ratios, not an arbitrary fixed constant ratio.

\begin{table*}[tp]
\centering
\small
\setlength{\tabcolsep}{4pt}
\renewcommand{\arraystretch}{1.12}
\begin{tabularx}{\textwidth}{@{}p{0.19\textwidth}p{0.12\textwidth}p{0.21\textwidth}p{0.22\textwidth}X@{}}
\toprule
Score & Query length & Yes value or bound & Bound in a No instance & Certificate \\
\midrule
Raw DTW / cDTW & $4d+1$ & $\min D\le d$ & $\min D\ge d+1$ & Theorem~\ref{thm:rt-raw} \\
Raw DFD & $4d+1$ & $\min D=1$ & $\min D=2$ & Theorem~\ref{thm:rt-raw} \\
cWDTW & $4d+1$ & $\min D\le\omega_0d$ & $\min D>\omega_0d$ & Corollary~\ref{rs:weighted-itakura} \\
ERP & $4d+1$ & $\min D\le d$ & $\min D>d$ & Lemma~\ref{em:erp-separation} \\
TWED & $4d+1$ & $\min D\le2d$ & $\min D>2d$ & Lemma~\ref{em:twed-separation} \\
MSM & $4d+1$ & $\min D\le d$ & $\min D>d$ & Lemma~\ref{em:msm-separation} \\
cDDTW (physical) & $4d+3$ & $\min D\le d$ & $\min D>d$ & Theorem~\ref{dc:cddtw} \\
LCSS & $2d^2+4d$ & $\max L=\tau_L$ & $\max L\le\tau_L-1$ & Lemma~\ref{ds:lcss-separation} \\
EDR & $7d$ & $\min E\le d$ & $\min E\ge d+1$ & Theorem~\ref{ds:edr-separation} \\
Swale & $7d$ & $\max W=4dr$ & $\max W\le(4d-3)r$ & Theorem~\ref{ds:swale-separation} \\
TS-SW & $4d+1$ & $\max T=4m-1$ & $\max T\le4m-1-2/d$ & Theorem~\ref{ds:tssw-separation} \\
Raw squared CID & $4d+1$ & $\min D^2<d+1$ & $\min D^2\ge d+4$ & Theorem~\ref{dc:cid} \\
Soft-DTW & $4d+1$ & $\min F_\gamma\le\kappa d$ & $\min F_\gamma>\kappa d+\kappa$ & Lemma~\ref{rs:soft-gap} \\
Normalized squared DTW & $12d+1$ & $\min D=4d/v$ & $\min D\ge(4d+4)/v$ & Theorem~\ref{thm:z:identity} \\
Normalized squared DFD & $12d+1$ & $\min D=1/v$ & $\min D=4/v$ & Theorem~\ref{thm:z:identity} \\
Normalized DFD & $12d+1$ & $\min D=1/\sqrt v$ & $\min D=2/\sqrt v$ & Corollary~\ref{cor:z:rational} \\
Rounded normalized squared scores & $12d+1$ & Below $T_D$ or $T_F$ & Above $T_D$ or $T_F$ & Corollary~\ref{cor:z:rounding} \\
\bottomrule
\end{tabularx}
\caption{Quantitative all-window separations. Minima and maxima range over every equal-length window. Here $\tau_L=2d^2+3d$, $r>0$ is the fixed Swale reward, $\omega_0$ is the diagonal weight, $\kappa$ is the integer scale in~\eqref{rs:soft-constants}, and $v$ is the common variance of canonical normalized blocks. The rounded thresholds are $T_D=(4d+2)/v$ and $T_F=5/(2v)$. An admissible Itakura restriction inherits the raw DTW row. Parameter and model restrictions are those in Table~\ref{in:catalog} and the cited statements.}
\label{tab:gap-catalog}
\end{table*}

\section{DTW and discrete Fr\'echet}
\label{sec:rt-raw}

We apply the alternating encoding of Section~\ref{fw:construction}. Its first-departure and first-cell bounds are the representative all-window argument for global path scores.

\begin{lemma}[First departure and first cell]\label{lem:rt-barrier}
For alternating blocks with common anchors and data in $[0,\Delta]$, every non-diagonal endpoint-to-endpoint path in a canonical window contains a local absolute difference at least $C-\Delta$. Every path in a noncanonical window contains such a difference at its first cell. These statements hold before imposing any band restriction.
\end{lemma}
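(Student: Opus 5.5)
The plan is a direct parity argument on the block layout
\[
B_C(z)=(C,z_1,2C,z_2,\ldots,2dC,z_{2d},(2d+1)C).
\]
Positions $2k-1$ carry the anchor value $kC\ge C$, for $k=1,\ldots,2d+1$. Even positions carry data in $[0,\Delta]$. The query $Q=B_C(R(v))$ has the same anchors at the same positions, and its data also lies in $[0,\Delta]$. The only numerical fact needed is that any anchor--data pair has
\[
|kC-z|\ge C-\Delta>0,
\]
which uses $C>\Delta$. Both claims will only inspect a single cell of the path, chosen without reference to any band, so the statement holds before a Sakoe--Chiba or other restriction is imposed.

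\emph{Canonical windows.} Take an endpoint-to-endpoint path $\pi$ from $(1,1)$ to $(m,m)$ that is not the full diagonal. Let $(i,i)$ be the last diagonal cell before the path first leaves the diagonal; such a cell exists because the path starts at $(1,1)$. The next step cannot be $(1,1)$, since that would stay on the diagonal. So the next cell is $(i+1,i)$ or $(i,i+1)$. In either case one index is odd and the other even, because $i$ and $i+1$ have different parity. The window and the query share the position layout, so this cell pairs an anchor with a data value, and its local absolute difference is at least $C-\Delta$. I will also note that a non-diagonal path must leave the diagonal at some point, since it ends at $(m,m)$ and is not the diagonal itself.

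\emph{Noncanonical windows.} Let the window start at phase $0<t<m$ inside some block, so its first symbol is block position $t+1\in\{2,\ldots,m\}$. Every path starts at $(1,1)$, and that cell pairs this symbol with $q_1=C$. There are two cases.
\begin{itemize}
\item If $t+1$ is even, the symbol is data in $[0,\Delta]$, and the difference is at least $C-\Delta$.
\item If $t+1$ is odd, the symbol is the anchor $kC$ with $k=(t+2)/2\ge2$. The difference is then $(k-1)C\ge C>C-\Delta$.
\end{itemize}
For the target $S$, every length-$m$ window starts at some block phase, because $S$ is a seamless concatenation of length-$m$ blocks. So every noncanonical window falls into one of these two cases.

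I expect no real obstacle. The only points to be careful about are the ones Lemma~\ref{fw:phase} warns about. First, shifted windows that straddle two blocks only matter through their first symbol, so the first-cell bound does not depend on the next block. Second, the offset $t=m-1$ lands on the final anchor $(2d+1)C$ and is covered by the odd case.
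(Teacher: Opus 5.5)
Your proposal is correct and follows the paper's proof essentially step for step. The first departure from the diagonal at $(i,i)$ enters $(i+1,i)$ or $(i,i+1)$ and so pairs an anchor with a data value by parity; the paper splits this into the cases $r$ odd and $r$ even, which your parity remark covers at once. Every shifted window is rejected at the mandatory cell $(1,1)$, since its first value is either data or an anchor $kC$ with $k\ge2$, and both arguments inspect an actually visited cell, so they hold for the unrestricted path family and hence under any band.
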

\begin{proof}
Fix a canonical block and a nondiagonal path. Before its first horizontal or vertical step the path has visited only $(1,1),(2,2),\ldots,(r,r)$ for some $r<m$. That departure enters $(r+1,r)$ or $(r,r+1)$; a diagonal step cannot be the first departure. Odd positions contain anchors and even positions contain data. If $r$ is odd, the departure pairs the data at position $r+1$ with an anchor at position $r$. If $r$ is even, it pairs the next anchor with data at position $r$. In both directions and both parities the anchor has value at least $C$ and the data value is at most $\Delta$, proving the first bound. It is a statement about an actual visited cell, not merely about the order of two anchors.

Now let a window start at offset $t\in\{1,\ldots,m-1\}$ of a block. The query's first value is $C$. The window's first value is either data in $[0,\Delta]$ or a later anchor $jC$ with $j\ge2$. In the former case its difference from $C$ is at least $C-\Delta$, and in the latter case it is at least $C$. Every global path visits the first cell $(1,1)$, independently of the remaining contents of the window. These are all possible starting positions in the concatenated target, so the bound includes windows crossing a block boundary. Both arguments hold in the unrestricted path family and therefore in any allowed subset.
\end{proof}

Here DTW uses the sum of absolute local differences, unit state weights, fixed endpoints, and no division by path length~\cite{berndt1994using,sakoe1978dynamic}. DFD minimizes the maximum absolute difference on the same monotone grid paths~\cite{eiter1994computing}. Writing $\ell_{ij}=|x_i-q_j|$, their recurrences are
\begin{align*}
 D_{ij}&=\ell_{ij}+\min\{D_{i-1,j},D_{i,j-1},D_{i-1,j-1}\},\\
 F_{ij}&=\max\{\ell_{ij},\min\{F_{i-1,j},F_{i,j-1},F_{i-1,j-1}\}\}.
\end{align*}
Initialize the first cell with its local cost and use only legal predecessors; cells outside a chosen band are unavailable. The output is the state $(m,m)$. The path family is finite and nonempty, so both minima are attained. One window is evaluated in $O(m^2)$ time. Lemma~\ref{lem:rt-barrier} is common to the sum and bottleneck recurrences; only the threshold comparison changes.

\begin{theorem}[Raw DTW and DFD]\label{thm:rt-raw}
For every fixed integer band $W\ge0$, and also for the unconstrained path family, the exact equal-length-window DTW and DFD problems have the conditional indexing lower bounds of Lemma~\ref{lem:rt-transfer} under the respective deterministic and randomized SETH assumptions. Their hard lengths are $m=4\lceil c\log n\rceil+1$.
\end{theorem}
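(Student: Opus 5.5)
We instantiate Proposition~\ref{fw:certificate} with the common encoding of Section~\ref{fw:construction}. Use the gadgets of Lemma~\ref{lem:raw-embedding} with $\Delta=2$, so every data value lies in $[0,2]$. Fix the anchor spacing at a constant $C$ large enough that $C-\Delta$ exceeds both thresholds; concretely $C=d+4$ suffices for DTW (threshold $\tau=d$), and $C=5$ already suffices for DFD (threshold $\tau=1$). Taking $C=d+4$ uniformly is also fine. The target is $S=B_C(P(u_1))\circ\cdots\circ B_C(P(u_n))$ and the query is $Q=B_C(R(q))$, with $m=4d+1$. All values are integers bounded by $(2d+1)C=O(d^2)$. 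The DP uses only additions, comparisons, and absolute values, so Corollary~\ref{fw:algebraic-closure} and Proposition~\ref{fw:bits} give legality in $O(\log d)$ bits. The target depends only on $A$ and has length $nm=nd^{O(1)}$; the query and threshold depend only on $q$ and $d$. Hence the resource conditions of Lemma~\ref{lem:rt-transfer} hold, and the hard lengths are exactly $m=4\lceil c\log n\rceil+1$.

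For completeness, take an orthogonal $u_i$. On the canonical window $B_C(P(u_i))$, the full diagonal path is legal for every band $W\ge0$ and also in the unrestricted family. Along it the anchors cancel, so its cost is $\|P(u_i)-R(q)\|_1=d$ for DTW and $\|P(u_i)-R(q)\|_\infty=1$ for DFD, by \eqref{rs:embedding-l1} and \eqref{rs:embedding-linf}. Both values meet the thresholds $\tau=d$ and $\tau=1$.

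For soundness in a No instance, we check every physical window. First take a canonical window $B_C(P(u_i))$ with $\langle u_i,q\rangle\ge1$. Its diagonal path costs $d+2\langle u_i,q\rangle\ge d+2$ for DTW and exactly $2$ for DFD. Every nondiagonal path contains, by Lemma~\ref{lem:rt-barrier}, a cell of local cost at least $C-\Delta>\tau$. Dominance (Lemma~\ref{fw:witness}, item 1) then applies, since both sum and max dominate each local cost. Next take a window at offset $t\in\{1,\ldots,m-1\}$. Every path visits $(1,1)$, whose local cost is at least $C-\Delta>\tau$ by the first-cell part of Lemma~\ref{lem:rt-barrier}. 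The initial barrier (item 4) or dominance again rejects it. Because the target is a seamless concatenation, these offsets exhaust all $N-m+1$ windows. Restricting to a band only shrinks the path family, so the lower bounds persist, while the Yes path stays inside every band.

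The DFD No value is exactly $2$ once $C\ge4$, matching Table~\ref{tab:gap-catalog}. Applying Proposition~\ref{fw:certificate} followed by Lemma~\ref{lem:rt-transfer} finishes the proof for each fixed $W$ and for the unrestricted family. Under randomized SETH the error bound $p$ carries over unchanged. The one delicate point is choosing $C$ consistently with the threshold: for DTW, $C$ must grow with $d$ so that one anchor–data cell alone exceeds $d$. This keeps the values polynomial in $d$, so the bit-length bound above still holds. Everything else is inherited from Lemma~\ref{lem:rt-barrier}.
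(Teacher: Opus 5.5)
Your proposal is correct and follows essentially the same argument as the paper. It uses diagonal completeness on an orthogonal canonical block, the first-departure and first-cell bounds of Lemma~\ref{lem:rt-barrier} with $C-\Delta>\tau$, the observation that a band only shrinks the path family while keeping the diagonal, and then Lemma~\ref{lem:rt-transfer}. The differences are cosmetic: the paper takes $C=d+3$ for DTW and $C=4$ for DFD and applies the transfer lemma directly rather than through Proposition~\ref{fw:certificate}. Also, calling $C=d+4$ a ``constant'' is a slip, since it grows with $d$, as you note yourself at the end.
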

\begin{proof}[Proof for DTW]
For DTW set $\Delta=2$, $C=d+3$, and $\tau=d$. The encoders use the same alternating block shape; data blocks depend only on the stored vectors and the query block depends only on $v$. On a canonical window the full diagonal has zero anchor costs and data cost
\[
 \sum_{j=1}^{2d}|P(u)_j-R(v)_j|=d+2\langle u,v\rangle.
\]
Thus an orthogonal data vector gives a legal path of cost $d$. This proves completeness without assuming that every path is diagonal.

For soundness, fix any window and any legal path. In a canonical window a nondiagonal path contains a first-departure cell of cost at least $C-2=d+1$ by Lemma~\ref{lem:rt-barrier}; nonnegative summation already exceeds the threshold. A remaining canonical path is diagonal and, in a No instance, costs at least $d+2$. In a noncanonical window the mandatory first cell has cost at least $d+1$, regardless of the two data blocks that the window intersects. These cases exhaust all windows and all paths. We force only \emph{threshold-feasible} paths to be diagonal; an optimal path in a No block need not be diagonal.

Every Yes witness is diagonal and therefore satisfies each fixed nonnegative band. A path restriction cannot introduce an accepted No witness. The construction has $N=n(4d+1)$, query length $m=4d+1$, and construction times $O(nd)$ and $O(d)$. Its symbols have polynomial magnitude in $d$. Finite local costs and DP sums also have polynomial magnitude, because a path uses at most $2m-1$ cells. Lemma~\ref{lem:rt-transfer} supplies both indexing exclusions. If bandwidth is an explicit query parameter, restrict the proposed uniform index to one fixed band.
\end{proof}
\begin{proof}[Proof sketch for DFD]
Use the same encoders with $C=4$ and threshold $1$. The diagonal bottleneck is $1$ for an orthogonal pair and $2$ otherwise by~\eqref{rs:embedding-linf}. Every nondiagonal canonical path and every shifted-window path has a cell of cost at least $C-2=2$. A bottleneck maximum dominates that cell, just as the nonnegative path sum did in the DTW proof. Thus the all-window optimum is $1$ in Yes instances and $2$ in No instances, with the upper bound $2$ supplied by any canonical diagonal. The same path-restriction and resource arguments finish the transfer.
\end{proof}

\begin{corollary}[DFD approximation below two]\label{rs:dfd-approx}
Fix $1\le\rho<2$. The same lower bounds exclude an index returning a valid window $\widehat X$ with
\[
D_{\rm DFD}(\widehat X,Q)\le\rho\min_X D_{\rm DFD}(X,Q),
\]
or a valid window together with a feasible coupling whose bottleneck satisfies that bound. In the randomized case, feasibility and the approximation guarantee must hold jointly with probability at least $1-p$ for every fixed input, where $p<1/2$ is fixed.
\end{corollary}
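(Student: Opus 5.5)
We obtain the corollary from Proposition~\ref{fw:approx}, using the raw DFD reduction of Theorem~\ref{thm:rt-raw} unchanged: $C=4$, data in $[0,2]$, threshold $\tau_q=1$, query length $m=4d+1$. The DFD proof already gives the exact all-window gap. In a Yes instance $\min_X D_{\rm DFD}(X,Q)=1$, attained by the diagonal of an orthogonal canonical window. In a No instance every window has value $2$. Two facts give this. Every canonical diagonal has bottleneck $2$ by \eqref{rs:embedding-linf}. Every other path, and every path in a shifted window, visits a cell of cost at least $C-2=2$ by Lemma~\ref{lem:rt-barrier}. We therefore take the constant gap $g=2$: Yes means $D_{\rm sub}\le1$, and No means $D_{\rm sub}\ge2$. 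For a fixed $\rho<2$ we fix any representable constant $g'$ with $\rho<g'<2$ (for instance $g'=(\rho+2)/2$ when $\rho$ is dyadic; otherwise any rational in between). Then No gives $D_{\rm sub}>g'\tau_q$ strictly, so Proposition~\ref{fw:approx} applies.

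The returned object is a window, possibly with a coupling, not a cost. The key step is to turn it into a verified feasible cost within the query budget.

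\emph{Window only.} If the index returns a physical window $\widehat X$ (its start offset), we read its $m$ values from $S$. This costs $O(m)$, assuming the stored target, or the index itself, permits random access; otherwise the window is output explicitly. We then evaluate $D_{\rm DFD}(\widehat X,Q)$ exactly by the recurrence in $O(m^2)=\mathrm{polylog}\,N$ time. All values are integers of polynomial magnitude, so Proposition~\ref{fw:bits} applies.

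\emph{Window with coupling.} If the index instead returns a coupling, we check that it is a legal monotone path from $(1,1)$ to $(m,m)$ within the band, and compute its bottleneck in $O(m)$ time. An illegal output is treated as failure and answered arbitrarily; this is covered by the joint probability requirement.

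In either case we obtain $\widehat D$ with $D_{\rm sub}\le\widehat D\le\rho D_{\rm sub}$. In a Yes instance $\widehat D\le\rho<g'$, and in a No instance $\widehat D\ge2>g'$. We answer Yes iff $\widehat D<g'$; since values are integers, equivalently iff $\widehat D\le1$. The added query time is $N^{o(1)}$, so the bound $O(N^{1-\delta})$ is preserved.

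For the randomized case, the comparator answers OnlineOV correctly whenever the output is feasible and satisfies the guarantee. That event has probability at least $1-p$ jointly over preprocessing and query coins. Hence the composed OnlineOV index has the same per-input error $p$, and Lemma~\ref{lem:rt-transfer} (via Lemma~\ref{lem:rt-ovi}) finishes the proof.

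The main obstacle is the all-window No bound $D_{\rm sub}\ge2$. Without it, a shifted window could give a cheap approximation. Lemma~\ref{lem:rt-barrier} supplies exactly this bound, so the remaining care is only in the verification interface: an unverifiable location must not be counted as a feasible solution.
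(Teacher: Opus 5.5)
Your proposal is correct and follows essentially the same route as the paper: fix a rational threshold strictly between $\rho$ and $2$, verify a returned window by exact $O(m^2)$ DFD evaluation (or a returned coupling by an $O(m)$ feasibility-and-bottleneck check), and treat invalid output as failure. The Yes value $\le\rho$ and No value $\ge2$ then separate, with the per-input error unchanged. Two small cleanups: the reduction can simply retain $S$ during preprocessing at $O(N)$ extra cost, so no random-access assumption on the index is needed; and in a No instance windows have DFD at least $2$ (shifted windows may cost more), not exactly $2$.
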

\begin{proof}
Choose a fixed rational $g$ with $\rho<g<2$. Retain $S$ during preprocessing at an additional $O(N)$ cost. For a returned window, compute its exact DFD in $O(m^2)$ time and test whether it is below $g$. For an explicitly returned coupling, one may instead verify its feasibility and compute its bottleneck in $O(m)$ time. On success, the Yes value is at most $\rho<g$, whereas every No feasible value is at least $2>g$. Invalid output is treated as failure. The extra $\operatorname{poly}(d)=n^{o(1)}$ query work is absorbed by the exponent slack in Lemma~\ref{lem:rt-transfer}, and the error is at most $p$.
\end{proof}
This statement does not cover arbitrary underestimating values or expected approximation guarantees. At $\rho=2$ the present gap no longer separates the cases: on the constructed instances, any canonical diagonal is already a feasible $2$-approximation. This observation is only about the construction, not a general $2$-approximate index.

\begin{corollary}[Positive weights and fixed path restrictions]
\label{rs:weighted-itakura}
The same conditional lower bounds hold for cWDTW with local cost $\omega(|i-j|)|x_i-q_j|$ if the rule $\omega$ is fixed independently of input size, $\omega(0)=\omega_0>0$, and $\inf_r\omega(r)\ge\omega_{\min}>0$ for a fixed constant $\omega_{\min}$. Its required weights must be uniformly exactly generated and represented in $O(1)$ words, and required multiplication and comparison must take $O(1)$ Word-RAM operations. They also hold for any fixed Itakura-style path rule whose permitted paths form a subset of the standard monotone paths and contain the complete diagonal at every sufficiently large hard length.
\end{corollary}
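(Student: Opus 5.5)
The plan is to reuse the raw DTW encoders of Theorem~\ref{thm:rt-raw} unchanged (same alternating blocks $B_C$, same $P,R$ gadgets) and only rescale the anchor separation $C$ and the threshold. For cWDTW I set the threshold to $\tau=\omega_0 d$, since the canonical diagonal has $|i-j|=0$ throughout, so its cost is exactly $\omega_0(d+2\langle u,v\rangle)$. This gives completeness: an orthogonal stored vector yields a diagonal path of cost $\omega_0 d$. In a No instance the diagonal costs at least $\omega_0(d+2)>\tau$.

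For soundness I invoke Lemma~\ref{lem:rt-barrier}, which locates a specific visited cell with absolute difference at least $C-\Delta$. This cell is the first departure cell in a nondiagonal canonical path, or the cell $(1,1)$ in a noncanonical window. Its weighted cost is at least $\omega_{\min}(C-\Delta)$, using $\omega(|i-j|)\ge\omega_{\min}$ (at $(1,1)$ the weight is $\omega_0\ge\omega_{\min}$ anyway). All other weighted local costs are nonnegative because the weights are positive. Dominance in Lemma~\ref{fw:witness} therefore rejects such paths once $\omega_{\min}(C-2)>\omega_0 d$. I would choose the integer $C=\lceil\omega_0 d/\omega_{\min}\rceil+3$. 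Since $\omega_0,\omega_{\min}$ are fixed, $C=O(d)$ and all values remain polynomial in $d$. The weighted DP sums involve at most $2m-1$ products of a fixed representable weight with a polynomial integer. If the weights are rationals with bounded denominators, Corollary~\ref{fw:algebraic-closure} gives $O(1)$-word representation. The hypotheses supply the $O(1)$-time multiplication and comparison. The threshold $\omega_0 d$ is one such product, computable from the query side and public parameters. Lemma~\ref{lem:rt-transfer} then finishes, with Proposition~\ref{fw:elementary}(2) covering the case where $\omega$ is an explicit parameter.

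For a fixed Itakura-style rule $\Pi^{\rm Ita}_m\subseteq\Pi$ containing the full diagonal, I keep the raw DTW parameters $C=d+3$, $\tau=d$. Completeness holds because the accepting Yes witness is the diagonal, which is permitted at every sufficiently large hard length $m=4d+1$. Soundness holds because restricting to a subset of paths can only raise each window's minimum, so the no-accepted-window argument for unrestricted paths carries over verbatim (Lemma~\ref{lem:rt-barrier} holds before any restriction). The same subset argument applies to a weighted rule with a path restriction, so the combination also follows.

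The main obstacle is the uniformity and legality bookkeeping rather than the combinatorics. The weight rule must be evaluable at $r=0,\dots,m-1$, or at least at $r=0$ for the threshold, within query-side time $n^{o(1)}$. The choice of $C$ must be a uniform computable function of $d$, which is why I use a ceiling involving the fixed ratio $\omega_0/\omega_{\min}$, computed once as a constant. Finally, "sufficiently large" for the Itakura rule must be reconciled with the $n_0$ in hard-length coverage. I would handle this by taking $n_0$ past the threshold where the diagonal becomes admissible.
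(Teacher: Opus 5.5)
Your proposal is correct and follows the paper's own argument. You reuse the raw DTW encoders and Lemma~\ref{lem:rt-barrier}, set the threshold to $\omega_0 d$, scale $C$ so that the forced cell's weighted cost $\omega_{\min}(C-2)$ exceeds $\omega_0 d$, and treat the Itakura rule by the subset-plus-diagonal argument. The only difference is cosmetic: the paper takes $C=Kd+3$ for a fixed integer $K$ with $K\omega_{\min}>\omega_0$. That choice avoids computing $\lceil\omega_0 d/\omega_{\min}\rceil$ uniformly, a step your bookkeeping otherwise needs to justify by taking $\omega_0/\omega_{\min}$ rational or replacing the ceiling with exactly such an integer bound.
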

\begin{proof}[Proof sketch]
Reuse Lemma~\ref{lem:rt-barrier} and the DTW proof, choosing a fixed positive integer $K$ with $K\omega_{\min}>\omega_0$, scale $C=2+Kd+1$, and threshold $\omega_0d$. The diagonal costs $\omega_0(d+2\langle u,v\rangle)$, while an excluded path incurs a weighted cell cost at least $\omega_{\min}(Kd+1)>\omega_0d$. The weight-generation premise supplies exact constant-word arithmetic. For the fixed path restriction, the canonical diagonal remains a Yes witness and its path set is a subset of the unrestricted No case. No new all-window or index-transfer argument is needed.
\end{proof}
The length-dependent modified logistic weights of the original WDTW proposal \cite{jeong2011weighted} need not satisfy the uniform positive lower bound, and their transcendental evaluation does not automatically meet the exact Word-RAM condition. Likewise, the historical Itakura formulation \cite{itakura1975minimum}, modern window constraints \cite{keogh2002exact}, and alternative step weights must not be identified without checking the stated path-subset premise. None of these results silently replaces absolute loss by squared loss or a path-length average.

\section{Edit distances}
\label{em:section}

The edit operations of these three measures require different locking arguments.  ERP charges the magnitude of a deleted anchor; TWED charges same-side variation; MSM can charge only its fixed penalty when an element lies between its two neighbors.  We therefore verify their actual recurrences separately.

Throughout this section, let $p=P(u)$ and $q=R(v)$ be the concatenated $2d$-coordinate codes of Lemma~\ref{lem:raw-embedding}.  Thus all data coordinates lie in $[0,\Delta]$, where $\Delta=2$, and
\begin{equation}
 \|p-q\|_1=d+2\langle u,v\rangle.
 \label{em:embedding}
\end{equation}
For a scale $C>\Delta$, set $c_j=jC$ and reuse the alternating block from Section~\ref{fw:construction}:
\begin{equation}
 \Phi_C(p):=B_C(p),\qquad m=4d+1.
 \label{em:block}
\end{equation}
Its length is $m=4d+1$.  Given $U=(u^{(1)},\ldots,u^{(n)})$, concatenate these blocks without separators to obtain $S$, and let $Q=\Phi_C(R(v))$. An aligned block is a \emph{canonical} window.  Every other length-$m$ window consists of a proper block suffix and the next block's prefix. In particular, its first element is either a data value or $c_j$ for $j\ge2$, and hence differs from $Q_1=C$ by at least $C-\Delta$.

\subsection{ERP}
\label{em:erp}

For gap value $g$, one-dimensional ERP~\cite{chen2004marriage} is the minimum cost of a global order-preserving edit alignment.  With prefix states $E(i,j)$, its recurrence is
\begin{equation}
 E(i,j)=\min\begin{cases}
 E(i-1,j-1)+|x_i-q_j|,\\
 E(i-1,j)+|x_i-g|,\\
 E(i,j-1)+|q_j-g|,
 \end{cases}
 \label{em:erp-dp}
\end{equation}
with $E(0,0)=0$, $E(i,0)=\sum_{a=1}^i|x_a-g|$, and $E(0,j)=\sum_{b=1}^j|q_b-g|$.  The output is $E(m,m)$. The reduction fixes $g=0$, $C=d+\Delta+1=d+3$, and threshold $d$. An alignment consists of columns $(x_i,q_j)$, $(x_i,g)$, or $(g,q_j)$; each input position is consumed exactly once.  A direct anchor match means the actual column $(c_j,c_j)$, not merely visiting a DP state.

\begin{lemma}[ERP locking and all-window separation]
\label{em:erp-separation}
Every canonical alignment of cost at most $d$ directly matches all same-name anchors.  Moreover,
\[
 \begin{aligned}
 &\exists u\in U:\langle u,v\rangle=0\\
 &\quad\Longleftrightarrow\quad
 \min_{X\in\operatorname{Win}_m(S)}\operatorname{ERP}_0(X,Q)\le d.
 \end{aligned}
\]
\end{lemma}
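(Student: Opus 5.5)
The plan is to use the dominance rule of Lemma~\ref{fw:witness} at the level of individual ERP columns. All column costs are nonnegative, so any alignment of total cost at most $d$ contains no single column of cost exceeding $d$. First I will list the columns involving an anchor $c_j=jC$ with $C=d+3$ and bound each one. With $g=0$, a deletion column $(c_j,g)$ or an insertion column $(g,c_j)$ costs $jC\ge C=d+3$. A column pairing an anchor with a data value in $[0,2]$ costs at least $C-\Delta=d+1$. A column pairing two different anchors $c_j\ne c_k$ costs $|j-k|C\ge C$. All three exceed $d$. So in a threshold-feasible alignment every anchor on either side appears in a match column with an anchor of the same name.

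For the locking claim on a canonical window, both sequences contain $c_1,\ldots,c_{2d+1}$ once each, in the same order, at the odd positions. The previous step shows that every anchor is directly matched to its same-name counterpart, which gives the first assertion. Order preservation then confines each window data value $p_j$ and each query data value $q_j$ to the gap between $c_j$ and $c_{j+1}$. Inside that gap the alignment either matches them, at cost $|p_j-q_j|$, or deletes one and inserts the other, at cost $|p_j|+|q_j|\ge|p_j-q_j|$ because all data are nonnegative. Summing over the $2d$ gaps and using \eqref{em:embedding} gives cost at least $d+2\langle u,v\rangle$. Hence a nonorthogonal canonical window has ERP at least $d+2>d$. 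For the forward direction, an orthogonal $u$ gives the diagonal alignment of all match columns, whose cost is exactly $\|p-q\|_1=d$.

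The main obstacle is the noncanonical windows. Unlike DTW, ERP may delete the first element, so the first-cell barrier of Lemma~\ref{lem:rt-barrier} does not apply. Here I will invoke the phase obstruction, Lemma~\ref{fw:phase}. Its premises hold: the anchors $c_1,\ldots,c_{2d+1}$ are distinct and sit at positions $1=p_1<\cdots<p_r=m$, data values lie in $[0,2]$ and are disjoint from the anchor values $\ge C>2$, and the query contains each anchor once in order. A window at offset $0<t<m$ lists its anchors in the cyclically rotated order $c_{s+1},\ldots,c_{2d+1},c_1,\ldots,c_s$. Any global order-preserving alignment must therefore delete an anchor, insert a query anchor, pair an anchor with data, or pair two different anchors. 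The first paragraph shows that each of these events costs at least $d+1>d$.

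Therefore every noncanonical window has ERP greater than $d$, whether the instance is Yes or No. Together with the canonical bounds, this establishes the displayed equivalence over all physical windows. One detail needs checking: after the rotation the window may hold fewer than $2d+1$ anchors, or repeat none. Fewer anchors only forces an unmatched query anchor, that is, an insertion costing at least $C$, so the conclusion is unaffected. Finally I will note that all magnitudes are polynomial in $d$, so that Proposition~\ref{fw:bits} and Lemma~\ref{lem:rt-transfer} can be applied afterwards.
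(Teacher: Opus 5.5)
Your proposal is correct and matches the paper's proof: the same three forbidden-column bounds lock every anchor, the chamber-by-chamber triangle inequality gives cost at least $d+2\langle u,v\rangle$ on canonical windows, and the rotated anchor order $c_{s+1},\ldots,c_{2d+1},c_1,\ldots,c_s$ excludes every shifted window (the paper argues this inline rather than citing Lemma~\ref{fw:phase}). Two small remarks: your ``fewer anchors'' caveat is vacuous, since every phase occurs exactly once in a shifted window, and your locking argument shows only that a nonorthogonal canonical window has ERP $>d$, not $\ge d+2$, though ERP $>d$ is all the equivalence needs.
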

\begin{proof}
First consider a canonical alignment. Every query position occurs in exactly one alignment column. For a query anchor $c_j$, that column has one of three forbidden forms unless it is the direct match $(c_j,c_j)$: a gap column costs $c_j\ge C>d$; an anchor--data column costs at least $C-\Delta=d+1$; an unequal-anchor column costs at least $C>d$. Nonnegativity of all columns implies that a total-cost-$d$ alignment contains every direct same-name anchor column. This conclusion concerns the actual consumed pair, not just the presence of a DP state with matching indices.

These anchor columns occur in increasing order. Between $(c_k,c_k)$ and $(c_{k+1},c_{k+1})$, each side contains exactly one data element, $p_k$ and $q_k$. The legal subalignment either substitutes one for the other or consumes both through separate gap columns. The triangle inequality gives
\[
 |p_k-q_k|\le |p_k-g|+|q_k-g|.
\]
Thus the minimum chamber cost is exactly $|p_k-q_k|$. Chambers use disjoint alignment columns, so their lower bounds sum without counting any operation twice. A canonical threshold-feasible alignment consequently has cost at least $\|p-q\|_1=d+2\langle u,v\rangle$. Conversely the diagonal supplies exactly this cost. In particular an orthogonal canonical block passes, and a nonorthogonal canonical block cannot pass.

It remains to treat every shifted window. Put $r=2d+1$ and let $0<t<m$ be its offset from the preceding block start. The window is the suffix of one block followed by the corresponding-length prefix of the next. Since the two blocks put the same anchors at the same positions, its anchor order is
\[
 c_{s+1},\ldots,c_r,c_1,\ldots,c_s
 \qquad\text{for some }1\le s<r.
\]
For clarity, this covers a cut at either an anchor or a data position: $s$ counts the anchors in the removed prefix. Suppose there were a shifted alignment of cost at most $d$. The same three column-cost inequalities would force every query anchor to match its unique same-name window anchor. But the pair $c_{s+1},c_1$ appears in opposite orders in the window and query. A global order-preserving alignment cannot contain both direct matches. Hence some anchor must be consumed by one of the forbidden columns, already costing more than $d$. This contradiction is independent of the data values in the two neighboring blocks. Combining the canonical and shifted cases proves both directions of the all-window equivalence.
\end{proof}

The gap-$0$ restriction is sufficient for the main lower bound. For any fixed exactly representable $g$, translating \emph{every} data value and anchor by $g$, and using gap $g$, leaves every term of \eqref{em:erp-dp} unchanged.  This gives the corresponding fixed-gap slice whenever the translation is legal in the computational model. Translating the data alone while leaving the gap fixed would not justify this extension.

\subsection{TWED}
\label{em:twed}

We use the following unit-timestamp $L_1$ slice of TWED~\cite{marteau2009time}.  For each query and each candidate window, reset timestamps to $t_i=i$ and set $x_0=q_0=0$.  Fix $\nu,\lambda\ge0$ independently of the input size.  Define
\begin{align*}
 A_{ij}&=|x_i-q_j|+|x_{i-1}-q_{j-1}|+2\nu|i-j|,\\
 B_i&=|x_i-x_{i-1}|+\nu+\lambda,\\
 F_j&=|q_j-q_{j-1}|+\nu+\lambda.
\end{align*}
Then
\begin{equation}
 T(i,j)=\min\begin{cases}
 T(i-1,j-1)+A_{ij},\\
 T(i-1,j)+B_i,\\
 T(i,j-1)+F_j.
 \end{cases}
 \label{em:twed-dp}
\end{equation}
Initialize $T(0,0)=0$ and $T(i,0)=T(0,j)=+\infty$ for positive $i,j$.  A finite witness is a predecessor chain to $(m,m)$, so its first step is necessarily $(0,0)\to(1,1)$. These boundary conditions and local timestamp resetting are part of the problem definition; we make no claim for free endpoints, finite boundary initialization, or arbitrary timestamp conventions.  Zero parameters are permitted as scoring slices without asserting metric identity on the full space of time-stamped sequences.

Use \eqref{em:block} with $C=2d+\Delta+1=2d+3$ and threshold $2d$.

\begin{lemma}[TWED all-window separation]
\label{em:twed-separation}
For the stated TWED slice,
\[
 \exists u\in U:\langle u,v\rangle=0
 \quad\Longleftrightarrow\quad
 \min_{X\in\operatorname{Win}_m(S)}T(X,Q)\le2d.
\]
\end{lemma}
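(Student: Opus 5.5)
The plan is to reuse the template of Theorem~\ref{thm:rt-raw}: completeness on a canonical diagonal, and soundness through the two exclusion rules of Lemma~\ref{fw:witness}. The rules are an initial barrier for shifted windows and a dominance argument for non-diagonal witnesses in canonical windows. The point specific to TWED is that its deletion and insertion costs charge the \emph{same-side variation} $|x_i-x_{i-1}|$ or $|q_j-q_{j-1}|$. In the alternating block $\Phi_C$, consecutive entries are always one anchor and one data value, so this variation is itself large. All three transition costs are nonnegative because $\nu,\lambda\ge0$, so every finite witness is a predecessor chain of nonnegative increments.

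\emph{Completeness.} On a canonical window compute the full diagonal chain $(0,0)\to(1,1)\to\cdots\to(m,m)$. Since $i=j$, each step pays no $\nu$ term. With $x_0=q_0=0$ the sum telescopes:
\[
\sum_{i=1}^m\bigl(|x_i-q_i|+|x_{i-1}-q_{i-1}|\bigr)=2\sum_{i=1}^m|x_i-q_i|-|x_m-q_m|.
\]
The last term vanishes because position $m$ holds the common anchor $(2d+1)C$. The anchors cancel, so by \eqref{em:embedding} the diagonal cost is $2(d+2\langle u,v\rangle)$. This equals $2d$ for an orthogonal vector, which proves the forward direction. For a nonorthogonal vector it is at least $2d+4$.

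\emph{Shifted windows.} The boundary values $T(i,0)=T(0,j)=+\infty$ force every finite witness to begin with the step $(0,0)\to(1,1)$, of cost $A_{11}=|x_1-q_1|$. In a window at offset $0<t<m$, the first entry is either a data value in $[0,2]$ or an anchor $jC$ with $j\ge2$. Either way it differs from $Q_1=C$ by at least $C-\Delta=2d+1>2d$. The initial-barrier rule of Lemma~\ref{fw:witness} then rejects the window, independently of the two neighbouring data blocks.

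\emph{Canonical non-diagonal witnesses.} Any chain to $(m,m)$ other than the full diagonal contains at least one horizontal or vertical step. Because the first step is forced to be diagonal, any such step has index $i\ge2$ (a $B_i$ step) or $j\ge2$ (an $F_j$ step). Its cost is therefore at least $|x_i-x_{i-1}|$ or $|q_j-q_{j-1}|$, and each of these pairs an anchor of value at least $C$ with a data value at most $\Delta$. The cost is thus at least $C-\Delta=2d+1>2d$, and dominance under nonnegative summation excludes the witness. The only threshold-feasible canonical witness is the diagonal, which costs at least $2d+4$ in a No instance.

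The step needing the most care is the completeness computation. The $A_{ij}$ cost double-counts the previous pair, and this must not break the threshold: the factor two and the vanishing final anchor term have to be checked. A second point is that $i$ and $j$ both stay at least $2$ for every gap step. This is exactly why the infinite boundary initialization is part of the slice, since a gap at $i=1$ against $x_0=0$ would only cost $C$.

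Finally, $C=2d+3$, the thresholds and all DP sums have polynomial magnitude in $d$, and $\nu,\lambda$ are fixed representable constants. Proposition~\ref{fw:bits} and Lemma~\ref{lem:rt-transfer} therefore apply unchanged.
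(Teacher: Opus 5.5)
Your proof is correct and follows the paper's argument. The infinite boundary states force the first step $(0,0)\to(1,1)$, which gives the initial barrier $|X_1-Q_1|\ge C-\Delta>2d$ on shifted windows; same-side anchor--data variation makes every later gap step cost more than $2d$ on canonical windows; and the diagonal telescopes to $2\|p-q\|_1$. One small correction to your side remark: in a canonical window a gap at $i=1$ would cost $C+\nu+\lambda>2d$ anyway, so the infinite boundary is really needed for the shifted-window barrier, where $x_1$ may be a small data value, and not for the canonical case.
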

\begin{proof}[Proof sketch]
We use the initial-barrier and nonnegative-cost rules of Lemma~\ref{fw:witness}, checking the TWED increments directly. An adjacent anchor--data pair has variation at least $C-\Delta=2d+1$; the boundary pair $c_{2d+1},c_1$ has variation $2dC$. Thus every horizontal or vertical transition after the first step costs more than $2d$. The infinite boundary states force that first step to be $(0,0)\to(1,1)$, so a threshold-feasible canonical chain must be completely diagonal.

On this diagonal the time penalties vanish. Every data discrepancy appears once in the current term and once in the predecessor term; the virtual initial and final anchor discrepancies are zero. The cost is $2\|p-q\|_1=2d+4\langle u,v\rangle$, which gives the canonical Yes/No distinction. For every shifted window, the forced first step alone costs at least $|X_1-Q_1|\ge C-\Delta>2d$. Nonnegative later increments cannot remove this barrier. These inequalities replace ERP's alignment-column argument and cover all window phases.
\end{proof}

\subsection{MSM}
\label{em:msm}

For fixed $\lambda>0$, the one-dimensional MSM recurrence~\cite{stefan2012move} uses the closed interval $I(b,t)=[\min(b,t),\max(b,t)]$ and the cost
\begin{equation}
 \mathcal C_\lambda(a,b,t)=\lambda+
 \begin{cases}
 0,&a\in I(b,t),\\
 \min\{|a-b|,|a-t|\},&\text{otherwise}.
 \end{cases}
 \label{em:msm-cost}
\end{equation}
Here $M(i,j)$ compares query prefix $q_1,\ldots,q_i$ to window prefix $x_1,\ldots,x_j$.  Initialize
\begin{align*}
 M(1,1)&=|q_1-x_1|,\\
 M(i,1)&=M(i-1,1)+\mathcal C_\lambda(q_i,q_{i-1},x_1),\\
 M(1,j)&=M(1,j-1)+\mathcal C_\lambda(x_j,q_1,x_{j-1}),
\end{align*}
where the latter two equations apply for indices at least $2$. For $i,j\ge2$,
\begin{equation}
 M(i,j)=\min\begin{cases}
 M(i-1,j-1)+|q_i-x_j|,\\
 M(i-1,j)+\mathcal C_\lambda(q_i,q_{i-1},x_j),\\
 M(i,j-1)+\mathcal C_\lambda(x_j,q_i,x_{j-1}).
 \end{cases}
 \label{em:msm-dp}
\end{equation}
The output is $M(m,m)$.  The proof concerns predecessor chains of this standard DP, with diagonal Move and vertical/horizontal Split/Merge steps.  It does not assume a normal form for arbitrary edit-operation scripts.  Although the formulas and argument also make sense for $\lambda=0$, we reserve ``MSM metric'' for $\lambda>0$.

Set $C=d+\Delta+\lambda+3$ in \eqref{em:block}, and use threshold $d$.

\begin{lemma}[MSM all-window separation]
\label{em:msm-separation}
Every canonical predecessor chain of cost at most $d$ is the diagonal Move chain.  Furthermore,
\[
 \exists u\in U:\langle u,v\rangle=0
 \quad\Longleftrightarrow\quad
 \min_{X\in\operatorname{Win}_m(S)}M(X,Q)\le d.
\]
\end{lemma}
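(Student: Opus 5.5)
The plan follows the TWED template: an initial barrier for shifted windows, plus a first-departure argument for canonical windows. Every vertical or horizontal Split/Merge step costs at least $\lambda>0$, so all increments are nonnegative. Part~4 (initial barrier) of Lemma~\ref{fw:witness} therefore applies once the first cell is controlled.

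\emph{Shifted windows.} Every predecessor chain, including those through the boundary rows $M(i,1)$ and $M(1,j)$, starts at $M(1,1)=|q_1-x_1|$. For a noncanonical window, $x_1$ is either data in $[0,\Delta]$ or some $c_j$ with $j\ge2$, so $|q_1-x_1|\ge C-\Delta>d$. The window is rejected regardless of the two blocks it intersects.

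\emph{Canonical windows.} Take a chain of cost at most $d$ and suppose it is not diagonal. Let $(r,r)$ be the last diagonal state before its first Split/Merge step. The chain then enters $(r+1,r)$ or $(r,r+1)$, and this covers the boundary-row case $r=1$ too. Because the canonical window and query share the anchor/data parity, the step either inserts an element adjacent to a neighbor of the opposite kind, or its $\min$ term compares an anchor with a data value.

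The step I expect to be the main obstacle is this cost analysis, because MSM charges only $\lambda$ when $a\in I(b,t)$. Consider the vertical step $\mathcal C_\lambda(q_{r+1},q_r,x_r)$; at $r\ge2$ it comes from \eqref{em:msm-dp}, and at $r=1$ from the boundary formula for $M(i,1)$. Here $q_r$ and $x_r$ have the same kind and $q_{r+1}$ has the other kind.
\begin{itemize}
\item If $r$ is odd, then $q_r=x_r=c_{(r+1)/2}$ and $q_{r+1}\in[0,\Delta]$. The interval is a single point, so the cost is at least $\lambda+C-\Delta$.
\item If $r$ is even, then $q_r,x_r\in[0,\Delta]$ and $q_{r+1}$ is an anchor of value at least $C$. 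It lies outside the interval, and the cost is at least $\lambda+C-\Delta$.
\end{itemize}
The horizontal case is symmetric, including its $r=1$ boundary formula for $M(1,j)$. With $C=d+\Delta+\lambda+3$, this single step exceeds $d$. The earlier diagonal prefix and all later increments are nonnegative, so dominance gives total cost greater than $d$. This proves the first claim of the lemma: every canonical chain of cost at most $d$ is the diagonal Move chain.

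\emph{Both directions of the equivalence.}
\begin{itemize}
\item The diagonal Move chain costs $\sum|q_i-x_i|=\|P(u)-R(v)\|_1=d+2\langle u,v\rangle$ by \eqref{em:embedding}, since the anchors cancel. An orthogonal block therefore attains $d$, which gives the forward direction.
\item A nonorthogonal canonical block's only threshold-feasible candidate, the diagonal, costs at least $d+2$. Together with the shifted-window barrier, a No instance has no accepted window, which gives the reverse direction.
\end{itemize}

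\emph{Resources.} Sizes match the raw DTW construction. All values, including $C$ and the DP sums, are polynomial in $d$ plus multiples of the fixed exactly representable $\lambda$, so Proposition~\ref{fw:bits} applies. Proposition~\ref{fw:certificate} then completes the transfer.
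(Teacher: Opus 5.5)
Your proposal is correct and follows essentially the paper's proof. It uses a first-departure argument from $(r,r)$, showing in each parity and direction (including the initialized first row and column) that the new element lies outside the interval of the two old elements, so the nonconstant branch of $\mathcal C_\lambda$ charges at least $\lambda+C-\Delta>d$; it then adds the mandatory initialization barrier $|Q_1-X_1|\ge C-\Delta>d$ for shifted windows. Your closing resource paragraph belongs to Theorem~\ref{em:main} rather than to this lemma, but it is harmless.
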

\begin{proof}[Proof sketch]
Use the first-departure organization of Lemma~\ref{lem:rt-barrier}, now with the actual Split/Merge cost~\eqref{em:msm-cost}. At the first departure from $(r,r)$, if $r$ is odd then the two old elements are the same anchor and the new element is data below both. If $r$ is even, both old elements are data and the new element is an anchor above both. In either parity, for either a horizontal or vertical departure, the new element lies outside the interval of the two old elements. The nonconstant branch of $\mathcal C_\lambda$ therefore charges at least $\lambda+C-\Delta>d$. This includes departures on the initialized first row or column. All other increments are nonnegative.

Consequently every threshold-feasible canonical chain is diagonal, with cost $\|p-q\|_1=d+2\langle u,v\rangle$. A shifted window instead has mandatory initialization $|Q_1-X_1|\ge C-\Delta>d$, which no subsequent operation can cancel. These two cases prove the all-window equivalence. The argument does not require an optimal No-instance chain to be diagonal.
\end{proof}

\subsection{Indexing lower bounds}
\label{em:index}

\begin{theorem}[Exact indexing for the three editing measures]
\label{em:main}
Fix ERP gap $g=0$, the TWED slice above with $\nu,\lambda\ge0$, or MSM with $\lambda>0$.  Parameters are independent of input size and exactly implementable in the declared Word-RAM model; fixed rational parameters suffice.  Under deterministic SETH, for every fixed $\beta\ge1$ and $\delta\in(0,1)$, there is no uniform exact static index with preprocessing $O(N^\beta)$ and worst-case query time $O(N^{1-\delta})$ on a domain covering the corresponding hard lengths. Under randomized SETH, the same statement holds for randomized indexes with any fixed per-input error probability $p<1/2$.
\end{theorem}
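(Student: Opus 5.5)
The plan is to instantiate the certificate of Proposition~\ref{fw:certificate} once for each of the three measures and then invoke Lemma~\ref{lem:rt-transfer}. The all-window equivalences are already supplied by Lemmas~\ref{em:erp-separation}, \ref{em:twed-separation}, and~\ref{em:msm-separation}. What remains is to verify the resource, separability, uniformity, and numerical conditions of Definition~\ref{def:rt-reduction}, and to check that the hard length $m=4d+1$ is covered by the declared domain.

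\textbf{Step 1: separability and size.} The data encoder maps $U$ to $S=\Phi_C(P(u^{(1)}))\circ\cdots\circ\Phi_C(P(u^{(n)}))$. The scale $C$ is $d+3$, $2d+3$, or $d+\Delta+\lambda+3$, so it depends only on $d$ and the fixed parameters. The query encoder maps $v$ to $(\Phi_C(R(v)),\tau)$ with $\tau=d$ or $2d$. Neither encoder inspects the other side. We have $N=n(4d+1)=n^{1+o(1)}$ with construction time $O(nd)$, and $m=4d+1=O(\log n)$ with construction time $O(d)=n^{o(1)}$. Both bounds are uniform in $d$, as Lemma~\ref{lem:rt-transfer} requires.

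\textbf{Step 2: arithmetic legality.} This is the step that needs the most care, because the MSM penalty $\lambda$ and the TWED parameters $\nu,\lambda$ are rationals rather than integers. Fix representations $\lambda=\lambda_1/\lambda_2$ and $\nu=\nu_1/\nu_2$ with constant-size numerators and denominators. Every local cost in \eqref{em:erp-dp}, \eqref{em:twed-dp}, and \eqref{em:msm-cost} is then a rational with a common bounded denominator (for TWED, $\lambda_2\nu_2$). Its numerator is polynomial in $d$, since the anchors are at most $(2d+1)C=O(d^2)$ and $|i-j|\le m$. Each predecessor chain has at most $2m$ steps, so every DP value and threshold comparison has polynomial magnitude in $d$. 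Corollary~\ref{fw:algebraic-closure} and Proposition~\ref{fw:bits} then give $O(\log d)$-bit quantities with $O(1)$-word primitives. I would also note that MSM's interval test and minimum are comparisons, and that the ERP initialization sums are polynomial.

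\textbf{Step 3: transfer.} With the equivalences from the three separation lemmas, Proposition~\ref{fw:certificate} and Lemma~\ref{lem:rt-transfer} yield the deterministic exclusion for each fixed $(\beta,\delta)$, and the randomized exclusion with unchanged per-input error $p$. Since the parameters are fixed independently of $n$, Proposition~\ref{fw:elementary}(2) extends hardness to a uniform index that receives $g$, $\nu$, or $\lambda$ as input and whose domain contains the slice.

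One point needs care here. The ERP statement is for $g=0$; the translated fixed-$g$ variant is only a remark and need not be invoked. The TWED lemma is stated only as a sketch. I would make explicit the one inequality it uses: each non-diagonal TWED transition after $(0,0)\to(1,1)$ costs at least $C-\Delta=2d+1>2d$, including transitions across the $c_{2d+1}$ boundary inside a shifted window. Because the infinite boundary states and nonnegativity apply to every window, the initial-barrier rule of Lemma~\ref{fw:witness} then closes every offset.
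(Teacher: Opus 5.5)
Your proposal is correct and follows essentially the same route as the paper's proof. Both apply Lemma~\ref{lem:rt-transfer} to the three all-window certificates after checking the same resource conditions: query-independent encoders with $N=n(4d+1)$, $m=4d+1$, and construction times $O(nd)$ and $O(d)$, together with a common constant denominator that makes all inputs, thresholds, and DP values polynomial-in-$d$ integers. The paper also notes that TWED's $+\infty$ is only a DP sentinel; your extra remarks on the explicit TWED transition bound and on Proposition~\ref{fw:elementary}(2) are consistent with the paper but not required by its proof.
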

\begin{proof}[Proof sketch]
Apply Lemma~\ref{lem:rt-transfer} once to the three preceding all-window certificates. Their data and query maps are independent, have construction times $O(nd)$ and $O(d)$, and give $N=n(4d+1)$ and $m=4d+1$. The scales are $O(d)$ and the largest anchors $O(d^2)$, with constants depending only on fixed parameters. For fixed rational parameters, a common constant denominator makes all input values, thresholds and finite DP values polynomial-in-$d$ integers. The TWED symbol $+\infty$ is a DP sentinel. More general fixed representations must satisfy the same exact-operation premise. These facts establish the common resource conditions, so the deterministic and randomized conclusions follow with the dimension coefficient selected after the target exponents and error bound.
\end{proof}

\section{Derivative DTW and CID}
\label{sec:dc:derivative}

\subsection{Endpoint-trimmed derivative DTW}\label{dc:definition}
For a physical sequence $X=(x_1,\ldots,x_M)$, use the internal Keogh derivative estimator
\[
\mathcal D(X)_i=\frac{x_{i+2}+2x_{i+1}-3x_i}{4},
\qquad 1\le i\le M-2.
\]
Our cDDTW is the endpoint-to-endpoint, absolute-local-cost DTW of these two derivative sequences, with fixed integer band $W\ge0$ or with no band. Endpoints are trimmed. This is a precisely specified variant of the derivative approach of Keogh and Pazzani \cite{keogh2001derivative}; it does not identify the present score with the squared-loss, endpoint-replication convention of their original implementation. Differentiation is performed independently on each candidate physical window.

Use the raw block $Z=B_C(P(u))$ from Lemma~\ref{lem:raw-embedding}, where $\Delta=2$, $C=d+3$, and $m=4d+1$. A physical realization $I(Z)$ has length $M=m+2$ and is constructed as follows. Starting with $x_1=x_2=0$, set
\begin{equation}\label{dc:inverse}
x_{k+2}=-2x_{k+1}+3x_k+4z_k\quad(1\le k\le m).
\end{equation}
Thus $\mathcal D(I(Z))=Z$ exactly. Construct $S=I(B_C(P(u_1)))\circ\cdots\circ I(B_C(P(u_n)))$ and $Q=I(B_C(R(v)))$; the inverse recurrence restarts with two zeros in every block. No extra separators are inserted.

\begin{lemma}[Inverse-recurrence bit bound]\label{dc:bits}
The physical coordinates and the intermediates in~\eqref{dc:inverse} are integers of $O(d)$ bits and are generated in $O(d)$ Word-RAM time per block when $d=\Theta(\log n)$.
\end{lemma}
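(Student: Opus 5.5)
We bound the recurrence~\eqref{dc:inverse} by a linear-algebra estimate and then read off bit length and time. Write the state vector $y_k=(x_{k+1},x_k)^\top$, so that
\[
 y_{k+1}=\begin{pmatrix}-2&3\\1&0\end{pmatrix}y_k+\begin{pmatrix}4z_k\\0\end{pmatrix},\qquad y_1=(0,0)^\top .
\]
The characteristic polynomial $\lambda^2+2\lambda-3=(\lambda+3)(\lambda-1)$ has roots $1$ and $-3$. The homogeneous solutions therefore grow at most like $3^k$, and no cancellation assumption is needed.

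For a direct, representation-free bound, we prove by induction that $|x_k|\le 4Z_{\max}\,3^{k}$, where $Z_{\max}=\max_k|z_k|$. Suppose $a_k=\max(|x_k|,|x_{k+1}|)$. Then
\[
 |x_{k+2}|\le 2|x_{k+1}|+3|x_k|+4|z_k|\le 5a_k+4Z_{\max}.
\]
This gives the cruder but sufficient bound $a_{k+1}\le 5a_k+4Z_{\max}$, hence $a_k\le 4Z_{\max}\,5^{k}$. The exact base is irrelevant.

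Now insert the parameters. The block entries satisfy $0\le z_k\le(2d+1)C=(2d+1)(d+3)=O(d^2)$, and $k\le m=4d+1$. So every coordinate satisfies
\[
 |x_k|\le O(d^2)\,5^{4d+3}=2^{O(d)}.
\]
The intermediates in one step are $-2x_{k+1}$, $3x_k$, $4z_k$, and their partial sums. Each is at most a constant multiple of the same bound, so it also has $2^{O(d)}$ magnitude.

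Integrality is immediate: the initial values are $0$, the coefficients are integers, and the $z_k$ are integers. By Proposition~\ref{fw:bits}, with $a$ a fixed constant and $d=\Theta(\log n)=\Theta(\log N)$, every quantity occupies $O(d)=O(\log N)$ bits, that is, $O(1)$ words.

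For time, each block requires $m=O(d)$ recurrence steps. Each step performs a constant number of multiplications by small constants, additions and subtractions on $O(1)$-word integers. Multi-word addition and multiplication by a constant take $O(1)$ word operations per word, so each step costs $O(1)$ time and each block costs $O(d)$ time in total. The query block is handled identically. Because the recurrence restarts with two zeros in every block, the bound is uniform over blocks and does not accumulate across the concatenation.

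The main point needing care is to make sure that the $2^{O(d)}$ growth, rather than polynomial growth, is still only $O(1)$ words. This holds because the exponent constant is fixed independently of $c$ only up to the factor $\log_2 5\cdot 4$. With $d=\lceil c\log_2 n\rceil$, the bit length is about $8c\log n$, which is $O(\log N)$ for fixed $c$ but may span several words. The primitive operations are therefore multi-word operations of constant word count, which matches the ``constant number of words'' caveat following Proposition~\ref{fw:bits}.
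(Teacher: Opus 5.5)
Your proof is correct and is essentially the paper's argument. You use the crude one-step estimate $a_{k+1}\le 5a_k+4Z_{\max}$ from a zero initial state to get $2^{O(d)}$ magnitudes, hence $O(1)$ words for the fixed coefficient $c$ and $O(1)$ word operations over each of the $m=O(d)$ steps; the only blemishes are cosmetic. You announce an induction for $|x_k|\le 4Z_{\max}3^k$ but actually prove the base-$5$ bound, which suffices, and the resulting bit length is about $4\log_2 5\cdot c\log_2 n$ rather than $8c\log n$.
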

\begin{proof}
Let $B=(2d+1)C$ bound every desired derivative and put $Y_k=\max_{r\le k+1}|x_r|$. Then $Y_1=0$ and $Y_{k+1}\le5Y_k+4B$, whence $Y_k\le B(5^{k-1}-1)$. All coordinates and recurrence intermediates have magnitude at most $O(B5^M)=2^{O(d)}$. A constant number of $\Theta(\log N)$-bit words suffices per coordinate; the constant may depend on the already fixed dimension coefficient. The recurrence takes $O(d)$ word operations.
\end{proof}

\begin{lemma}[The two boundary derivatives]\label{dc:boundary}
At a boundary from a completed physical block to the next two zeros, the only derivative values differing from internal prescribed values are
\[
b_1=(2x_M-3x_{M-1})/4,\qquad b_2=-3x_M/4.
\]
They satisfy $\min\{|b_1|,|b_2|\}=\Omega(C3^{4d})$; in particular both have magnitude greater than $B+d$ for all sufficiently large $d$.
\end{lemma}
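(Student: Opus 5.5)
The plan is to compute the two exceptional derivatives directly from the Keogh formula, and then bound them through the first-difference form of the inverse recurrence~\eqref{dc:inverse}.

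\emph{Identifying the boundary values.} Let a completed block be $x_1,\ldots,x_M$ and let the next block start $y_1=y_2=0$, $y_3=4z'_1,\ldots$. By construction $\mathcal D(I(Z))_k=z_k$ for $k\le m=M-2$; these windows stay inside one block. The derivative at position $M-1$ of the concatenation reads $(x_{M-1},x_M,y_1)$, giving $(0+2x_M-3x_{M-1})/4=b_1$. The derivative at position $M$ reads $(x_M,y_1,y_2)$, giving $-3x_M/4=b_2$. The derivative at position $M+1$ reads $(0,0,y_3)$ and equals $z'_1$, which is prescribed. This shows that $b_1,b_2$ are the only non-prescribed values.

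\emph{Difference recurrence.} The characteristic polynomial $r^2+2r-3=(r-1)(r+3)$ suggests setting $y_k:=x_{k+1}-x_k$. Then~\eqref{dc:inverse} becomes $y_{k+1}=-3y_k+4z_k$ with $y_1=0$, hence
\[
y_{k+1}=4\sum_{j=1}^{k}(-3)^{k-j}z_j .
\]
In $Z=B_C(P(u))$, odd positions $j=2i-1$ carry anchors $iC>0$ and even positions carry data in $[0,2]$. All anchor terms therefore have the common sign $(-1)^{k-1}$ and reinforce one another. The data terms have the opposite sign, and their total magnitude is at most $2\sum_{\ell\ge1}3^{k-2\ell}\le 3^{k-1}/4\cdot 3$, which is negligible against the single anchor term $C3^{k-1}$ since $C=d+3$. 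This yields two-sided estimates of the form
\[
c_1C3^{k-1}\le|y_{k+1}|\le c_2 dC\,3^{k-1},
\]
with consecutive $y_k$ alternating in sign. The upper constant should be refined to a geometric-sum bound $\sum_i iC3^{k-2i+1}\le \tfrac98C3^{k-1}$. This refinement matters: the dominance argument below requires the ratio $|y_{k+1}|/|y_k|$ to be close to $3$, not merely bounded by $O(d)$.

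\emph{Main obstacle: controlling the telescoping sums.} We have $x_M=\sum_{k<M}y_k$ and $x_{M-1}=\sum_{k<M-1}y_k$, which are alternating sums. The plan is to show $|y_{k+1}|\ge 2|y_k|$, say, by comparing the two explicit sums; the anchor term dominates both, with ratio near $3$. An alternating series with ratio at least $2$ then satisfies
\[
|x_M|\ge|y_{M-1}|-|x_{M-1}|,\qquad |x_{M-1}|\le|y_{M-2}|\le|y_{M-1}|/2 .
\]
This gives $|b_2|\ge\tfrac38|y_{M-1}|$. For $b_1=(2y_{M-1}-x_{M-1})/4$ we get $|b_1|\ge(2-\tfrac12)|y_{M-1}|/4$. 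With $M-1=4d+2$, both are $\Omega(C3^{4d})$. Finally, $B+d=(2d+1)C+d$ is only polynomial in $d$, so both values exceed it for all sufficiently large $d$.

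The delicate point is tuning the ratio bound $|y_{k+1}|\ge2|y_k|$ uniformly over all bit patterns, including the effect of small data values. I would verify this bound first, by inequalities on the explicit sum.
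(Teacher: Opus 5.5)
Your proposal is correct, but it controls $x_M$ and $x_{M-1}$ by a different route from the paper. The first half matches the paper. Your $y_k$ is the paper's $w_k$, and the closed form with the anchor/data sign split gives $|y_{m+1}|=\Omega(C3^{m-1})$ just as the paper bounds $w$. Your identification of $b_1,b_2$ as the only non-prescribed values is also right, and more explicit than the paper.

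The paper then avoids any per-step analysis by using the other characteristic root, $r=1$. Summing $4z_k=w_{k+1}+3w_k$ telescopes to $x_M+3x_{M-1}=4Z_\Sigma$. Hence $x_M=Z_\Sigma+3w/4$, which gives the exact expressions $b_1=9w/16-Z_\Sigma/4$ and $|b_2|=9w/16+3Z_\Sigma/4$. Since $Z_\Sigma$ is polynomial in $d$, both magnitudes are at least $w/2$. The paper therefore needs only the single lower bound on $w$. Your alternating-series argument instead needs sign alternation and growth of $|y_k|$ at every step, and it yields the constant $\tfrac38$ in place of $\tfrac12$. Both constants suffice for $\Omega(C3^{4d})$ and for exceeding $B+d$.

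The step you left open is easy, but the explicit-sum comparison is not the right tool for it. Induct instead on $y_{k+1}=-3y_k+4z_k$, starting from $y_2=4C$:
\begin{itemize}
\item If $z_k$ is an anchor, it has the same sign as $-3y_k$, so $|y_{k+1}|>3|y_k|$.
\item If $z_k\in[0,2]$ is data, then $|y_{k+1}|\ge3|y_k|-8\ge(3-2/C)|y_k|$, because $|y_k|\ge4C$.
\end{itemize}
This proves alternation and $|y_{k+1}|\ge2|y_k|$ for every $d\ge1$ at once.

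Your refined constant is also wrong. Because the anchors grow as $iC$, the correct sum is $\sum_{i\ge1}i\,9^{1-i}=81/64$, not $9/8$. For example, at $k=3$ the anchor sum is $11C>\tfrac98\cdot9C$. With $81/64$, the explicit-sum comparison still gives a ratio of at least $48(4C-3)/(81C)>2$ once $C\ge5$. The slip is therefore harmless for sufficiently large $d$, and the one-step induction makes it unnecessary.

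Finally, you use $y$ both for the next block's coordinates and for the differences; rename one of them.
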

\begin{proof}
Write $w_k=x_{k+1}-x_k$. Equation~\eqref{dc:inverse} gives $w_{k+1}=-3w_k+4z_k$ and $w_1=0$. Since $m$ is odd, all anchor contributions to $w=w_{m+1}$ are positive, and all data contributions are nonpositive. Keeping the first anchor and bounding every negative data term gives
\begin{align*}
w&=4\sum_{j=1}^{m}(-3)^{m-j}z_j\\
&>\left(4C-\frac{3\Delta}{2}\right)3^{m-1}
>\frac52 C3^{m-1}.
\end{align*}
Indeed, the negative terms have magnitude at most $4\Delta\sum_{r\ge1}3^{m-2r}=(3\Delta/2)3^{m-1}$. Let $Z_\Sigma=\sum_{j=1}^m z_j$, a nonnegative polynomial in $d$. Summing $4z_k=w_{k+1}+3w_k$ gives $x_M=Z_\Sigma+3w/4$ and $x_{M-1}=x_M-w$. Consequently
\[
b_1=9w/16-Z_\Sigma/4,\qquad
|b_2|=9w/16+3Z_\Sigma/4.
\]
For sufficiently large $d$, $Z_\Sigma\le w/4$, so both magnitudes are at least $w/2>B+d$. Discarding finitely many small dimensions does not affect the indexing reduction.
\end{proof}

\begin{theorem}[All physical windows for cDDTW]\label{dc:cddtw}
For the stated derivative variant, some length-$M$ window of $S$ has score at most $d$ against $Q$ if and only if some $u\in A$ is orthogonal to $v$. Hence the deterministic and randomized conditional indexing bounds of Lemma~\ref{lem:rt-transfer} hold, with hard lengths $M=4d+3$.
\end{theorem}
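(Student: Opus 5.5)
The plan is to reduce every physical window of length $M=m+2$ to the raw DTW argument of Theorem~\ref{thm:rt-raw}, applied to derivative sequences. The first step is to describe the derivative sequence of an arbitrary length-$M$ window. Since $\mathcal D$ is a three-point local operator, $\mathcal D(X)_i$ depends only on $x_i,x_{i+1},x_{i+2}$. A window starting at physical position $s$ of the concatenation $S$ therefore has derivative sequence equal to a length-$m$ window of the ``global derivative stream'' of $S$. That stream is $Z_1,b_1,b_2,Z_2,b_1',b_2',\ldots$, where each $Z_i=B_C(P(u_i))$ is obtained exactly by~\eqref{dc:inverse}, and the $b$'s are the two boundary derivatives of Lemma~\ref{dc:boundary}. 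This relies on the fact that the recurrence restarts with two zeros in every block, so internal triples of a block reproduce $Z_i$ exactly. The query derivative is exactly $B_C(R(v))$. I would check the index bookkeeping carefully: a physical block of length $m+2$ contributes $m$ internal derivatives plus two straddling triples.

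Completeness and the canonical case are then immediate. For an aligned physical block, the derivative window is exactly $Z_i$, so the DTW score is the raw DTW score of Theorem~\ref{thm:rt-raw} with $C=d+3$ and $\tau=d$. The diagonal gives $d$ when $u_i$ is orthogonal to $v$. By Lemma~\ref{lem:rt-barrier}, every threshold-feasible path is diagonal and costs $d+2\langle u_i,v\rangle$.

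For soundness on shifted windows I would use the first-cell barrier, case-split on the first derivative value $y_1$ of the window. The first case is that $y_1$ is an internal prescribed value. Then it is either data in $[0,2]$ or an anchor $jC$ with $j\ge2$, and $|y_1-C|\ge C-2=d+1>d$, exactly as in Lemma~\ref{lem:rt-barrier}. The second case is $y_1\in\{b_1,b_2\}$. Then Lemma~\ref{dc:boundary} gives $|y_1-C|\ge|y_1|-C>B+d-C\ge d$, since $B=(2d+1)C\ge C$. In either case the mandatory cell $(1,1)$ already exceeds $d$, and nonnegative summation, together with any band restriction as a subset of paths, rejects the window. I would also note that windows containing a boundary value elsewhere but starting inside a block are covered by the first case. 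This gives the all-window equivalence for both banded and unrestricted cDDTW. Finitely many small $d$ are discarded, as Lemma~\ref{dc:boundary} permits.

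The final step is the resource check for Lemma~\ref{lem:rt-transfer}. The target has length $N=n(4d+3)$ and is built in $O(nd)$ time; the query takes $O(d)$ time. By Lemma~\ref{dc:bits}, all coordinates are $O(d)$-bit integers. Derivatives are quarter-integers, so multiplying all costs and the threshold by $4$ keeps DP values as integers of magnitude $2^{O(d)}$ over at most $2m-1$ cells. Proposition~\ref{fw:bits} and Corollary~\ref{fw:algebraic-closure} then give constant-word arithmetic. I expect the main obstacle to be the first step: getting the off-by-one alignment between physical offsets and derivative indices exactly right. In particular, every window start must yield a derivative window that begins at either an internal value or one of $b_1,b_2$, and never at some other hybrid triple.
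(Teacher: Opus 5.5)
Your proof is correct. The canonical case and the resource accounting match the paper's; the difference is in how shifted windows are rejected.

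The paper observes that every noncanonical physical window still retains at least one of the boundary derivatives $b_1,b_2$ after trimming. It checks separately the two extreme positions, where only $b_2$ or only $b_1$ survives. It then argues that every endpoint-to-endpoint path consumes every window derivative, so some cell pairs that extreme value with a query derivative of magnitude at most $B$. That cell costs more than $d$ by the full bound $|b_i|>B+d$ of Lemma~\ref{dc:boundary}.

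You instead identify each window's derivative sequence with a length-$m$ window of the periodic stream $Z_1,b_1,b_2,Z_2,\ldots$ and inspect only the mandatory cell $(1,1)$. For all but two of the $m+1$ noncanonical phases, the window starts at some $Z_i[t+1]$ with $t\ge1$, so the raw first-cell argument of Lemma~\ref{lem:rt-barrier} applies unchanged. The boundary derivatives matter only when the window starts at $b_1$ or $b_2$, and there you need only $|b_i|>C+d$.

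Your route stays closer to the raw DTW template and uses a weaker magnitude bound, but it depends on the exact first value of every shifted derivative window and on the query starting with the smallest anchor $C$. The paper's route is insensitive to both. It needs only that trimming never removes both boundary values and that all query derivatives are bounded by $B$.
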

\begin{proof}
For a canonical physical window, differentiation recovers the raw block. Orthogonality gives a diagonal cost $d$. In a No instance its diagonal cost is $d+2\langle u,v\rangle\ge d+2$; any other path has cost at least $C-2=d+1$ by Lemma~\ref{lem:rt-barrier}.

Every noncanonical physical window crosses one block boundary. If the boundary is between positions $1$ and $2$ of this window, trimming retains $b_2$ as an internal derivative; if it is between $M-1$ and $M$, trimming retains $b_1$. Every other boundary position retains both. Thus an extreme derivative survives every shift, including those nearest the two endpoints. Every endpoint-to-endpoint path consumes this derivative, whereas every query derivative has absolute value at most $B$. Lemma~\ref{dc:boundary} forces a local cost greater than $d$, regardless of the bandwidth.

The data length is $N=nM$, and construction times are $O(nd)$ and $O(d)$. Derivatives of arbitrary integer physical windows lie in $\tfrac14\mathbb Z$; they need not be integers. Multiply local costs, sums, and the threshold by four to perform exact integer DP. Lemma~\ref{dc:bits} bounds these quantities by $2^{O(d)}$. Both resource and all-window requirements of the transfer lemma hold.
\end{proof}

\subsection{Raw CID}
\label{sec:dc:cid}
For positive-complexity sequences, define
\begin{align*}
\operatorname{CE}(X)^2&=\sum_{i=1}^{m-1}(x_{i+1}-x_i)^2,\\
\operatorname{CF}(X,Q)&=\frac{\max\{\operatorname{CE}(X),\operatorname{CE}(Q)\}}
{\min\{\operatorname{CE}(X),\operatorname{CE}(Q)\}},\\
D_{\rm CID}(X,Q)&=\|X-Q\|_2\operatorname{CF}(X,Q).
\end{align*}
We analyze this raw formula \cite{batista2011cid}, without independent window-wise Z-normalization. Its extension to zero-complexity inputs requires a convention not used here. Comparing $D_{\rm CID}^2$ to a rational $T\ge0$ is equivalent to comparing $D_{\rm CID}$ to $\sqrt T$. The squared comparison uses integer sums and cross multiplication, so it needs no square-root primitive.

Use a different Boolean embedding:
\[
U_0=(1,1),\quad U_1=(2,2),\qquad
V_0=(2,1),\quad V_1=(0,1).
\]
The four squared distances are $1,1,1,5$, respectively. Concatenating these pairs gives $\|U(u)-V(v)\|_2^2=d+4\langle u,v\rangle$.

This identity uses only cross-family pairs. For example the conflicting Boolean pair has squared cost $(2-0)^2+(2-1)^2=5$, whereas each other pair has squared cost one. No same-family isometry is needed. All data coordinates lie in $[0,\Delta]$ for $\Delta=2$. Form raw anchored blocks $B_C(U(u))$ and $B_C(V(v))$ with $C=d^2+10\Delta+10=d^2+30$. Their common length is $4d+1$.

\begin{lemma}[Complexity-factor control]\label{dc:cf}
For every pair of canonical blocks, $\operatorname{CF}^2<1+1/d$.
\end{lemma}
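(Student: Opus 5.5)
The plan is to compute $\operatorname{CE}^2$ of a canonical block exactly up to a small data-dependent correction, and then bound the ratio of the two complexities. A canonical block is
\[
B_C(z)=(C,z_1,2C,z_2,\ldots,2dC,z_{2d},(2d+1)C),
\]
with $z\in[0,2]^{2d}$. Its consecutive differences are $z_k-kC$ and $(k+1)C-z_k$ for $k=1,\ldots,2d$. So the squared complexity is
\[
\operatorname{CE}(B_C(z))^2=\sum_{k=1}^{2d}\bigl[(kC-z_k)^2+((k+1)C-z_k)^2\bigr].
\]
Expanding each bracket gives
\[
K(C)-2C\sum_k(2k+1)z_k+2\sum_k z_k^2,
\]
where $K(C)=C^2\sum_{k=1}^{2d}(k^2+(k+1)^2)$ is common to all blocks. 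The point is that $K(C)=\Theta(C^2d^3)$, while the data-dependent part is only $O(Cd^2)$.

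For two canonical blocks with data $z,z'$ (either family, so the claim covers data--query and any pair), write $\operatorname{CF}^2=\max/\min$ of the two squared complexities. Then
\[
\operatorname{CF}^2-1=\frac{|\operatorname{CE}(X)^2-\operatorname{CE}(Q)^2|}{\min\{\operatorname{CE}(X)^2,\operatorname{CE}(Q)^2\}}.
\]
The numerator is at most $2C\sum_k(2k+1)\Delta+2\sum_k\Delta^2$. With $\Delta=2$ and $\sum_{k\le 2d}(2k+1)=4d^2+4d$, this is $O(Cd^2)$, explicitly at most $16C(d^2+d)+16d$. For the denominator, each difference is at least $C-\Delta$ in absolute value, and the $k$th pair contributes at least $((k+1)C-2)^2+(kC-2)^2\ge (kC)^2$ roughly. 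The cleaner route is to use the expansion: the minimum is at least $K(C)-2C\Delta(4d^2+4d)$. Since $K(C)\ge C^2\cdot\tfrac{16}{3}d^3$ (as $\sum_{k=1}^{2d}k^2\ge \tfrac{8}{3}d^3$, doubled), this is positive.

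Then I will verify
\[
d\cdot\bigl(16C(d^2+d)+16d\bigr)<K(C)-4C(4d^2+4d).
\]
Dividing by $C$, the left side is about $16d^3+16d^2+16d^2/C$. The right side is about $C\cdot\tfrac{16}{3}d^3-16d^2-16d$. With $C=d^2+30\ge 30$, the right side dominates by a factor growing with $C$, for every $d\ge1$. Here $C\ge30$ already gives roughly $160d^3$ versus $16d^3+O(d^2)$. I will check small $d$ by the explicit constants rather than asymptotics, using $C\ge 31$.

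The main obstacle is only keeping the inequality strict and valid for all $d\ge1$ rather than just asymptotically. I expect the chosen $C=d^2+30$ to leave ample slack. The reason is that the bound needs only $C\gtrsim 3+O(1/d)$, while the quadratic growth of $C$ is presumably reserved for the later CID gap argument, not this lemma.
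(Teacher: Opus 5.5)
Your proposal is correct and follows essentially the same route as the paper: expand $\operatorname{CE}^2$ into a common term $K(C)=S_C>\frac{16}{3}d^3C^2$ minus a data-dependent term of size $O(Cd^2)$, then check that $d$ times the spread between two blocks is below the smallest complexity square. The paper writes the final step as $S_C/(S_C-\delta_{\max})<1+1/d$, using the monotonicity of each data summand to get $0\le\delta<\delta_{\max}$ and needing only $C>6\Delta$; your triangle-inequality bound on the difference is slightly looser but still leaves ample slack with $C=d^2+30$.
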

\begin{proof}
With $c_k=kC$, expansion of adjacent squared differences gives
\begin{align*}
\operatorname{CE}(B_C(z))^2&=S_C-\delta(z),\\
S_C&=C^2\sum_{k=1}^{2d}(2k^2+2k+1)
>\frac{16}{3}d^3C^2,\\
\delta(z)&=\sum_{k=1}^{2d}\bigl(2z_k(c_k+c_{k+1})-2z_k^2\bigr).
\end{align*}
Each summand in $\delta$ is nonnegative and increasing on $[0,\Delta]$, since $c_k+c_{k+1}>2\Delta$. Therefore
\[
0\le\delta(z)<8d(d+1)C\Delta=:\delta_{\max}.
\]
Moreover,
\begin{align*}
(d+1)\delta_{\max}
&=8d(d+1)^2C\Delta\\
&\le32d^3C\Delta
<\frac{16}{3}d^3C^2<S_C,
\end{align*}
where $C>6\Delta$. In particular $S_C-\delta_{\max}>0$, and the ratio of any two canonical complexity squares is bounded by $S_C/(S_C-\delta_{\max})<1+1/d$.
\end{proof}

\begin{theorem}[Raw CID separation]\label{dc:cid}
If an orthogonal pair exists, a canonical window satisfies $D_{\rm CID}^2<d+1$. In a No instance every window satisfies $D_{\rm CID}^2\ge d+4$. Thus exact squared CID decision has the conditional indexing bounds of Lemma~\ref{lem:rt-transfer}.
\end{theorem}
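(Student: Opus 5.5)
The proof has three parts: the Yes bound on a canonical window, the No bound on every window, and the transfer to indexing. Note first that CID is a lockstep score. There is no warping path; $\|X-Q\|_2$ compares positions pointwise. So no first-departure argument is needed, and only the canonical versus shifted dichotomy matters.

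\emph{Canonical windows.} On a canonical window the anchors cancel, and the identity $\|U(u)-V(v)\|_2^2=d+4\langle u,v\rangle$ gives $\|X-Q\|_2^2=d+4\langle u,v\rangle$. We then write $D_{\rm CID}^2=\|X-Q\|_2^2\,\operatorname{CF}^2$ and use $\operatorname{CF}\ge1$. For an orthogonal pair, Lemma~\ref{dc:cf} gives
\[
D_{\rm CID}^2<d(1+1/d)=d+1 .
\]
For a nonorthogonal canonical window, $D_{\rm CID}^2\ge\|X-Q\|_2^2\ge d+4$.

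\emph{Shifted windows.} Take a window at offset $0<t<m$ and drop the complexity factor, again since $\operatorname{CF}\ge1$. It then suffices to show $\|X-Q\|_2^2\ge d+4$; a single coordinate will supply this. At position $1$ we have $Q_1=C$, while $X_1$ is either a data value in $[0,\Delta]$ or an anchor $jC$ with $j\ge2$. In both cases $|X_1-Q_1|\ge C-\Delta=d^2+28$, so $(X_1-Q_1)^2\ge(d^2+28)^2>d+4$. This is the initial-barrier rule of Lemma~\ref{fw:witness}, applied with nonnegative summands, and it does not depend on which two blocks the window straddles.

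\emph{Positive complexity.} We must also check that every window has positive complexity, so that $\operatorname{CF}$ is defined on all physical windows, shifted ones included. Each window contains an anchor next to a data value, or two anchors across a block boundary, and these differ by at least $C-\Delta>0$. Hence $\operatorname{CE}>0$.

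\emph{Exact comparison and transfer.} The decision compares $D_{\rm CID}^2$ with a threshold $T$ chosen in $[d+1,d+4)$, for example $T=d+1$. Because Yes windows satisfy a strict inequality, the predicate is taken as $D_{\rm CID}^2\le T$. We compute it exactly as
\[
\|X-Q\|_2^2\,\max(\operatorname{CE}^2)\le T\,\min(\operatorname{CE}^2),
\]
with all quantities integers. The anchors are $O(d^3)$, so the squares and cross products are polynomial in $d$, and Proposition~\ref{fw:bits} gives $O(\log d)$ bits. The encoders are separable, with $N=n(4d+1)$ and construction times $O(nd)$ and $O(d)$. Lemma~\ref{lem:rt-transfer} then finishes the proof.

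The main obstacle is the strictness in the Yes bound, which depends on Lemma~\ref{dc:cf} holding strictly. Everything else is routine.
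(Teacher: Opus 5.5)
Your proposal is correct and follows essentially the same route as the paper. It uses Lemma~\ref{dc:cf} for the strict canonical Yes bound and $\operatorname{CF}\ge1$ with the embedding identity for nonorthogonal canonical windows. Shifted windows are handled by the first-coordinate bound $(C-\Delta)^2>d+4$, positive complexity is checked on all windows, and an integer cross-multiplied comparison at $T=d+1$ precedes the invocation of Lemma~\ref{lem:rt-transfer}.
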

\begin{proof}
Every canonical block has positive complexity because adjacent anchor and data values differ. Every shifted window crosses a boundary and contains its two adjacent distinct anchors $(2d+1)C,C$; it too has positive complexity. Thus all comparisons lie within the domain.

An orthogonal canonical pair has squared Euclidean distance $d$; Lemma~\ref{dc:cf} makes its squared CID strictly less than $d+1$. A nonorthogonal canonical pair has squared Euclidean distance at least $d+4$, and $\operatorname{CF}\ge1$. In a shifted window the first sample is either data in $[0,\Delta]$ or a later anchor $jC$, $j\ge2$, whereas $Q_1=C$. Its squared Euclidean distance is at least $(C-\Delta)^2>d+4$, again a lower bound on squared CID.

Take the rational squared threshold $T=d+1$. All coordinates, adjacent-difference sums, Euclidean sums, and cross products are polynomial in $d$, hence have $O(\log d)$ bits. Data and query construction take $O(nd)$ and $O(d)$ time, and $N=n(4d+1)$. The transfer lemma applies to these exact finite-word comparisons.
\end{proof}

\begin{corollary}[Certified CID score error]\label{dc:cid-rounding}
Suppose the squared score of every compared window is estimated with absolute error at most a uniform $\varepsilon<3/2$. Comparing the minimum estimated score to $d+5/2$ still separates the constructed Yes and No instances.
\end{corollary}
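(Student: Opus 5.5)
This is a direct application of Proposition~\ref{fw:precision}, using the squared-score gap from Theorem~\ref{dc:cid}. Put $a=d+1$ and $b=d+4$. The Yes bound $D_{\rm CID}^2<d+1$ and the No bound $D_{\rm CID}^2\ge d+4$ then give a gap $b-a=3$. The hypothesis $\varepsilon<3/2$ is exactly $b-a>2\varepsilon$. The midpoint $(a+b)/2=d+5/2$ is the stated comparison threshold.

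The comparison concerns the minimum estimated score, not a single window, so I would spell out the two sides rather than cite the proposition verbatim.

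\emph{Yes side.} Some canonical window $X^\ast$ has $D_{\rm CID}^2(X^\ast,Q)<d+1$. Its estimate is therefore below $d+1+\varepsilon<d+5/2$. Hence the minimum estimate over all windows is also strictly below $d+5/2$.

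\emph{No side.} Every physical window has $D_{\rm CID}^2\ge d+4$ by Theorem~\ref{dc:cid}, which covers both canonical and shifted offsets. Each estimate is therefore at least $d+4-\varepsilon>d+5/2$. The minimum over finitely many windows inherits this strict bound.

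The strictness on both sides comes from $\varepsilon<3/2$ together with the strict Yes inequality. So the answer ``minimum estimate $<d+5/2$'' coincides with the OnlineOV answer on every reduction instance.

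The only delicate point, and the one I expect to need care, is the scope of the error hypothesis. The bound $\varepsilon$ must hold uniformly for every compared window, including shifted windows that cross block boundaries. Otherwise the No-side minimum could be pulled below the threshold by a single bad estimate. I would state explicitly that the uniformity assumed in the corollary is exactly this all-window condition.

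Finally, I would note that $d+5/2$ is a rational with polynomial-in-$d$ numerator and constant denominator. By Proposition~\ref{fw:bits} the comparison is therefore a constant-word operation. This is a representability remark only: the corollary certifies separation for a score-estimating implementation and does not alter the exact indexing lower bound.
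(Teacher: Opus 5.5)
Your proposal is correct and follows essentially the same route as the paper: the Yes witness has estimated square below $d+1+\varepsilon<d+5/2$, and every No window has estimated square at least $d+4-\varepsilon>d+5/2$. Framing this through Proposition~\ref{fw:precision} and spelling out the all-window minimum adds detail but does not change the argument.
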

\begin{proof}
Every Yes witness has estimated square below $d+1+\varepsilon<d+5/2$. Every No window has estimated square at least $d+4-\varepsilon>d+5/2$.
\end{proof}
This is a sufficient error certificate, not a guarantee for arbitrary floating-point evaluation. The exact index theorem does not rely on such an evaluation strategy.

For completeness, its exact predicate can be written using only integer arithmetic. With $E=\|X-Q\|_2^2$, $A=\operatorname{CE}(X)^2>0$, and $B=\operatorname{CE}(Q)^2>0$, the squared score is $E\max(A,B)/\min(A,B)$. For a rational threshold $T=t/u$ with $u>0$, compare
\[
 uE\max(A,B)\ \le\ t\min(A,B).
\]
All factors and products are polynomial in $d$ on the construction, including on shifted windows. This supplies the finite-word predicate without computing either complexity square root.

\section{Discrete alignment scores}
\label{ds:section}

This section treats LCSS, EDR, Swale, and a specified Smith--Waterman-type time-series score. A large numerical anchor cannot by itself penalize a unit-cost mismatch or a freely skipped symbol. We instead use distinct phase alphabets, repeated anchors, alignment projection, and the reward lost by a truncated local alignment. All symbols below are encoded as distinct integers. For any fixed, exactly representable tolerance $\epsilon\in[0,1)$, the relation $|x-y|\le\epsilon$ therefore means equality of symbols. The convention at $\epsilon=0$ is explicitly non-strict. None of the three symbolic constructions applies window-wise Z-normalization.

\subsection{LCSS}
\label{ds:lcss}

The time-series LCSS formulation of \citet{vlachos2002discovering} also includes a temporal constraint. Our slice maximizes the length of a common subsequence with the above tolerance and \emph{no additional time band}. For positive tolerance, strict and non-strict amplitude tests agree on this integer encoding; the zero-tolerance convention is as stated above. We write this score as $L(X,Q)$.

For each $k\in[d]$, take a private alphabet $\{A_k,B_k,C_k,D_k,E_k\}$ and set
\begin{align*}
 P_k(0)&=(A_k,B_k),&P_k(1)&=(C_k,D_k),\\
 Q_k(0)&=(A_k,C_k),&Q_k(1)&=(B_k,E_k).
\end{align*}
The four pairs share exactly one symbol except the pair $(1,1)$, which shares none. Thus their local LCSS is $1-u_kv_k$. Take fresh, mutually distinct anchors $c_1,\ldots,c_{d+1}$ and $M=2d$. Define
\begin{align*}
 X(u)&=c_1^M P_1(u_1)c_2^M\cdots P_d(u_d)c_{d+1}^M,\\
 Q(v)&=c_1^M Q_1(v_1)c_2^M\cdots Q_d(v_d)c_{d+1}^M,\\
 m&=(d+1)M+2d,\qquad \tau_L=(d+1)M+d.
\end{align*}
The target is the concatenation of the $n$ data blocks $X(u)$.

\begin{lemma}[All-window LCSS separation]
\label{ds:lcss-separation}
For a canonical block,
\[
 L(X(u),Q(v))=\tau_L-\langle u,v\rangle.
\]
Every noncanonical length-$m$ window $Y$ satisfies $L(Y,Q(v))\le\tau_L-1$, independently of the data vectors. Consequently,
\begin{align*}
 &\exists u\in U:\ \langle u,v\rangle=0\\
 &\hspace{12mm}\Longleftrightarrow
 \max_{Y\in\operatorname{Win}_m(S)}L(Y,Q(v))\ge\tau_L.
\end{align*}
\end{lemma}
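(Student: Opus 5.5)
The plan is to exploit that every symbol is private to one \emph{phase}. The phases are the anchor runs $c_j^M$ for $j\in[d+1]$ and the gadgets $k\in[d]$, ordered $c_1<1<c_2<2<\cdots<d<c_{d+1}$. Both $X(u)$ and $Q(v)$ list their phases in this increasing order. With tolerance $\epsilon<1$ a match pairs equal integers, so every match pairs two symbols of the same phase. An ordered match is therefore a monotone sequence of same-phase pairs, and its size is at most the sum over phases of the matches realized inside that phase.

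\textbf{Canonical windows.} For the upper bound, anchor phase $j$ contributes at most $M$, its count in $Q(v)$. Gadget phase $k$ contributes at most the LCS of $P_k(u_k)$ and $Q_k(v_k)$. Inspecting the four pairs gives this as $1-u_kv_k$: the pairs share $A_k$, $B_k$, $C_k$, and nothing, respectively. Summing gives $L\le (d+1)M+d-\langle u,v\rangle=\tau_L-\langle u,v\rangle$. For the matching lower bound, I would take all $M$ anchor pairs in each run together with the one shared gadget symbol wherever $u_kv_k=0$. These pairs are consistent in order because phases occur in the same order on both sides.

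\textbf{Noncanonical windows.} A window at offset $0<t<m$ is the suffix $X(u)[t+1..m]$ followed by the prefix $X(u')[1..t]$. Let $h$ be the phase of its first symbol and $e$ the phase of its last symbol. Every suffix symbol has phase $\ge h$ and every prefix symbol has phase $\le e$, with $e\le h$: the prefix ends where the suffix began, so $e=h$ exactly when the cut falls inside one phase. The suffix precedes the prefix in the window, and $Q(v)$ has nondecreasing phases. Hence if an ordered match uses both parts, every suffix-matched phase is $\le$ every prefix-matched phase. This forces all matched phases to equal the single phase $p=h=e$, and the match has size at most $M=2d\le\tau_L-1$. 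Otherwise the match uses one part only. A suffix-only match cannot use the first symbol $c_1$ of the block, so the phase-capacity count for $c_1$ drops to $M-1$ whenever $t\ge1$. Applying the canonical per-phase bound (with $\langle u,v\rangle\ge0$) then gives at most $\tau_L-1$. A prefix-only match symmetrically cannot use the last $c_{d+1}$, giving at most $\tau_L-1$. The same holds for every pair of neighboring vectors.

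The all-window equivalence then follows. A Yes instance has an orthogonal canonical block with value $\tau_L$. In a No instance every canonical block has value $\le\tau_L-1$, and every shifted window also has value $\le\tau_L-1$. The step I expect to need the most care is the mixed case for shifted windows: the argument must check that the phase-order obstruction really allows sharing only at the cut phase. It must also check that the one-sided cases lose a whole anchor copy, rather than relying merely on the window having length $m$. I would verify the capacity bound of one run per phase rather than an anchor-order argument in the style of Lemma~\ref{fw:phase}, since free skipping makes order obstructions alone insufficient.
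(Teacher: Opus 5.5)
Your proposal is correct and follows the paper's proof essentially step for step. Disjoint phase alphabets give the per-phase canonical bound and its attainment. A shifted window then splits into suffix-only, prefix-only, and mixed cases: the mixed case is confined to the cut phase (at most $\max\{M,2\}=M$ matches), and each one-sided case loses a copy of $c_1$ or $c_{d+1}$. Your first/last-phase indices with $e\le h$ are simply a more explicit way of stating the paper's observation that suffix phases come later than prefix phases, except for the split phase.
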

\begin{proof}
Give each anchor run and each coordinate gadget its own phase, ordered as in the query. Distinct phases use disjoint alphabets. Within a canonical block, matches therefore cannot move from one coordinate to another or from data to an anchor. Each anchor run contributes at most $M$ and each coordinate contributes exactly $1-u_kv_k$ at its local optimum. Concatenating these local optima respects the common phase order, so the upper bound is attained and gives $\tau_L-\langle u,v\rangle$.

Write a shifted window as $A B$, a nonempty proper suffix of one block followed by a nonempty proper prefix of the next. The cut may split an anchor run or a coordinate gadget. Phases in $A$ are later than phases in $B$, with the sole possible exception of that split phase appearing at both ends. A matched subsequence has nondecreasing query phase ranks. Consequently, if it uses symbols from both $A$ and $B$, every chosen symbol must belong to the split phase: any strictly later phase chosen in $A$ would prevent returning to an earlier phase in $B$, and the reverse alternative is excluded by the cut positions.

There are now three exhaustive cases. If matches use only $A$, the removed nonempty prefix contains at least one copy of the initial anchor $c_1$. All other phases contribute at most their canonical bounds, and $c_1$ cannot be supplied by another phase, so the score is at most $\tau_L-1$. If matches use only $B$, the missing suffix contains at least one copy of the final anchor and gives the same bound. If both parts are used, the query has at most $M$ occurrences of the split anchor, or at most two symbols of the split coordinate phase. Its score is therefore at most $\max\{M,2\}<\tau_L$. These bounds also cover a cut exactly between phases, where using both parts is impossible. Scores are integers, hence every shifted window has score at most $\tau_L-1$.

An orthogonal stored vector gives a canonical window of score $\tau_L$. In a No instance canonical windows lose at least one coordinate match and shifted windows lose at least one match by the preceding classification. Taking the maximum over all window starts proves the equivalence.
\end{proof}

The construction has $m=O(d^2)$, an alphabet of size $O(d)$, and $O(\log d)$-bit symbol codes. The proof does not establish hardness for an independently fixed finite temporal band: that restriction would require a new legality argument for the matches.

\subsection{EDR}
\label{ds:edr}

EDR assigns zero cost to a tolerance match and unit cost to a mismatching substitution, insertion, or deletion \citep{chen2005robust}. On our integer slice it is unit-cost Levenshtein distance, written $E$. This is the raw recurrence on the encoded input, independently of normalization used in applications. Use fresh, coordinate-private alphabets and define
\begin{align*}
 P_k(0)&=(A_k,B_k),&P_k(1)&=(D_k,C_k),\\
 Q_k(0)&=(D_k,B_k),&Q_k(1)&=(A_k,E_k),\\
 \Phi_P(u)&=c_1P_1(u_1)\cdots c_dP_d(u_d),\\
 \Phi_Q(v)&=c_1Q_1(v_1)\cdots c_dQ_d(v_d).
\end{align*}
Here each $c_k$ is a new private anchor. Both cores have length $3d$. A nonconflicting local pair has one equal position and one mismatch; the conflicting pair has no common symbol. Its EDR is therefore $1+u_kv_k$: a single edit cannot turn one of the latter length-two strings into the other. Disjoint phase alphabets also give
\begin{equation}
 L(\Phi_P(u),\Phi_Q(v))=2d-\langle u,v\rangle.
 \label{ds:core-lcs}
\end{equation}
For equal-length strings of length $\ell$, one has
\begin{equation}
 E(A,B)\ge\ell-L(A,B).
 \label{ds:edit-lcs}
\end{equation}
Indeed, a script of cost $t$ deletes or substitutes at most $t$ source symbols; its remaining source symbols give a common subsequence of length at least $\ell-t$. Equation~\eqref{ds:core-lcs} and a diagonal substitution script now prove the exact identity
\begin{equation}
 E(\Phi_P(u),\Phi_Q(v))=d+\langle u,v\rangle.
 \label{ds:core-edit}
\end{equation}

Let $Z$ be a fresh buffer symbol, $M=2d$, and set
\begin{align*}
 Q(v)&=Z^M\Phi_Q(v)Z^M,\qquad m=7d,\\
 S&=Z^M\Phi_P(u^{(1)})Z^{2M}\Phi_P(u^{(2)})\cdots\\
  &\hspace{21mm}\cdots Z^{2M}\Phi_P(u^{(n)})Z^M.
\end{align*}
The threshold is $\tau_E=d$. A canonical window has $M$ buffer symbols on either side of one complete core. Consecutive core starts are separated by $7d=m$, so $|S|=nm$.

\begin{lemma}[Full-core projection]
\label{ds:projection}
If a length-$m$ window $X$ contains a complete data core $\Phi_P(u)$, then all its other symbols are $Z$, and
\[
 E(X,Q(v))\ge E(\Phi_P(u),\Phi_Q(v)).
\]
\end{lemma}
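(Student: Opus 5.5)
The plan is to prove the two assertions separately: first a purely positional count inside $S$, then a core projection in the sense of item~3 of Lemma~\ref{fw:witness}, carried out on the actual edit alignment.

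\emph{Positional claim.} In $S$, consecutive data cores are separated by exactly $Z^{2M}=Z^{4d}$, and the first and last cores are preceded and followed by $Z^{M}$. Each core has length $3d$, and the window has length $m=7d$. Suppose a window contains all of $\Phi_P(u)$ and also one symbol of a neighbouring core. Then it spans at least $3d+4d+1=7d+1>m$ positions, which is impossible. So every position of $X$ outside the core lies in a $Z$-run, and $X=Z^{a}\Phi_P(u)Z^{b}$ with $a+b=4d$. Since the buffer symbol $Z$ and all anchor and gadget symbols are distinct integers, $Z$ matches no core symbol of either side under the tolerance convention.

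\emph{Projection.} I would represent an optimal script for $E(X,Q(v))$ as an order-preserving partial matching between positions of $X$ and $Q(v)$. Its cost is the number of unequal matched pairs plus the number of unmatched positions on both sides. From this I form a core alignment between $\Phi_P(u)$ and $\Phi_Q(v)$. It keeps exactly those pairs whose two endpoints both lie in the cores; every other core position becomes unmatched. A subset of an order-preserving matching is still order-preserving, so this is a legal alignment of the two cores.

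The charging step compares the two costs term by term.
\begin{itemize}
\item A core--core pair costs the same in both alignments.
\item A core position that was unmatched in the full alignment stays unmatched, at the same cost.
\item A core symbol of $X$ matched to a buffer $Z$ of $Q$ was a mismatch of cost $1$; in the projection it becomes one deletion of cost $1$. The symmetric case, a core symbol of $Q$ matched to a $Z$ of $X$, is handled the same way.
\item All pairs and unmatched positions involving only buffer symbols are dropped, and each contributed a nonnegative amount.
\end{itemize}
Each full-alignment operation is charged at most once, because a pair has at most one core endpoint in the mixed case. Hence the projected cost is at most $E(X,Q(v))$. The projected cost is also at least the optimum $E(\Phi_P(u),\Phi_Q(v))$, which gives the inequality.

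The main point to check carefully is this charging. One must verify that a core--buffer substitution is really a mismatch, which holds because $Z$ is fresh. One must also avoid counting the same operation twice when it is converted into a core deletion or insertion. The rest is bookkeeping.
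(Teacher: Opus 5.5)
Your proposal is correct and follows essentially the same route as the paper: the same $3d+4d+1>7d$ positional count, and the same projection in which core--core pairs are kept, each core--buffer mismatch becomes a unit-cost gap, and buffer-only operations are dropped. Your order-preserving partial matching is the paper's ordered list of alignment columns, and your ``at most one core endpoint'' remark is its observation that no operation is charged twice.
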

\begin{proof}
Consecutive data cores are separated by $2M=4d$ buffer symbols. A window containing one complete core of length $3d$ and even one symbol of the next or previous core would span at least $3d+4d+1>m=7d$ positions. Thus all positions outside the contained core are buffer symbols.

Fix any global edit alignment between this window and the query. Regard it as an ordered list of columns, each consuming one symbol on at least one side. Transform each column as follows. Preserve a column that consumes two core symbols. If it pairs one core symbol with a buffer symbol, replace the buffer symbol by a gap. Preserve a column that already pairs a core symbol with a gap. Delete every column with no core symbol.

Every core symbol of each side is still consumed exactly once, and the relative order on each side is unchanged. No retained column has two gaps. The resulting list is therefore a legal global edit alignment between the two cores, even if many buffer symbols were matched to core symbols in the original alignment. Since the buffer symbol is absent from both core alphabets, a core--buffer column is a mismatch of cost one; its replacement is a gap column of the same cost. Preserved columns keep their cost, and deleted columns have nonnegative cost. The projection never increases total cost. Each original column is processed once, so an original substitution cannot be charged twice. This proves the inequality for every full alignment, and minimizing over such alignments yields the claimed distance bound.
\end{proof}

\begin{lemma}[Partial-core windows]
\label{ds:partial}
If a length-$m$ window contains no complete core, then
\begin{align}
 L(X,Q(v))&\le\max\{2M,M+3d-1\}=5d-1,
 \label{ds:partial-lcs}\\
 E(X,Q(v))&\ge 2d+1>d. \nonumber
\end{align}
\end{lemma}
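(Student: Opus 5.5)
The plan is to prove the LCS bound \eqref{ds:partial-lcs} by a geometric case analysis of how a length-$m$ window can meet the target. The edit bound then follows at once from \eqref{ds:edit-lcs}: $E(X,Q(v))\ge m-L(X,Q(v))\ge 7d-(5d-1)=2d+1$. The arithmetic identity $\max\{2M,M+3d-1\}=\max\{4d,5d-1\}=5d-1$ holds for $d\ge1$.

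First I would record the geometry. In $S$, cores of length $3d$ are separated by buffer runs of length exactly $2M=4d$, and the two ends carry runs $Z^M$. A maximal core fragment inside $X$ that has buffer symbols on both sides inside $X$ must be a complete core. Hence a window without a complete core has one of these shapes:
\begin{itemize}
\item $Z^aF$, with $F$ a proper nonempty core prefix;
\item $FZ^a$, with $F$ a proper nonempty core suffix;
\item $F_1Z^{4d}F_2$, with $F_1$ a proper core suffix and $F_2$ a proper core prefix.
\end{itemize}
Since $m=7d<3d+8d$, a window cannot consist of buffers alone, and a proper fragment has length at most $3d-1$.

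Next, I would bound a common subsequence $\mathcal A$ between $X$ and $Q(v)=Z^M\Phi_Q(v)Z^M$. Matches pair $Z$ with $Z$, or a core symbol with a query-core symbol of the same phase, because the alphabets are disjoint. If $\mathcal A$ uses no core symbol, it matches only buffers and has size at most the query's $2M$ buffer symbols. Otherwise, let the matched core symbols of $X$ span from $x_{i_0}$ to $x_{i_1}$.
\begin{itemize}
\item Window $Z$'s matched before $x_{i_0}$ must go to the query prefix $Z^M$. Window $Z$'s matched after $x_{i_1}$ must go to the suffix $Z^M$.
\item Window $Z$'s strictly between $x_{i_0}$ and $x_{i_1}$ cannot be matched, because the query core contains no $Z$.
\end{itemize}
With these constraints I would treat the shapes in turn.
\begin{itemize}
\item For $Z^aF$ and $FZ^a$, one side of the fragment has no buffers in $X$. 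This gives $|\mathcal A|\le M+|F|\le M+3d-1$.
\item For $F_1Z^{4d}F_2$, if core matches come only from $F_1$, no buffer precedes them in $X$, so $|\mathcal A|\le|F_1|+M\le M+3d-1$. The case of core matches only from $F_2$ is symmetric.
\item If $F_1Z^{4d}F_2$ has core matches from both fragments, I would reuse the phase-order argument of Lemma~\ref{ds:lcss-separation}. Phases of $F_1$ are later than those of $F_2$ except possibly for one split phase, and query phase ranks are nondecreasing along $\mathcal A$. So all core matches lie in a single phase, which has at most two query symbols. No buffer can be matched: those before $x_{i_0}$ or after $x_{i_1}$ do not exist in $X$, and those in between are excluded. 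Thus $|\mathcal A|\le2$.
\end{itemize}

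The main obstacle is this last mixed case. It needs care that the phase-order argument transfers from LCSS to the EDR alphabets, where the anchors $c_k$ are single symbols and the gadgets share letters only within a coordinate. Otherwise the argument is a direct exhaustion of the three window shapes.
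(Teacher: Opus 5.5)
Your proof is correct and follows essentially the paper's argument. It uses the same split by whether core matches come from neither, one, or both core fragments, the same observation that the $Z$-free query core confines buffer matches to one query buffer (or none), and the same final step $E\ge m-L$; the paper only differs in bounding core matches uniformly by $3d-1$, because $F_1F_2$ (its $AB$) does not start with $c_1$ and so differs from the equal-length query core. Two simplifications: the mixed case you flag as the main obstacle needs no phase-order argument, since with no buffer matched $|\mathcal A|\le|F_1|+|F_2|=3d\le5d-1$; and the shapes $Z^aF$, $FZ^a$ and the all-buffer window cannot occur, because every buffer run in $S$ has length at most $4d<7d-(3d-1)$ (the inequality $7d<3d+8d$ you cite is not the relevant one), though your bounds would cover them anyway.
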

\begin{proof}
Such a window has the form $A Z^{2M} B$, where $A$ and $B$ are proper suffix and prefix portions of adjacent cores, with $|A|+|B|=3d$. Their core phase sequence is cyclically shifted; the underlying vectors may differ. In particular, $A B$ cannot equal the query core: its first symbol is not the private initial anchor $c_1$. Since both have length $3d$, they have at most $3d-1$ core matches. This also follows by separating the two sides of the phase cut, as in Lemma~\ref{ds:lcss-separation}.

A common subsequence using no core symbol has at most $2M$ buffer matches. If it uses a core symbol in $A$, the central buffer can only match the query's right buffer; if it uses a core symbol in $B$, it can only match the left buffer. If it uses both sides, the central buffer lies between two core matches and cannot match either query buffer. Thus any subsequence with a core match has at most $M$ buffer matches and $3d-1$ core matches. These are alternative cases, giving the maximum in \eqref{ds:partial-lcs}, not the sum of both buffer allowances. Equation~\eqref{ds:edit-lcs} with $m=7d$ proves the EDR bound.
\end{proof}

\begin{theorem}[EDR all-window certificate]
\label{ds:edr-separation}
The construction satisfies
\begin{align*}
 &\exists u\in U:\ \langle u,v\rangle=0\\
 &\hspace{12mm}\Longleftrightarrow
 \min_{X\in\operatorname{Win}_m(S)}E(X,Q(v))\le d.
\end{align*}
\end{theorem}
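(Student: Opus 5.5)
The plan is to split every length-$m$ window of $S$ into two classes according to whether it contains a complete data core, and to apply Lemma~\ref{ds:projection} to the first class and Lemma~\ref{ds:partial} to the second. Since $S$ begins and ends with $Z^M$ and consecutive cores are separated by $Z^{2M}$, the canonical window for $u^{(i)}$ is exactly $Z^M\Phi_P(u^{(i)})Z^M$. This window has length $M+3d+M=7d=m$, so it is a legal window of the target.

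For completeness, suppose $\langle u^{(i)},v\rangle=0$. The canonical window and $Q(v)$ have identical buffers. Align the buffers diagonally at zero cost and align the cores by an optimal core script. By~\eqref{ds:core-edit}, the total cost is $E(\Phi_P(u^{(i)}),\Phi_Q(v))=d$, which meets the threshold $\tau_E=d$.

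For soundness, assume no stored vector is orthogonal to $v$, and fix an arbitrary window $X$. If $X$ contains a complete core $\Phi_P(u)$, Lemma~\ref{ds:projection} together with~\eqref{ds:core-edit} gives $E(X,Q(v))\ge d+\langle u,v\rangle\ge d+1$. This covers the canonical windows as well as shifted windows that still retain a full core surrounded by buffers. If $X$ contains no complete core, Lemma~\ref{ds:partial} gives $E(X,Q(v))\ge 2d+1>d$. These two cases exhaust all physical offsets, including those near the ends of $S$, because every window either contains some core entirely or does not. Hence the minimum over all windows exceeds $d$, which proves the reverse implication.

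I expect the main obstacle to be confirming that the dichotomy is exhaustive and that both lemmas apply at every offset. One concern is windows at the very ends of $S$, where the buffers have length $M$ rather than $2M$. In that case a window without a complete core must contain parts of two adjacent cores, so the form $AZ^{2M}B$ assumed in Lemma~\ref{ds:partial} still holds; I will state this explicitly. The other concern is that the projection lemma requires the window's non-core symbols to all be $Z$, which that lemma already establishes. The resource conditions are standard: $N=nm$, the construction times are $O(nd)$ and $O(d)$, and the symbols use $O(\log d)$ bits. With these, Lemma~\ref{lem:rt-transfer} or Proposition~\ref{fw:certificate} gives the indexing bound.
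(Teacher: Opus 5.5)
Your proposal is correct and matches the paper's proof: completeness comes from the canonical window, with free buffer matches and cost $d$ by \eqref{ds:core-edit}, and soundness splits windows into full-core windows (Lemma~\ref{ds:projection}, giving cost at least $d+1$) and partial-core windows (Lemma~\ref{ds:partial}, giving cost at least $2d+1$). Your worry about the ends of $S$ is vacuous, because the first and last windows of $S$ are canonical, so every window without a complete core already has the form $AZ^{2M}B$.
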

\begin{proof}
An orthogonal canonical window has cost $d$ by \eqref{ds:core-edit}, matching its buffers for free. In a No instance, a full-core window costs at least $d+1$ by Lemma~\ref{ds:projection}, and any other window costs at least $2d+1$ by Lemma~\ref{ds:partial}.
\end{proof}

\subsection{Swale}
\label{ds:swale}

We use the global threshold-match recurrence of \citet{morse2007efficient}, with fixed reward $+r$ and single-sided gap penalty $-r$, for an exactly representable fixed $r>0$. There is \emph{no mismatching diagonal transition}. With $W_{i,0}=-ir$ and $W_{0,j}=-jr$, the recurrence is
\[
 W_{i,j}=\begin{cases}
 W_{i-1,j-1}+r,&a_i=b_j,\\
 \max\{W_{i-1,j},W_{i,j-1}\}-r,&a_i\ne b_j.
 \end{cases}
\]
Let $W(A,B)=W_{|A|,|B|}$. Equality here implements the fixed $\epsilon\in[0,1)$ tolerance slice.

\begin{lemma}[Swale--LCSS identity]
\label{ds:swale-identity}
For arbitrary lengths $a=|A|$ and $b=|B|$,
\[
 W(A,B)=\bigl(3L(A,B)-a-b\bigr)r.
\]
\end{lemma}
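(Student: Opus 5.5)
Set $L_{i,j}=L(A_{1..i},B_{1..j})$ and define the candidate $V_{i,j}=(3L_{i,j}-i-j)r$. I will prove $W_{i,j}=V_{i,j}$ for all $0\le i\le a$, $0\le j\le b$ by induction on $i+j$; the case $(i,j)=(a,b)$ is the lemma. The base cases are immediate. Since $L_{i,0}=L_{0,j}=0$, we get $V_{i,0}=-ir=W_{i,0}$ and $V_{0,j}=-jr=W_{0,j}$.

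For $i,j\ge1$ I follow the two branches of the Swale recurrence, using the standard LCS recurrence in each.

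\emph{Match branch} ($a_i=b_j$). Here $L_{i,j}=L_{i-1,j-1}+1$. The induction hypothesis gives
\[
W_{i,j}=V_{i-1,j-1}+r=\bigl(3L_{i-1,j-1}-i-j+2+1\bigr)r=\bigl(3L_{i,j}-i-j\bigr)r.
\]
This step relies on the LCS fact that equal last symbols may always be matched. I will cite it as the usual exchange argument: any optimal common subsequence can be modified to use the pair $(i,j)$.

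\emph{Mismatch branch} ($a_i\ne b_j$). Here $L_{i,j}=\max\{L_{i-1,j},L_{i,j-1}\}$. The induction hypothesis gives
\[
W_{i,j}=\max\{V_{i-1,j},V_{i,j-1}\}-r=\bigl(3\max\{L_{i-1,j},L_{i,j-1}\}-(i+j-1)\bigr)r-r.
\]
This equals $V_{i,j}$. The point is that both predecessors have the same index sum $i+j-1$, so the maximum of the $V$'s is attained at the maximum of the $L$'s. We also need $r>0$ so that multiplying by $r$ preserves the maximum. If $r$ were negative the identity would fail, and $r>0$ is assumed.

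I expect no real obstacle. The one point that needs care is the match branch. Swale forces the diagonal whenever the symbols are equal, and this is correct only because the LCS recurrence also takes the diagonal there. An optional alternative such as $L_{i-1,j}$ would never beat $L_{i-1,j-1}+1$, since $L_{i-1,j}\le L_{i-1,j-1}+1$. The same fact shows the identity holds without using the gap options. Also, the equality test here implements the fixed tolerance on the integer symbol encoding, so $L$ is exactly the LCSS score of Section~\ref{ds:lcss}. The lemma places no restriction on the lengths $a$ and $b$, so the argument need not assume $a=b$.
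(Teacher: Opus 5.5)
Your proposal is correct and follows the paper's own proof. Both define the candidate $(3L_{i,j}-i-j)r$ on prefixes, check that it agrees with $W$ on the boundary, and induct on $i+j$ through the match and mismatch branches using the standard LCS recurrence. Your explicit remark that $r>0$ is needed for the maximum to pass through the scaling is a detail the paper leaves implicit.
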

\begin{proof}
For each pair of prefixes, put $T_{i,j}=(3L(A[1..i],B[1..j])-i-j)r$. The boundary values agree with those of $W$. At equal last symbols, the LCSS value is the previous diagonal value plus one, so $T$ increases by $r$. At unequal symbols, the LCSS value is the maximum of the two neighboring values, and the corresponding $T$ recurrence subtracts $r$. Induction on $i+j$ proves the identity.
\end{proof}

Reuse the EDR cores and double-buffer target and query, with $m=7d$, $M=2d$, and threshold $\tau_W=4dr$.

\begin{theorem}[Swale all-window separation]\label{ds:swale-separation}
A full-core window associated with $u$ satisfies
\[
 L(X,Q(v))\le2M+2d-\langle u,v\rangle,
\]
with equality on a canonical window. Every partial-core window has $W(X,Q(v))\le(d-3)r$. Consequently an orthogonal canonical window has score $4dr$, whereas every window in a No instance has score at most $(4d-3)r$, and
\begin{align*}
 &\exists u\in U:\langle u,v\rangle=0\\
 &\hspace{8mm}\Longleftrightarrow
 \max_{X\in\operatorname{Win}_m(S)}W(X,Q(v))\ge4dr.
\end{align*}
\end{theorem}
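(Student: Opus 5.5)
The plan is to route everything through the Swale--LCSS identity of Lemma~\ref{ds:swale-identity}. Since every window and the query have length $m=7d$, that lemma gives
\[
 W(X,Q(v))=\bigl(3L(X,Q(v))-14d\bigr)r ,
\]
so each Swale bound is an LCS bound multiplied by $3r$ and shifted. The threshold $4dr$ corresponds to $L=6d=2M+2d$. The value $(4d-3)r$ corresponds to $L=6d-1$, and $(d-3)r$ corresponds to $L=5d-1$.

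For full-core windows, fix a window containing a complete core $\Phi_P(u)$. By Lemma~\ref{ds:projection} all of its other symbols are $Z$, and there are exactly $4d=2M$ of them. I would then bound $L$ by splitting any common subsequence into two classes of matches.
\begin{itemize}
\item Matches on the symbol $Z$ number at most $2M$, since the query has exactly $2M$ copies of $Z$.
\item Matches among core symbols restrict to a common subsequence of the two cores, so by \eqref{ds:core-lcs} they number at most $2d-\langle u,v\rangle$.
\end{itemize}
The alphabets are disjoint, so every matched pair falls into exactly one class. This gives the stated bound $L\le 2M+2d-\langle u,v\rangle$. For equality on a canonical window, concatenate three pieces: the left buffer matched to the left buffer, an optimal core LCS, and the right buffer matched to the right buffer. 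This is order-preserving and attains the bound. Alternatively, one can use \eqref{ds:edit-lcs} together with the diagonal script, but the direct construction is cleaner.

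For partial-core windows, Lemma~\ref{ds:partial} gives $L\le 5d-1$ directly. Hence $W\le(15d-3-14d)r=(d-3)r$.

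Now combine the two cases. An orthogonal canonical window has $L=6d$, so $W=4dr$. In a No instance, a full-core window has $\langle u,v\rangle\ge1$, so $L\le 6d-1$ and $W\le(4d-3)r$. A partial-core window has $W\le(d-3)r\le(4d-3)r$. The score is an integer multiple of $r$ and the threshold comparison is exact, so the maximum over all windows is at least $4dr$ exactly when some stored vector is orthogonal to $v$.

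The windows are exhausted by two cases, those containing a complete core and those that do not, and Lemma~\ref{ds:partial} covers the second case for every offset. The only delicate step is the full-core upper bound. It must hold for every window that contains a full core, not only canonical ones, and it must also cover buffers whose sizes on the two sides are unequal. The disjoint-alphabet counting handles this because it never uses positional information about the buffers; it only counts the total number of $Z$ symbols in the query.
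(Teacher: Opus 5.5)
Your proof is correct and follows essentially the same route as the paper. The paper also converts everything to LCS bounds via Lemma~\ref{ds:swale-identity}, bounds full-core windows by separating at most $2M$ buffer matches from a core common subsequence controlled by \eqref{ds:core-lcs}, and takes the partial-core bound $L\le 5d-1$ directly from Lemma~\ref{ds:partial}.
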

\begin{proof}[Proof sketch]
The window classification is exactly that used for EDR. In a full-core window at most $2M$ buffer symbols match. Removing those matches leaves a core common subsequence, bounded by~\eqref{ds:core-lcs}; canonical windows attain both allowances. Substitution in Lemma~\ref{ds:swale-identity} gives canonical score $(4d-3\langle u,v\rangle)r$ and the same upper bound for other full-core windows. For a partial-core window, Lemma~\ref{ds:partial} gives $L\le5d-1$, hence
\[
 W\le\bigl(3(5d-1)-14d\bigr)r=(d-3)r.
\]
These cases establish the two threshold directions. The EDR edit projection is not being identified with a Swale alignment: the transferred object is the already proved core-subsequence and buffer bound.
\end{proof}
Other reward/gap ratios, additional mismatch transitions, or time bands are not included in this certificate.

\subsection{TS-SW}
\label{ds:tssw}

We define a real-valued time-series score inspired by the local restart recurrence of \citet{smith1981identification}. The original paper supplies the local-alignment mechanism, not this particular numeric scoring function or its hardness result. With zero boundary values, let
\begin{align*}
 H_{i,j}=\max\bigl\{&0,\ H_{i-1,j-1}+4-|x_i-q_j|,\\
                   &H_{i-1,j}-1,\\
                   &H_{i,j-1}-1\bigr\},
\end{align*}
and let $T(X,Q)=\max_{i,j}H_{i,j}$. Both the reward constant $4$ and the unit gap penalty are fixed independently of $d$.

For clarity, the numeric gadgets used here are
\begin{align*}
 p(0)&=(0,0),&p(1)&=(2,0),\\
 q(0)&=(1,0),&q(1)&=(0,1).
\end{align*}
Concatenate them coordinate-wise and divide every coordinate by $d$ to obtain $\widetilde P(u),\widetilde Q(v)\in[0,2/d]^{2d}$. The four local cases give
\[
 \|\widetilde P(u)-\widetilde Q(v)\|_1
 =1+2\langle u,v\rangle/d.
\]
Let $c_j=8j$ and, for a vector $z\in\mathbb R^{2d}$, define
\[
 B(z)=(c_1,z_1,\ldots,c_{2d},z_{2d},c_{2d+1}).
\]
Concatenate the data blocks $B(\widetilde P(u))$; the query is $B(\widetilde Q(v))$. Set $m=4d+1$ and $\tau_T=4m-1$.

\begin{lemma}[Full-consumption rigidity]
\label{ds:tssw-rigidity}
For two length-$m$ sequences and any scoring rule $B-|x-y|$ with $B>1$ and unit gap penalties, a local traceback of score at least $mB-1$ consumes all positions by exactly $m$ diagonal events and has no gap event.
\end{lemma}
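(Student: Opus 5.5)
Fix a traceback achieving score $s\ge mB-1$. It is a local alignment that starts after some restart state and ends at some cell $(i,j)$. It consists of $k$ diagonal events, with rewards $B-|x_a-q_b|\le B$, and $g$ gap events, each costing exactly $1$. The plan is a counting argument on the traceback's own events, with no use of the DP beyond the fact that the value equals the sum of its increments after the last zero restart.

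Write the score as
\[
 s=\sum_{\text{diag}}(B-\ell_e)-g\le kB-g.
\]
Since the alignment is order-preserving on both sequences of length $m$, we have $k\le m$.

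First I show $k=m$. If $k\le m-1$, then $s\le (m-1)B-g\le (m-1)B$. This is strictly less than $mB-1$ because $B>1$, contradicting the hypothesis.

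With $k=m$, the diagonal events consume every position on both sides. A diagonal event consumes one position from each side, so $m$ of them exhaust both length-$m$ sequences.

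Next I show there are no gaps. A gap event would have to consume an additional position on one side. Every position is already used by a diagonal event, and each position is consumed at most once along a traceback, so $g=0$. Equivalently, the traceback consumes $k+g$ positions on one side, which is at most $m$, forcing $g=0$.

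This also fixes the start and end. The local alignment must begin at $(1,1)$ after a restart at $(0,0)$ and end at $(m,m)$. Otherwise some prefix or suffix position would be unconsumed, contradicting $k=m$ diagonal events within an $m\times m$ grid.

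The only point needing care is the accounting for local restarts. The traceback score is the sum of increments strictly after the last zero restart, since the $0$ branch resets the value. I take the traceback to mean exactly this segment, so no earlier reward is inherited.

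I also have to make sure that using the bound $B$ rather than $B-\ell_e$ is harmless. It is, because $\ell_e\ge0$ and we only need an upper bound. The strict inequality $(m-1)B<mB-1$ is where $B>1$ is essential; that is the step I would double-check.
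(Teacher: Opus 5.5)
Your proof is correct and is essentially the paper's argument: you bound the post-restart score by $kB-g\le kB$, use monotonicity for $k\le m$, observe that $k\le m-1$ gives at most $(m-1)B<mB-1$ precisely because $B>1$, and conclude that $m$ diagonal events exhaust both sequences, leaving no gap, omitted prefix, or omitted suffix. You take the traceback score to be the sum of increments after the last zero restart, with the first diagonal step out of the restart state counted as a reward event; this is the same accounting the paper uses when it counts diagonal events ``including the event entering the first matched cell.''
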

\begin{proof}
Follow the traceback from its zero restart state in the forward direction. Let $K$ count its diagonal consumption events, \emph{including the event entering the first matched cell}, and let $G$ count horizontal and vertical gap events. Its score is at most $KB-G\le KB$, and monotonicity gives $K\le m$. If $K\le m-1$, the score is at most $(m-1)B<mB-1$. Thus $K=m$, which already consumes all $m$ positions on both sides. No gap, omitted prefix, or omitted suffix is possible. The matched cells are exactly $(1,1),\ldots,(m,m)$. There are $m$ consumption events but only $m-1$ moves between successive matched cells.
\end{proof}

\begin{theorem}[TS-SW all-window certificate]
\label{ds:tssw-separation}
For the fixed score defined above,
\begin{align*}
 &\exists u\in U:\ \langle u,v\rangle=0\\
 &\hspace{12mm}\Longleftrightarrow
 \max_{X\in\operatorname{Win}_m(S)}T(X,Q(v))\ge\tau_T.
\end{align*}
Every No-instance window has score at most $\tau_T-2/d$.
\end{theorem}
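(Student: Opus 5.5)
Set $B=4$ in Lemma~\ref{ds:tssw-rigidity}, so $mB-1=\tau_T$. Every window with score at least $\tau_T-2/d$ (which exceeds $mB-2$ for $d\ge3$) must then be aligned by the full diagonal $(1,1),\ldots,(m,m)$ with no gaps. The rigidity proof only needs the score to exceed $(m-1)B$, and $\tau_T-2/d>4m-4$ holds. The whole analysis therefore reduces to the diagonal reward
\[
 4m-\sum_{i=1}^m|x_i-q_i|,
\]
which is achievable as $H_{m,m}$ along the diagonal once the running sums stay nonnegative.

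\emph{Canonical windows.} The anchors cancel on the diagonal. The data contribute $\|\widetilde P(u)-\widetilde Q(v)\|_1=1+2\langle u,v\rangle/d$, so the diagonal score is $4m-1-2\langle u,v\rangle/d$. I would check that no intermediate prefix sum hits zero, which holds because each local reward is at least $4-2/d>0$. Hence $H_{m,m}=4m-1=\tau_T$ when $u$ is orthogonal to $v$. When $\langle u,v\rangle\ge1$, rigidity gives $T\le 4m-1-2/d$. Any traceback of score at least $4m-2$ is the diagonal, so the maximum is either the diagonal value or below $4m-2$; both are at most $\tau_T-2/d$ when $d\ge2$.

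\emph{Shifted windows.} The main obstacle is that the local restart removes the first-cell barrier, so the barrier argument of Lemma~\ref{lem:rt-barrier} no longer applies directly. Rigidity restores it: any window scoring at least $\tau_T-2/d$ must still pair $x_1$ with $q_1=8$ on the diagonal. For a window at offset $t$, $x_1$ is either data in $[0,2/d]$ or an anchor $8j$ with $j\ge2$. In either case $|x_1-q_1|\ge 8-2/d>1$, and the diagonal score is at most $4m-(8-2/d)<4m-2$. So every shifted window scores below $\tau_T-2/d$, regardless of the data of the two neighboring blocks.

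Combining the two cases gives both directions of the equivalence and the stated No bound $\tau_T-2/d$. This is a positive but vanishing gap, which Proposition~\ref{fw:elementary}(3) accepts. For the numerics, scaling by $d$ makes all values integers of polynomial magnitude in $d$, so Proposition~\ref{fw:bits} applies. Small $d$ can be discarded.
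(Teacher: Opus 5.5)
Your proposal is correct and follows essentially the same route as the paper. Lemma~\ref{ds:tssw-rigidity} with $B=4$ forces every traceback scoring above $4m-4=\tau_T-3$ to be the full diagonal; the canonical diagonal value $4m-1-2\langle u,v\rangle/d$ and the first-cell difference of at least $8-2/d$ on shifted windows then settle both directions. Your detours through $4m-2$, and the resulting $d\ge2$ or $d\ge3$ caveats, are unnecessary: using the bound $\tau_T-3$ directly, as the paper does, gives the No bound $\tau_T-2/d$ for every $d\ge1$.
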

\begin{proof}[Proof sketch]
The canonical full diagonal has score $4m-1-2\langle u,v\rangle/d$. Each of its increments is positive, so an orthogonal canonical block is a valid local Yes witness. By Lemma~\ref{ds:tssw-rigidity}, any other traceback either is the full diagonal or uses at most $m-1$ diagonal consumption events and scores at most $4m-4=\tau_T-3$. A nonorthogonal canonical full diagonal loses at least $2/d$. On a shifted window its first difference is at least $8-2=6$, so even the full diagonal scores at most $4m-6=\tau_T-5$. These are all possibilities, and $2/d\le2$ gives the stated uniform No gap.
\end{proof}

The input coordinates have denominator $d$ and magnitude $O(d)$; they and the threshold have $O(\log d)$-bit encodings. The strictly positive gap $2/d$ suffices for exact rational comparison. It tends to zero and yields no fixed positive additive-error or fixed-factor approximation lower bound by this argument.

\begin{corollary}[Conditional indexing lower bounds]
\label{ds:indexing}
Fix the stated exactly representable tolerance and, for Swale, reward parameters. For each $\beta\ge1$ and $\delta\in(0,1)$, Deterministic SETH excludes a uniform exact static index with preprocessing $O(N^\beta)$ and worst-case query time $O(N^{1-\delta})$ for each of these four slices. Under Randomized SETH the same holds with any fixed per-query two-sided error $p<1/2$, jointly over preprocessing and query randomness. The query-length domain must cover the hard family generated by the dimension coefficient chosen after $(\beta,\delta)$ and, in the randomized case, $p$.
\end{corollary}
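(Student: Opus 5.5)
The plan is to instantiate Proposition~\ref{fw:certificate} once for each of the four slices (LCSS, EDR, Swale, TS-SW) and then apply Lemma~\ref{lem:rt-transfer}. Proposition~\ref{fw:elementary}(1) handles the three maximization scores (LCSS, Swale, TS-SW) by negation, and Proposition~\ref{fw:elementary}(2) handles the fixed tolerance and the fixed Swale reward by restricting a uniform index to that slice. The all-window equivalences are already in hand:
\begin{itemize}
\item Lemma~\ref{ds:lcss-separation} for LCSS;
\item Theorem~\ref{ds:edr-separation} for EDR;
\item Theorem~\ref{ds:swale-separation} for Swale;
\item Theorem~\ref{ds:tssw-separation} for TS-SW.
\end{itemize}
Each of these quantifies over every physical offset and every legal alignment or traceback, so the remaining work is to verify the resource and model conditions of Definition~\ref{def:rt-reduction}.

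\emph{Separability and size.} For LCSS, the target is the concatenation of the $X(u)$ and the query is $Q(v)$, with $m=(d+1)2d+2d=O(d^2)$, so $N=nm=nd^{O(1)}$. For EDR and Swale, the target is the double-buffered concatenation and the query is $Z^M\Phi_Q(v)Z^M$, with $m=7d$ and $|S|=nm$. For TS-SW, $m=4d+1$ and $N=n(4d+1)$. In each case the data encoder reads only $A$ and the query encoder reads only $v$ and public parameters. Construction times are $O(nm)$ and $O(m)$, which give $n^{1+o(1)}$ and $n^{o(1)}$ since $m=\operatorname{polylog} n$. Thresholds $\tau_L$, $\tau_E=d$, $\tau_W=4dr$ and $\tau_T=4m-1$ depend only on $d$ and the fixed parameters.

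\emph{Finite-word legality.} The symbolic encodings use $O(d)$ distinct integer codes of $O(\log d)$ bits, and the tolerance test with fixed $\epsilon\in[0,1)$ reduces to integer equality. LCSS and EDR DP values are integers at most $2m$. Swale values are integer multiples of the fixed rational $r$, bounded by $O(m)$, so clearing a constant denominator gives polynomial-in-$d$ integers. TS-SW inputs have denominator $d$ and magnitude $O(d)$. Multiplying all values and the threshold by $d$ turns every DP update into an integer operation with polynomial magnitude, so Proposition~\ref{fw:bits} applies with $O(\log d)$ bits. For TS-SW the gap $2/d$ vanishes but stays strictly positive, so Proposition~\ref{fw:elementary}(3) covers the exact threshold predicate.

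With all certificates in place, Lemma~\ref{lem:rt-transfer} gives the deterministic statement. The randomized statement follows because the encoders are deterministic and independent of the index's coins, so the error $p$ is preserved. The dimension coefficient $c$ is chosen after $(\beta,\delta,p)$ as in Lemma~\ref{lem:rt-ovi}, and hard-length coverage is the stated domain hypothesis.

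The main obstacle is bookkeeping rather than combinatorics. The LCSS query length is quadratic in $d$, so the length domain must cover $m=2d^2+4d$ rather than a linear slice. The TS-SW and Swale rational scalings must also be shown to keep intermediate DP values, including the zero-restart maxima and the negative boundary values $-ir$, within the polynomial bound.
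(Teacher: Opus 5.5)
Your proposal is correct and follows essentially the same route as the paper. You invoke Lemma~\ref{lem:rt-transfer} on the four all-window certificates and record the query lengths $2d^2+4d$, $7d$, $7d$, $4d+1$ with $n$-fold target sizes. You then check $O(\log d)$-bit encodings (integer symbols, TS-SW rationals of denominator $d$, and the Swale reward multiplying integer counts), reverse polarity for the maximization scores via Proposition~\ref{fw:elementary}, and note that deterministic encoders preserve the error event. Your explicit rescaling of the TS-SW recurrence by $d$ and your appeal to parts (2) and (3) of Proposition~\ref{fw:elementary} are harmless elaborations of steps the paper leaves implicit.
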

\begin{proof}[Proof sketch]
Invoke Lemma~\ref{lem:rt-transfer} with the all-window bounds of Lemma~\ref{ds:lcss-separation} and Theorems~\ref{ds:edr-separation}, \ref{ds:swale-separation}, and~\ref{ds:tssw-separation}. The query lengths are $2d^2+4d$, $7d$, $7d$, and $4d+1$, respectively; target lengths equal $n$ times these lengths. Construction takes $nd^{O(1)}$ and $d^{O(1)}$ time. Integer symbol codes, or the TS-SW rationals of denominator $d$, use $O(\log d)$ bits. For Swale a fixed exactly implementable reward multiplies integer counts. For a maximizing score, Proposition~\ref{fw:elementary} reverses the threshold polarity. Deterministic encoders preserve the error event, so the common transfer supplies both hypotheses without repeating their exponent calculation.
\end{proof}

\section{Soft-DTW}
\label{sec:rt-soft}

Fix a positive rational temperature $\gamma=a/b$, with fixed positive integers $a,b$, and a fixed band $W$ or the unconstrained path set $\Pi_W$. We study the following absolute-loss specialization of the soft-min path framework \cite{cuturi2017soft,sakoe1978dynamic}:
\begin{align}
C_\pi(X,Q)&=\sum_{(i,j)\in\pi}|x_i-q_j|,\\
F_\gamma(X,Q)&=-\gamma\ln\sum_{\pi\in\Pi_W}
                 \exp(-C_\pi(X,Q)/\gamma).
\label{rs:soft-definition}
\end{align}
Here $\ln$ is the natural logarithm, paired with the natural exponential. This is the raw soft-DTW score, which can be negative; it is not the bias-corrected soft-DTW divergence \cite{blondel2021differentiable}. Absolute local loss is not everywhere differentiable, and no differentiability, nonnegativity, or metric axiom is needed below.

\begin{lemma}[Entropy bound]\label{rs:entropy}
For any finite nonempty path family of size $H$ and $C_*=\min_\pi C_\pi$,
\[
C_*-\gamma\ln H\le F_\gamma\le C_*.
\]
In particular, the length-$m$ grid has $H\le3^{2m}$, also after a band restriction that retains its diagonal.
\end{lemma}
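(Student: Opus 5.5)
The statement has two parts. The first is the sandwich inequality for $F_\gamma$. The second is a counting bound on monotone grid paths. I would prove the sandwich exactly as in Lemma~\ref{fw:entropy}, specialized to $M(N)=H$.

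\textbf{The sandwich inequality.} Let $Z=\sum_{\pi}e^{-C_\pi/\gamma}$ over the finite nonempty family. The minimizing path contributes the term $e^{-C_*/\gamma}$, and every term is positive, so $Z\ge e^{-C_*/\gamma}$. Each of the $H$ terms is at most $e^{-C_*/\gamma}$, so $Z\le He^{-C_*/\gamma}$. Apply the decreasing map $z\mapsto-\gamma\ln z$ with $\gamma>0$ to both inequalities. This gives
\[
C_*-\gamma\ln H\le F_\gamma\le C_*.
\]
Nothing here uses nonnegativity or integrality of the costs, so it holds for absolute-loss costs on any window.

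\textbf{The path count.} I would encode each endpoint-to-endpoint path by its step sequence over the three-letter alphabet $\{(1,0),(0,1),(1,1)\}$. Each step increases $i+j$ by one or two. The path runs from $i+j=2$ to $i+j=2m$, so it has at most $2m-2$ steps. The encoding is injective, because the start cell $(1,1)$ and the step sequence determine the path.

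Counting sequences of every length up to $2m-2$ gives
\[
H\le\sum_{\ell=0}^{2m-2}3^\ell<3^{2m-1}/2\le3^{2m}.
\]
A band restriction only removes paths, so the same upper bound holds for any subfamily. Keeping the diagonal ensures the banded family is nonempty, which the sandwich inequality requires.

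\textbf{Main obstacle.} The only delicate point is the nonemptiness and finiteness hypotheses, which the band condition supplies. Beyond that, one must note that $\ln$ is the natural logarithm, so the bound $\gamma\ln H\le 2m\gamma\ln3$ is the quantity later compared against the scaled gap $\kappa$.
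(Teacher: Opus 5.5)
Your proposal is correct and follows the paper's proof: the sandwich is Lemma~\ref{fw:entropy} with $M=H$, and the count comes from encoding each path as a sequence of at most $2m-2$ steps over three step types, where a band can only shrink the family and the retained diagonal keeps it nonempty. Your explicit sum over step-sequence lengths, $\sum_{\ell=0}^{2m-2}3^\ell<3^{2m}$, is a slightly more careful version of the paper's one-line count.
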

\begin{proof}[Proof sketch]
Apply Lemma~\ref{fw:entropy} with $M=H$. A length-$m$ monotone grid path uses at most $2m-2$ steps from three possible step types, so the number of paths is at most $3^{2m}$. A band that retains the diagonal keeps the path family nonempty and can only decrease this count.
\end{proof}
More generally, if minimum witness costs are at most $A$ in Yes instances and at least $B$ in No instances, a strict soft-score separation follows from $B-A>\gamma\ln H$. Uniformly scaling all witness costs by $s>0$ replaces this sufficient condition by $s(B-A)>\gamma\ln H$. The actual construction below also increases anchor separation, so that every noncanonical witness satisfies the required cost bound.

\begin{lemma}[Integer construction and all-window soft gap]
\label{rs:soft-gap}
Let $m=4d+1$ and set
\begin{align}
K&=4a+1,&\kappa&=Km+2,\nonumber\\
C&=\kappa(d+4)+Km+1,&\tau&=\kappa d.
\label{rs:soft-constants}
\end{align}
Use blocks $B_C(\kappa P(u))$ and query $B_C(\kappa R(v))$. If some $u\in A$ is orthogonal to $v$, a window has $F_\gamma\le\tau$. Otherwise every window has $F_\gamma>\tau+\kappa$. The data and query constructions take $O(nd)$ and $O(d)$ time, respectively, and use polynomial-in-$d$ integers.
\end{lemma}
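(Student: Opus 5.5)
The plan is to reduce the soft gap to the hard-min analysis of Theorem~\ref{thm:rt-raw}, now with scaled data, and then absorb the entropy loss of Lemma~\ref{rs:entropy}. There are three steps: a Yes upper bound, a uniform lower bound on the minimum path cost $C_*$ over every window in a No instance, and an entropy absorption using $H\le3^{2m}$.

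\textbf{Yes direction.} For an orthogonal $u$, the canonical window has a diagonal path whose anchors cancel. Its data cost is $\kappa\|P(u)-R(v)\|_1=\kappa d$ by the scaling part of Lemma~\ref{lem:raw-embedding}. The diagonal lies in every band, so $C_*\le\kappa d$, and the upper inequality of Lemma~\ref{rs:entropy} gives $F_\gamma\le C_*\le\tau$.

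\textbf{No direction, hard minimum.} Here I reapply Lemma~\ref{lem:rt-barrier} with data range $[0,\Delta]$, where $\Delta=2\kappa$. For a canonical window, a nondiagonal path has a first-departure cell of cost at least $C-2\kappa=\kappa(d+2)+Km+1$. The diagonal costs $\kappa(d+2\langle u,v\rangle)\ge\kappa(d+2)$. For a noncanonical window, the first cell costs at least $C-2\kappa$ as well. Since all costs are nonnegative, every path in every window of a No instance has $C_*\ge\kappa(d+2)$. The barrier term is even larger, so this is the binding bound.

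\textbf{Entropy absorption.} By the lower inequality of Lemma~\ref{rs:entropy},
\[
F_\gamma\ge\kappa(d+2)-\gamma\ln3^{2m}=\kappa d+2\kappa-2m\gamma\ln3 .
\]
It therefore suffices to show $\kappa>2m\gamma\ln 3$. Since $\gamma=a/b\le a$ and $2\ln3<2.2$, we have $2m\gamma\ln3<2.2am<(4a+1)m+2=\kappa$. This gives $F_\gamma>\tau+\kappa$.

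The main obstacle is this last comparison: the margin between the No bound and the threshold must beat an entropy penalty that grows linearly in $m$. The choice $\kappa=Km+2$ with $K=4a+1$ is designed to cover it. I will check this carefully without evaluating any logarithm, using only the elementary bound $\ln 3<1.1$. The extra $Km+1$ in $C$ appears to be slack so that the barrier also clears the threshold comfortably.

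\textbf{Resources.} The quantities $\kappa=O(d)$ and $C=O(d^2)$ are integers, and the anchors are $O(d^3)$. Path sums are bounded by $(2m-1)$ times the largest local cost, so everything is polynomial in $d$. The encoders are the scaled $B_C$ maps, built in $O(nd)$ and $O(d)$ time. The band restriction keeps the diagonal, so Lemma~\ref{rs:entropy}'s count still applies.
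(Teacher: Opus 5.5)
Your proposal is correct and follows the paper's proof essentially step for step. You take the diagonal Yes witness with $F_\gamma\le C_*$, apply the first-departure/first-cell barrier of Lemma~\ref{lem:rt-barrier} with $\Delta=2\kappa$ to get $C_*\ge\kappa(d+2)$ on every No window, and absorb $\gamma\ln H\le 2m\gamma\ln 3$ into the $\kappa$ margin. The only difference is cosmetic: the paper uses the cruder bound $\ln 3<2$ to get $\gamma\ln H<Km=\kappa-2$, hence $F_\gamma>\tau+\kappa+2$, which spares you the true but rather tight estimate $\ln 3<1.1$ (equivalently $e^{1.1}>3$).
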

\begin{proof}
Since $\ln3<2$ and $b\ge1$, the fixed integer $K$ satisfies $K>2\gamma\ln3$ without any transcendental computation in the construction. Hence, for $E=\gamma\ln H$,
\[
E\le2\gamma m\ln3<Km=\kappa-2.
\]
An orthogonal canonical diagonal has cost $\kappa d=\tau$, so Lemma~\ref{rs:entropy} gives the Yes bound. In a No canonical block, the diagonal costs at least $\kappa(d+2)$. All other canonical paths, and every path in a shifted window, contain a local difference at least
\[
C-2\kappa=\kappa(d+2)+Km+1>\kappa(d+2)
\]
by Lemma~\ref{lem:rt-barrier}, now with $\Delta=2\kappa$. Every No window therefore satisfies
\[
F_\gamma\ge\kappa(d+2)-E
>\tau+\kappa+2>\tau+\kappa.
\]
Only the data gadgets are multiplied by $\kappa$; the new common anchors have the values $jC$ prescribed above. Here $\kappa=O_\gamma(d)$, $C=O_\gamma(d^2)$, the largest symbol is $O_\gamma(d^3)$, and any integer path cost is $O_\gamma(d^4)$. All these discrete quantities use $O_\gamma(\log d)$ bits. None of the construction steps evaluates $\exp$, $\ln$, or $F_\gamma$.
\end{proof}

\begin{theorem}[Exact soft-DTW predicates in Word-RAM]\label{thm:rt-soft}
For fixed positive rational $\gamma$ and fixed $W$ (or no band), consider a uniform Word-RAM index that, on finitely encoded integer sequences and an integer threshold, returns the exact mathematical truth value of $\min_X F_\gamma(X,Q)\le\tau$. Deterministic SETH excludes preprocessing $O(N^\beta)$ together with worst-case query time $O(N^{1-\delta})$, for any fixed $\beta\ge1$ and $\delta\in(0,1)$. Under Randomized SETH the same holds for fixed per-input error probability $p<1/2$. The query-length domain must contain the hard lengths selected after these constants.
\end{theorem}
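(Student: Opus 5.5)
The plan is to instantiate the certificate of Proposition~\ref{fw:certificate} with the integer construction of Lemma~\ref{rs:soft-gap} and then apply Lemma~\ref{lem:rt-transfer}. Fix $\beta$, $\delta$, and, in the randomized case, $p$. Next take the dimension coefficient $c$ supplied by Lemma~\ref{lem:rt-ovi} with $b=\beta+\tfrac12$ and $\epsilon=\delta/2$, and set $d=\lceil c\log_2 n\rceil$. The data encoder maps $A$ to $S_A=B_C(\kappa P(u_1))\circ\cdots\circ B_C(\kappa P(u_n))$. The query encoder maps $q$ to $(B_C(\kappa R(q)),\tau=\kappa d)$, with $K,\kappa,C$ taken from~\eqref{rs:soft-constants}. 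These constants depend only on $d$ and the fixed numerator $a$ of $\gamma$, so both encoders are separable, deterministic and uniform in $d$. The construction times are $O(nd)$ and $O(d)$, the target length is $N=n(4d+1)$, and the query length is $m=4d+1=d^{O(1)}$. The length-domain hypothesis in the statement provides hard-length coverage.

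For the all-window equivalence, the Yes direction is immediate from Lemma~\ref{rs:soft-gap}: a canonical window for an orthogonal vector has $F_\gamma\le\tau$. For the No direction, the same lemma gives $F_\gamma>\tau+\kappa>\tau$ for every physical window, whether canonical or shifted. This holds for the unrestricted path family and for any fixed band $W$. I will check that the band does not break the argument. Lemma~\ref{rs:entropy} needs a nonempty family with at most $3^{2m}$ paths, and any band $W\ge0$ keeps the diagonal. Lemma~\ref{lem:rt-barrier} applies to every subset of paths, so the lower bound $\kappa(d+2)$ on $C_*$ in No windows survives the restriction. Since $F_\gamma$ is a genuine real number, the predicate $\min_X F_\gamma\le\tau$ takes a mathematically definite truth value on every window, and the assumed index returns exactly that value.

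I expect the main obstacle to be the model-legality condition. The score involves $\exp$ and $\ln$, so I have to argue that the reduction never evaluates them and only hands the index finitely encoded integers. Following the remark after Proposition~\ref{fw:precision}, the reduction only generates the inputs $S_A$, $Q_q$ and $\tau_q$. All of these are integers of polynomial magnitude in $d$, namely $O_\gamma(d^3)$ symbols and an $O_\gamma(d^2)$ threshold. By Proposition~\ref{fw:bits} they fit in $O(\log d)$ bits and hence in $O(1)$ words. The only transcendental fact used is $\ln 3<2$, which guarantees $K>2\gamma\ln3$. This is a fixed mathematical inequality used in the proof, not a machine computation. The index is assumed to be a Word-RAM index for the exact predicate, so no extended-model primitive is needed. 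This also explains why the stronger exact-transcendental model needs its own separate assumption.

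Finally, the randomized case follows because both encoders are deterministic and ignore the index's coins. As Lemma~\ref{lem:rt-transfer} shows, the composed OnlineOV answer is then wrong exactly when the index answer is wrong, so the joint per-input error $p$ is preserved. Putting these pieces together, Lemma~\ref{lem:rt-transfer} yields both the deterministic and the randomized exclusions.
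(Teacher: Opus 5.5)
Your proposal is correct and follows essentially the same route as the paper: it feeds the integer all-window gap of Lemma~\ref{rs:soft-gap} into Lemma~\ref{lem:rt-transfer} with adjusted exponents $(\beta+1/2,\delta/2)$, noting that the reduction never evaluates $\exp$, $\ln$, or $F_\gamma$ and that the deterministic encoders preserve the per-input error event. Your additional checks on band restrictions and integer magnitudes spell out details that the paper leaves inside Lemma~\ref{rs:soft-gap}.
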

\begin{proof}[Proof sketch]
Lemma~\ref{rs:soft-gap} gives a deterministic all-window reduction with integer input and threshold, $N=n(4d+1)$, and construction times $O(nd)$ and $O(d)$. Apply Lemma~\ref{lem:rt-transfer}, with its adjusted OnlineOV exponents $(\beta+1/2,\delta/2)$. The encoder never evaluates $\exp$, $\ln$, or the soft score; it invokes the hypothetical Word-RAM Boolean predicate once. Thus the composed procedure stays in Word-RAM and preserves the per-input error event. The soft-score separation, rather than an assumed transcendental primitive, is what establishes its correctness.
\end{proof}
The theorem supplies neither exact transcendental Word-RAM primitives nor an algorithm for evaluating $F_\gamma$. A mathematical real-valued formula can define a predicate on finite inputs without changing the machine model of a hypothetical algorithm solving that predicate.

\paragraph{Extended Real-RAM statement.}\label{rs:extended-model}
For completeness, consider a separate model: one uniform finite program with a finite set of built-in real constants independent of input size, unit-cost exact arithmetic, comparison, $\exp$, and $\ln$. Inputs remain finitely encoded, and an infinite table depending on $n$, $c$, or the input is forbidden. Assume directly that in this model, for every fixed preprocessing exponent $b_0\ge1$ and saving $\epsilon\in(0,1)$, there is $c_0>0$ such that for every fixed $c\ge c_0$, OnlineOV on $d=\lceil c\log n\rceil$ is incompatible with preprocessing $O(n^{b_0})$ and query time $O(n^{1-\epsilon})$. The time guarantee is uniform over all sufficiently large $n$. For randomized indexes require the corresponding branch for every fixed per-input error $p<1/2$, allowing $c$ to depend on $p$.
\begin{corollary}[Model-relative soft-DTW lower bound]
\label{rs:extended-soft}
Under this direct extended-Real-RAM OnlineOV hypothesis, the time exclusions in Theorem~\ref{thm:rt-soft} also hold for indexes in that extended model, with the matching deterministic or randomized branch.
\end{corollary}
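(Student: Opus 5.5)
The plan is to redo the proof of Lemma~\ref{lem:rt-transfer} inside the extended model. Nothing about the reduction itself needs to change. We compose the encoders of Lemma~\ref{rs:soft-gap} with a hypothetical extended-model soft-DTW index, which yields an extended-model OnlineOV index. The extended OnlineOV hypothesis then plays the role that Lemma~\ref{lem:rt-ovi} played in the Word-RAM setting.

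Concretely, fix $(\beta,\delta)$, and also $p$ in the randomized branch. Set $b_0=\beta+\tfrac12$ and $\epsilon=\delta/2$, and let the hypothesis supply $c_0$. Fix any $c\ge c_0$ and put $d=\lceil c\log n\rceil$.

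\textbf{Preprocessing.} On input $A$, compute $S_A$ as in Lemma~\ref{rs:soft-gap}, using $O(nd)$ integer operations. These are legal in the extended model, since exact arithmetic subsumes the Word-RAM integer operations used, and the polynomial-in-$d$ integers are finitely encoded inputs. Then build the assumed index on $S_A$. The total time is $n^{1+o(1)}+O(N^\beta)=O(n^{\beta+1/2})$ for large $n$, because $N=n(4d+1)$.

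\textbf{Query.} On arrival of $q$, form $Q_q$ and $\tau_q=\kappa d$ in $O(d)$ time and call the index once. The time is $O(n^{1-\delta/2})$.

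\textbf{Correctness.} Lemma~\ref{rs:soft-gap} is a purely mathematical statement about the real number $\min_XF_\gamma(X,Q_q)$, independent of any machine. The index's exact truth value therefore equals the OnlineOV answer over all windows. The encoders are deterministic and coin-independent, so the per-input error $p$ is preserved, as in Lemma~\ref{lem:rt-transfer}. Worst-case time is likewise preserved.

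\textbf{Main obstacle: model legality.} The difficulty is checking legality in the extended model, not any new combinatorics. Three points need to be verified.
\begin{enumerate}
\item The composed OnlineOV procedure must be one uniform finite program. Its only built-in real constants are those of the assumed index, and $K,\kappa,C$ are computed from $a$, $b$, $d$ by integer arithmetic, with no input-size-dependent table.
\item The OnlineOV instance has the required form: dimension $d=\lceil c\log n\rceil$ with the chosen $c\ge c_0$, and running times uniform for all large $n$.
\item The query length $4d+1$ lies in the index's covered domain.
\end{enumerate}
With these checked, the resulting algorithm contradicts the assumed extended OnlineOV hypothesis in the matching deterministic or randomized branch. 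I would state explicitly that no appeal to SETH or Lemma~\ref{lem:rt-ovi} is made, since that derivation is Word-RAM specific.
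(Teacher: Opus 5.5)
Your proposal is correct and matches the paper's proof: reuse the integer encoders and all-window gap of Lemma~\ref{rs:soft-gap}, rerun the resource calculation of Lemma~\ref{lem:rt-transfer} with $(b_0,\epsilon)=(\beta+1/2,\delta/2)$, and let the extended-model OnlineOV hypothesis take the place of Lemma~\ref{lem:rt-ovi}, so that only the source lower bound changes. Your explicit legality checks (a single uniform finite program with no size-dependent constants, the required dimension form, and hard-length coverage) spell out what the paper's sketch leaves implicit.
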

\begin{proof}[Proof sketch]
Use the same integer encoders and all-window gap from Lemma~\ref{rs:soft-gap}. The resource calculation in Lemma~\ref{lem:rt-transfer} now contradicts the stated OnlineOV hypothesis in the extended model, with $(b_0,\epsilon)=(\beta+1/2,\delta/2)$. This changes the assumed source lower bound, not the numerical construction or its error event.
\end{proof}
Ordinary SETH is not asserted to imply this stronger machine-model hypothesis. Rejecting it removes this extended-model conclusion, not the Word-RAM Boolean-predicate theorem.

\paragraph{Finite-precision scores.}\label{rs:finite-soft}
If a score-producing implementation certifies $|\widehat F_\gamma-F_\gamma|\le\kappa/2$ uniformly on every window of every constructed instance, then the threshold $\tau+\kappa/2$ separates the estimated Yes and No scores: the former is at most that threshold and the latter is strictly greater. Taking a minimum over all windows preserves the same error bound. More generally, Yes scores at most $A$ and No scores strictly above $B$ are separated by the midpoint whenever the uniform error is at most $(B-A)/2$. This conditional numerical transfer uses the complexity assumption appropriate to the implementation's machine model; it does not certify a specific floating-point implementation.

An exact log-sum-exp rearrangement is
\[
F_\gamma=C_*-\gamma\ln
\sum_\pi\exp\bigl(-(C_\pi-C_*)/\gamma\bigr).
\]
Its largest exponential is $1$, which prevents all terms from underflowing together because of a large common cost. Terms with very large excess costs may still underflow, and summation and rounding errors still require a uniform bound. A black-box exact Boolean predicate and a rounded score with a proved error guarantee are different promises; the former needs no numerical error certificate for the reduction in Theorem~\ref{thm:rt-soft}.

\section{Effects of normalization}
\label{fw:normalization}

For a nonconstant length-$m$ real sequence, write $\mu_X=m^{-1}\sum_i x_i$, $\sigma_X^2=m^{-1}\sum_i(x_i-\mu_X)^2$, and $Z(X)=(X-\mu_X\mathbf1)/\sigma_X$. The following boundary statements concern explicit hypotheses on a construction, not a general impossibility of normalized reductions.

\begin{proposition}[Affine invariance and common moments]
\label{fw:moments}
For $a>0$, $Z(aX+b\mathbf1)=Z(X)$. If a query and every relevant window share mean $\mu_0$ and positive standard deviation $\sigma_0$, and a score satisfies
\[
D(aX+b\mathbf1,aQ+b\mathbf1)=a^rD(X,Q),
\]
then $D(Z(X),Z(Q))=\sigma_0^{-r}D(X,Q)$. Thresholds and additive gaps scale by this factor whenever the common statistics and comparisons are legal in the declared model.
\end{proposition}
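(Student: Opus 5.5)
The plan has two parts: first the affine invariance of $Z$, then the substitution into the homogeneity identity.

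For the invariance, take $a>0$ and put $Y=aX+b\mathbf1$. Linearity of the mean gives $\mu_Y=a\mu_X+b$. Hence $Y-\mu_Y\mathbf1=a(X-\mu_X\mathbf1)$, and therefore $\sigma_Y^2=a^2\sigma_X^2$. Because $a>0$, taking the positive square root gives $\sigma_Y=a\sigma_X$, with no absolute value appearing. Since $X$ is nonconstant, $\sigma_X>0$, so $\sigma_Y>0$ as well and $Z(Y)$ is defined. Dividing then gives $Z(Y)=a(X-\mu_X\mathbf1)/(a\sigma_X)=Z(X)$. This is also where I would note that $a<0$ would flip the sign of $Z$, so positivity is genuinely used.

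For the second claim, I use the common statistics to write $Z(X)=aX+b\mathbf1$ and $Z(Q)=aQ+b\mathbf1$ with the same coefficients, namely $a=\sigma_0^{-1}>0$ and $b=-\mu_0/\sigma_0$. This works for every relevant window $X$ and for the query, precisely because they share $\mu_0$ and $\sigma_0$. Applying the assumed homogeneity with this single pair $(a,b)$ gives $D(Z(X),Z(Q))=\sigma_0^{-r}D(X,Q)$ directly.

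The main point to watch is that the statement needs a single common transform. If the window and query had different statistics, they would receive different $(a,b)$, and the identity would not apply. This is exactly why the hypothesis requires equal moments, and I would state that explicitly.

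For thresholds, multiplication by the positive constant $\sigma_0^{-r}$ is monotone. So $D(X,Q)\le\tau$ holds if and only if $D(Z(X),Z(Q))\le\sigma_0^{-r}\tau$. Taking the minimum over windows commutes with this scaling, and an additive gap $\Delta$ becomes $\sigma_0^{-r}\Delta$. Representing $\sigma_0^{-r}$ and comparing against it may involve square roots, so I would not prove legality in the declared model. It is left as the stated proviso, checked per construction, for example by comparing squared quantities as in the CID section.
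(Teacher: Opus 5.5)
Your proposal is correct and follows essentially the paper's proof: the paper likewise uses that, for $a>0$, the mean and standard deviation transform as $a\mu_X+b$ and $a\sigma_X$, and then applies the homogeneity identity with the single pair $a=1/\sigma_0$, $b=-\mu_0/\sigma_0$. Your remarks on monotone scaling of thresholds and gaps, and on leaving model legality as a proviso, spell out what the paper states without a separate argument.
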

\begin{proof}
Mean and standard deviation transform as $a\mu_X+b$ and $a\sigma_X$. For the second statement apply the homogeneity identity with $a=1/\sigma_0$ and $b=-\mu_0/\sigma_0$.
\end{proof}
For $a<0$ the normalized sequence instead changes sign.  Multiplying only a subset of coordinates is not a common affine transformation. More importantly, common moments of canonical blocks do not establish common moments for every sliding window.

\begin{proposition}[Common window means and periodicity]\label{fw:periodicity}
For $1\le m\le N$, all length-$m$ windows of $S=(s_1,\ldots,s_N)$ have the same mean if and only if
\[
 s_{i+m}=s_i\qquad(1\le i\le N-m).
\]
Under this condition every window has the same contents as one of the first $K=\min\{m,N-m+1\}$ windows.
\end{proposition}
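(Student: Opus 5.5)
The plan is to compare the sums of consecutive windows. Write $W_i$ for the window starting at position $i$, for $1\le i\le N-m+1$. Since every window has the same length $m$, equal means is the same as equal sums. Consecutive windows share $m-1$ entries, so
\[
\sum W_{i+1}-\sum W_i=s_{i+m}-s_i\qquad(1\le i\le N-m).
\]

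For the forward direction, equal means make every such difference zero, which gives $s_{i+m}=s_i$ for every $i\le N-m$. For the converse, the periodicity condition makes every difference vanish. All window sums then equal the first one, so all window means coincide. If $N=m$ there is only one window and the condition is vacuous, so the equivalence holds trivially.

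For the counting claim, assume the periodicity. By induction on $k$, we get $s_{i+km}=s_i$ whenever both indices lie in $[1,N]$. Hence window $W_j$ agrees entrywise with window $W_{j'}$, where $j'\equiv j\pmod m$ and $1\le j'\le m$. Indeed, the $r$th entries $s_{j+r-1}$ and $s_{j'+r-1}$ differ in index by a multiple of $m$, and both indices lie in $[1,N]$ because $j'\le j$. This shows every window equals one of $W_1,\dots,W_m$. There are only $N-m+1$ windows in total, so every window equals one of the first $K=\min\{m,N-m+1\}$ windows.

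There is no real obstacle here. The only care needed is with the index ranges: the entry-wise periodicity should be applied along a chain of indices that stays within $[1,N]$, which it does because we step down from $j$ to $j'$ in increments of $m$.
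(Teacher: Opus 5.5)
Your proof is correct and follows essentially the same route as the paper: the adjacent-window difference $s_{i+m}-s_i$ gives the equivalence, and reducing each start modulo $m$ to a representative $r\le m$ with $r\le t$ identifies every window with one of the first $K$. Your explicit induction along the index chain, and your separate handling of the case $N-m+1<m$, spell out details that the paper leaves implicit.
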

\begin{proof}
If $W_i=\sum_{j=0}^{m-1}s_{i+j}$, adjacent sums satisfy $W_{i+1}-W_i=s_{i+m}-s_i$. Equality of all window means is therefore equivalent to the displayed periodicity condition. For any start $t$, choose $1\le r\le m$ with $t\equiv r\pmod m$. Since $r\le t\le N-m+1$, periodicity identifies the entire window at $t$ with the one at $r$. This also proves the stated bound near the end of the target.
\end{proof}

\begin{corollary}[A static index for the common-mean promise]\label{fw:periodic-index}
Assume target symbols admit exact $O(1)$-time comparison, and that a score depends only on window contents, the query, and explicit parameters. If one window of length $m$ is scored in $T(m)$ time, the common-mean promise admits $O(N^2)$ preprocessing, $O(N)$ space, and query time
\[
 O\bigl(m^2+mT(m)\bigr).
\]
Preprocessing does not need to know the query length.
\end{corollary}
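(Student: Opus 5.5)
The plan is to use Proposition~\ref{fw:periodicity} directly. The index stores $S$ and, for every candidate length, enough information to decide whether the common-mean promise holds at that length and which windows are representatives. The promise holds at length $m$ exactly when $m$ is a period of $S$, meaning $s_{i+m}=s_i$ for all $1\le i\le N-m$. Preprocessing will therefore compute, for every $m\in\{1,\ldots,N\}$, one Boolean flag $\mathrm{per}[m]$ recording whether $m$ is a period. A direct check costs $O(N-m)$ comparisons per $m$, so all flags together cost $O(N^2)$ exact symbol comparisons. Storage is $S$ plus $N$ bits (or words), hence $O(N)$ space. None of this depends on the query length, which gives the final sentence of the statement.

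At query time, read $Q$, its length $m$, the threshold, and the explicit parameters; this costs at most $O(m)$, which is absorbed. If the promise is part of the problem, $\mathrm{per}[m]$ holds and no test is needed. I would still consult the flag in $O(1)$ so that the index is well defined. Proposition~\ref{fw:periodicity} then says every length-$m$ window has the same contents as one of the first $K=\min\{m,N-m+1\}\le m$ windows. The score depends only on window contents, the query and the parameters, so the minimum (or maximum) over $\operatorname{Win}_m(S)$ equals the extremum over these $K$ windows.

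The query algorithm copies each of the $K$ representative windows out of the stored $S$. Each copy costs $O(m)$, for a total of $O(m^2)$. It then scores each copy in $T(m)$, for a total of $O(mT(m))$. Finally it takes the extremum and compares it with the threshold, which also covers witness or optimum outputs in $O(m)$ extra time. Together this gives $O(m^2+mT(m))$. The $m^2$ term is just the explicit materialization cost. If the scoring routine can read windows in place, that term is dominated anyway.

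The only step needing care is correctness near the end of the target, where $N-m+1<m$. Proposition~\ref{fw:periodicity} already handles this case, since it chooses the representative $r\le t$. The other point to check is that the period test is a test of the exact promise, not merely a consequence of it. That follows from the ``if and only if'' in the proposition. No other obstacle is expected: everything uses only $O(1)$-time exact comparisons, as assumed.
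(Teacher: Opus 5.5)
Your proposal is correct and matches the paper's proof essentially step for step. Both precompute the period flags with $N(N-1)/2$ exact comparisons and store only $S$ plus $N$ flags. At query time, both check the flag and score only the first $K\le m$ windows supplied by Proposition~\ref{fw:periodicity}, paying $O(Km)$ for copying and $O(KT(m))$ for scoring.
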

\begin{proof}[Proof sketch]
Store $S$ and, for every $m\le N$, the flag $\mathrm{valid}[m]=\bigwedge_{i=1}^{N-m}(s_i=s_{i+m})$. The total number of comparisons is $N(N-1)/2$. A query in the promise domain passes its flag check and needs only the first $K$ windows from Proposition~\ref{fw:periodicity}. Copying these windows takes $O(Km)$ time and scoring them takes $O(KT(m))$, with $K\le m$. Store only the target and the $N$ flags, giving linear space.
\end{proof}

For $m=\operatorname{polylog}N$ and $T(m)=\operatorname{poly}(m)$ this is a polylogarithmic query bound. Consequently, forcing a common mean on \emph{every} window cannot establish a near-linear query exclusion for every polynomial preprocessing exponent. Queries outside this promise may use a separate scan. A score depending on absolute window positions needs a separate argument; local timestamps reset within the window are part of its contents. Positive variance is still required when the score uses $Z$-normalization. The common-moment identity in Proposition~\ref{fw:moments} remains valid, but our positive normalized construction balances only canonical blocks and bounds shifted-window perturbations separately.

\begin{lemma}[Perturbation under a shared anchor skeleton]
\label{fw:attenuation}
Let $X,Y$ have the same length $m$ and anchor positions and values, and differ only on bounded data coordinates with bounded differences. Suppose, uniformly in the family, that
\[
\sigma_X,\sigma_Y=\Theta(A),\quad
|\mu_X-\mu_Y|+|\sigma_X-\sigma_Y|=O(1),
\]
and all centered coordinates have magnitude $O(A)$, where $A\to\infty$. Then
\begin{align*}
\|Z(X)-Z(Y)\|_\infty&=O(A^{-1}),\\
\|Z(X)-Z(Y)\|_1&=O(m/A).
\end{align*}
\end{lemma}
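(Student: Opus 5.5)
The plan is to write the coordinatewise difference of normalized sequences as a sum of two terms, one driven by the data discrepancy and one by the change in moments, and to bound each term separately. For each position $i$, put $\tilde x_i=x_i-\mu_X$ and $\tilde y_i=y_i-\mu_Y$. Then
\[
Z(X)_i-Z(Y)_i=\frac{\tilde x_i-\tilde y_i}{\sigma_X}
+\tilde y_i\Bigl(\frac{1}{\sigma_X}-\frac{1}{\sigma_Y}\Bigr).
\]

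First term. At an anchor position $x_i=y_i$, so
\[
\tilde x_i-\tilde y_i=-(\mu_X-\mu_Y)=O(1).
\]
At a data position, the hypothesis of bounded data differences gives $|x_i-y_i|=O(1)$, and again $|\tilde x_i-\tilde y_i|=O(1)$. Dividing by $\sigma_X=\Theta(A)$ makes the first term $O(A^{-1})$ uniformly in $i$.

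Second term. We have
\[
\Bigl|\frac{1}{\sigma_X}-\frac{1}{\sigma_Y}\Bigr|=\frac{|\sigma_X-\sigma_Y|}{\sigma_X\sigma_Y}=O(A^{-2}),
\]
and the centered-coordinate hypothesis gives $|\tilde y_i|=O(A)$. Their product is $O(A^{-1})$.

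Together these bound every coordinate by $O(A^{-1})$, which is the $L_\infty$ claim. Summing over the $m$ coordinates gives the $L_1$ claim $O(m/A)$ at once.

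The only delicate point is making sure the implicit constants are uniform over the family and do not depend on $m$. That is exactly what the stated uniform hypotheses supply: $\Theta(A)$ for both deviations, $O(1)$ for the moment differences, and $O(A)$ for the centered coordinates. The data differences must also be bounded uniformly, which is the meaning of ``bounded differences.''

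If one prefers not to assume the centered-coordinate bound directly, it follows from anchors of size $O(A)$ together with $|\mu_Y|=O(A)$. I would keep it as stated, so the argument remains purely algebraic. No property of any score is needed; the lemma is a statement about $Z$ alone.
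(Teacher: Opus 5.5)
Your proof is correct and follows essentially the same route as the paper, which uses the same two-term decomposition $Z(X)_i-Z(Y)_i=\frac{x_i-y_i-(\mu_X-\mu_Y)}{\sigma_X}+(y_i-\mu_Y)\frac{\sigma_Y-\sigma_X}{\sigma_X\sigma_Y}$. The paper bounds each term by $O(1/A)$ from the stated hypotheses and then takes the maximum or sum over coordinates, as you do.
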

\begin{proof}
At coordinate $i$, the difference is
\begin{align*}
Z(X)_i-Z(Y)_i
={}&\frac{x_i-y_i-(\mu_X-\mu_Y)}{\sigma_X}\\
 &+(y_i-\mu_Y)\frac{\sigma_Y-\sigma_X}{\sigma_X\sigma_Y}.
\end{align*}
Both terms are $O(1/A)$ by the hypotheses.  Take the maximum or sum over all coordinates.  In particular, normalization can perturb anchor coordinates even when their original values agree.
\end{proof}

\begin{theorem}[Conditional attenuation of the data signal]
\label{fw:signal}
Assume Lemma~\ref{fw:attenuation} holds for the Yes/No constructions being compared with the same query.  Suppose the normalized score is $L(d)$-Lipschitz in its first input in $\ell_1$ on this family. Their score difference is at most $O(L(d)m(d)/A(d))$. If this quantity tends to zero, the bounded data perturbation alone cannot generate a fixed positive additive gap. For a pointwise additive score $F(X,Q)=\sum_i f(x_i,q_i)$, it suffices that $f(\cdot,q)$ is uniformly $\ell(d)$-Lipschitz; then $L(d)=\ell(d)$.
\end{theorem}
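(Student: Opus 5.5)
The plan is to combine the $\ell_1$ perturbation bound of Lemma~\ref{fw:attenuation} with the assumed Lipschitz property. Fix a query $Q$ and two constructed windows $X$ (Yes) and $Y$ (No) sharing the anchor skeleton. The score is evaluated on normalized inputs, so write $G(\cdot)=D(Z(\cdot),Z(Q))$ for the normalized score as a function of its first input. The $L(d)$-Lipschitz hypothesis in the first argument, on this family, gives
\[
|G(X)-G(Y)|\le L(d)\,\|Z(X)-Z(Y)\|_1 .
\]
Lemma~\ref{fw:attenuation} bounds the right-hand $\ell_1$ norm by $O(m(d)/A(d))$. The hidden constant must be uniform over the family; this is exactly the uniformity built into the hypotheses of that lemma. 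Multiplying the two bounds gives the claimed $O(L(d)m(d)/A(d))$.

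For the gap statement, suppose a fixed $\gamma_0>0$ separated Yes scores from No scores for every large $d$. This requires $|G(X)-G(Y)|\ge\gamma_0$ for some comparable Yes/No pair. That contradicts the bound once $L m/A\to0$. I will phrase it carefully: the conclusion only concerns pairs differing solely in bounded data on a common skeleton, as assumed. It does not exclude gaps created by anchor changes or by other windows. This matches the word ``alone'' in the statement.

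For the pointwise additive case I will verify the Lipschitz constant directly. Write
\[
|F(Z(X),Z(Q))-F(Z(Y),Z(Q))|\le\sum_i|f(Z(X)_i,Z(Q)_i)-f(Z(Y)_i,Z(Q)_i)|\le\ell(d)\sum_i|Z(X)_i-Z(Y)_i| .
\]
This is the definition of $\ell(d)$-Lipschitz in $\ell_1$, so $L(d)=\ell(d)$.

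The main subtlety, rather than obstacle, is the meaning of ``Lipschitz on this family''. The normalized inputs $Z(X)$ and $Z(Y)$ must lie in the domain where the Lipschitz constant applies. I will take the hypothesis as stated for the normalized sequences and note that nothing further is needed. For DTW-type minima, a remark could add that a minimum of $1$-Lipschitz path sums is Lipschitz with constant equal to the maximum multiplicity of a coordinate on a path, but that lies outside the theorem.
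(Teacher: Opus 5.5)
Your proposal is correct and matches the paper's proof: you combine the $L(d)$-Lipschitz hypothesis with the $\ell_1$ bound $O(m/A)$ from Lemma~\ref{fw:attenuation}, and in the additive case you sum the coordinatewise Lipschitz bounds to get $L(d)=\ell(d)$. Your contradiction argument for the fixed-gap claim and your remark on the uniform hidden constant only make explicit what the paper's two-line proof leaves implicit.
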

\begin{proof}
Apply the Lipschitz bound to the $\ell_1$ conclusion of Lemma~\ref{fw:attenuation}.  For the additive instance, $|F(X,Q)-F(Y,Q)|\le\sum_i\ell(d)|x_i-y_i|$.
\end{proof}
For the shared skeleton containing a constant fraction of evenly spaced anchors $c_j=jC$, with $m=\Theta(d)$ and bounded data values, the centered scale is $A=\Theta(dC)$.  Indeed, the anchor second moment is $O((dC)^2)$, while two constant fractions of anchors are separated by $\Theta(dC)$, giving a matching variance lower bound. For corresponding bounded data perturbations, the mean changes by $O(1)$; centering is an orthogonal projection, so the standard deviation changes by at most $\|X-Y\|_2/\sqrt m=O(1)$. Hence a pointwise score with $\ell(d)=O(1)$ has signal at most $O(1/C)$. These stability and Lipschitz conditions are not asserted for every elastic dynamic program or every shifted window.  Path repetition may increase $L(d)$, and a new normalized encoding may carry another signal.  Positive shrinking gaps still support exact predicates by Proposition~\ref{fw:elementary}; a specified finite-precision score needs its own gap-preservation argument.

\begin{proposition}[Perturbation of absolute-cost path scores]\label{fw:path-perturbation}
Let $X,Y,Q$ have length $m$ and $h=\|X-Y\|_\infty$. Fix the same nonempty, value-independent set $\Pi$ of monotone endpoint-to-endpoint paths with steps $(1,0),(0,1),(1,1)$. Define
\begin{align*}
 C_\pi(X,Q)&=\sum_{(i,j)\in\pi}|x_i-q_j|,\\
 D_\Pi(X,Q)&=\min_{\pi\in\Pi}C_\pi(X,Q),\\
 S_{\Pi,\gamma}(X,Q)&=-\gamma\ln\sum_{\pi\in\Pi}e^{-C_\pi(X,Q)/\gamma},\\
 F_\Pi(X,Q)&=\min_{\pi\in\Pi}\max_{(i,j)\in\pi}|x_i-q_j|,
\end{align*}
where $\gamma>0$ is fixed. Then
\begin{align*}
 |D_\Pi(X,Q)-D_\Pi(Y,Q)|&\le(2m-1)h,\\
 |S_{\Pi,\gamma}(X,Q)-S_{\Pi,\gamma}(Y,Q)|&\le(2m-1)h,\\
 |F_\Pi(X,Q)-F_\Pi(Y,Q)|&\le h.
\end{align*}
\end{proposition}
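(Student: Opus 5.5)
The plan is to bound the change of each individual path cost, and then pass the bound through the aggregation: minimum, soft minimum, or the minimum of maxima. All three scores use the same path set $\Pi$ for $X$ and $Y$. That set is fixed, nonempty and value-independent, so each score is an aggregation over the same index set and only the per-path quantities move.

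\textbf{Step 1 (pointwise cell bound).} For every cell $(i,j)$, the reverse triangle inequality gives $\bigl||x_i-q_j|-|y_i-q_j|\bigr|\le|x_i-y_i|\le h$.

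\textbf{Step 2 (path lengths).} A monotone endpoint-to-endpoint path on the $m\times m$ grid with steps $(1,0),(0,1),(1,1)$ makes at most $2m-2$ steps, since each step increases $i+j$ by at least one and $i+j$ goes from $2$ to $2m$. So it visits at most $2m-1$ cells. Summing Step 1 over the cells gives $|C_\pi(X,Q)-C_\pi(Y,Q)|\le(2m-1)h$ for every $\pi\in\Pi$. For the bottleneck, the maximum of finitely many values is $1$-Lipschitz in the sup norm. Hence $\bigl|\max_{\pi}|x_i-q_j|-\max_{\pi}|y_i-q_j|\bigr|\le h$ for each fixed $\pi$, where both maxima range over the cells $(i,j)\in\pi$.

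\textbf{Step 3 (aggregation).} For $D_\Pi$ and $F_\Pi$, I use that a minimum over a common finite index set is $1$-Lipschitz in the sup norm of the indexed values. Concretely, if $|a_\pi-b_\pi|\le\eta$ for all $\pi$, then $\min a\le a_{\pi^*}\le b_{\pi^*}+\eta=\min b+\eta$ for a minimizer $\pi^*$ of $b$, and the argument is symmetric. For the soft minimum, set $\eta=(2m-1)h$. Step 2 gives $e^{-C_\pi(X)/\gamma}\le e^{\eta/\gamma}e^{-C_\pi(Y)/\gamma}$ termwise. Summing over $\pi$ and applying the decreasing map $-\gamma\ln$ yields $S_{\Pi,\gamma}(X,Q)\ge S_{\Pi,\gamma}(Y,Q)-\eta$, and symmetry gives the reverse inequality. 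Equivalently, log-sum-exp is monotone and translation-equivariant.

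The only point that needs care is the step count in Step 2, because it is what produces the factor $2m-1$. It is also worth noting that the soft-min bound does not lose a $\gamma\ln|\Pi|$ term. The reason is that the comparison is termwise on the same family, not a comparison through the hard minimum as in Lemma~\ref{fw:entropy}. Finally, the argument needs that $\Pi$ does not depend on the values; otherwise the common index set used in Step 3 would not exist.
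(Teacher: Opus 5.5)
Your proposal is correct and matches the paper's proof step for step: the cellwise reverse-triangle bound, the $2m-1$ cell count for path sums, the termwise comparison of the partition sums before applying $-\gamma\ln$, and the per-path bottleneck change of at most $h$, passed through the minimum over the unchanged family $\Pi$. Your explicit justification of the cell count via $i+j$, and your remark that the soft-min bound incurs no $\gamma\ln|\Pi|$ loss, are small clarifications that the paper leaves implicit.
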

\begin{proof}
For each cell, the reverse triangle inequality bounds the change in absolute cost by $h$. A monotone path visits at most $2m-1$ cells. With $\varepsilon=(2m-1)h$, every path cost therefore changes by at most $\varepsilon$. Taking the minimum preserves this bound, proving the DTW assertion.

For soft-DTW, let $Z_X=\sum_{\pi\in\Pi}e^{-C_\pi(X,Q)/\gamma}$ and define $Z_Y$ analogously. The same pathwise inequalities give
\[
 e^{-\varepsilon/\gamma}Z_Y\le Z_X
       \le e^{\varepsilon/\gamma}Z_Y.
\]
Applying $-\gamma\ln$ proves the second bound. For DFD the maximum along one path changes by at most $h$; minimizing over the unchanged path family proves the third. Neither the optimal path nor the minimizing witness needs to remain the same after perturbation.
\end{proof}

\begin{corollary}[Signal bounds for a normalized anchor skeleton]\label{fw:path-signal}
Under all the hypotheses of Lemma~\ref{fw:attenuation}, compare $Z(X)$ and $Z(Y)$ with the same normalized query and the same value-independent path family. Absolute-cost DTW and raw soft-DTW change by $O(m/A)$; the DFD bottleneck changes by $O(1/A)$. For $m=\Theta(d)$ and the arithmetic anchor skeleton with $A=\Theta(dC)$, these bounds become $O(1/C)$ and $O(1/(dC))$, respectively.
\end{corollary}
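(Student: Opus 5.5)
The plan is to combine Lemma~\ref{fw:attenuation} with Proposition~\ref{fw:path-perturbation} directly. Put $h=\|Z(X)-Z(Y)\|_\infty$.

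\emph{Step 1: bound $h$.} Under the hypotheses of Lemma~\ref{fw:attenuation}, we have $h=O(A^{-1})$ uniformly on the family.

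\emph{Step 2: transfer to the three path scores.} Apply Proposition~\ref{fw:path-perturbation} to the triple $(Z(X),Z(Y),Z(Q))$. Its hypotheses are satisfied: the three sequences have common length $m$, and the path family is value-independent and nonempty by assumption. The proposition then gives:
\begin{itemize}
\item a change of at most $(2m-1)h=O(m/A)$ for absolute-cost DTW and for raw soft-DTW at fixed $\gamma$;
\item a change of at most $h=O(1/A)$ for the DFD bottleneck.
\end{itemize}
The only point to check is that the proposition uses the sup-norm difference of the \emph{first} argument with the query held fixed. That is exactly our situation, since the normalized query $Z(Q)$ is shared by both comparisons.

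\emph{Step 3: specialize to the arithmetic skeleton.} Here we use the remark following Theorem~\ref{fw:signal}. For anchors $c_j=jC$ forming a constant fraction of the $m=\Theta(d)$ positions, with bounded data, the centered scale is $A=\Theta(dC)$. The supporting facts are:
\begin{itemize}
\item the anchor second moment is $O((dC)^2)$, and two constant fractions of anchors separated by $\Theta(dC)$ give the matching variance lower bound;
\item a bounded data perturbation changes the mean by $O(1)$;
\item centering is an orthogonal projection, so the standard deviation changes by at most $\|X-Y\|_2/\sqrt m=O(1)$.
\end{itemize}
Hence the hypotheses of Lemma~\ref{fw:attenuation} hold with $A=\Theta(dC)$. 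Substituting into Step 2 gives $O(m/A)=O(d/(dC))=O(1/C)$ for DTW and soft-DTW, and $O(1/A)=O(1/(dC))$ for DFD.

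The main obstacle is confirming, in Step 3, that the variance lower bound and the centered-coordinate bound $O(A)$ hold uniformly over the family. The moment estimates above need to be spelled out for this. Step 2 itself is a direct substitution.
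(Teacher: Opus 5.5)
Your proposal is correct and follows the paper's proof. You bound $\|Z(X)-Z(Y)\|_\infty=O(1/A)$ by Lemma~\ref{fw:attenuation}, substitute this into Proposition~\ref{fw:path-perturbation} (a $2m-1$ path-length factor for the sum and soft-min scores, none for the bottleneck), and then plug in $m=\Theta(d)$ and $A=\Theta(dC)$. Your Step~3 re-derivation of $A=\Theta(dC)$ is harmless but not an obligation of this corollary: the statement assumes the lemma's hypotheses and takes that scale from the remark after Theorem~\ref{fw:signal}, so the ``main obstacle'' you flag lies outside this proof.
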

\begin{proof}[Proof sketch]
Lemma~\ref{fw:attenuation} gives $\|Z(X)-Z(Y)\|_\infty=O(1/A)$. Substitute this bound into Proposition~\ref{fw:path-perturbation}. The sum and soft-min bounds use the path-length factor; the bottleneck bound does not.
\end{proof}

These are conditional bounds for the compared sequences, not a claim that every shifted window has the required moments. The path family must be unchanged by the perturbation, and TWED or MSM requires its own analysis. The soft-DTW estimate is an identity about real scores; it grants no exact transcendental Word-RAM primitive. A positive gap that tends to zero can still support an exact threshold reduction. The next section supplies such a construction with independently verified moments and finite-word comparisons.

\section{Normalized DTW and DFD}
\label{sec:z}

Raw-score reductions do not automatically survive independent window normalization.  We balance the moments of canonical blocks and bound, rather than eliminate, their changes in every shifted window.  This section concerns \emph{squared} local loss for DTW; it does not substitute squared loss for the absolute loss used elsewhere in the paper.

\paragraph{Scores and input domain.}
For an integer sequence $X=(x_1,\ldots,x_m)$ of positive population variance, put
\begin{gather*}
 \mu_X=m^{-1}\sum_i x_i,\qquad
 v_X=m^{-1}\sum_i x_i^2-\mu_X^2,\\
 Z_X(i)=\frac{x_i-\mu_X}{\sqrt{v_X}}.
\end{gather*}
Let $\Pi_m$ consist of the global monotone paths from $(1,1)$ to $(m,m)$ with steps $(1,0),(0,1),(1,1)$, either unrestricted or restricted to any fixed nonnegative Sakoe--Chiba band.  The diagonal is legal in either case. Define
\begin{align*}
 D_Z^{\mathrm{sqDTW}}(X,Q)
   &=\min_{\pi\in\Pi_m}\sum_{(i,j)\in\pi}
          (Z_X(i)-Z_Q(j))^2,\\
 D_Z^{\mathrm{sqDFD}}(X,Q)
   &=\min_{\pi\in\Pi_m}\max_{(i,j)\in\pi}
          (Z_X(i)-Z_Q(j))^2.
\end{align*}
Ordinary normalized discrete Fr\'echet distance is $D_Z^{\mathrm{DFD}}=\sqrt{D_Z^{\mathrm{sqDFD}}}$. There is no division by path length or outer square root in $D_Z^{\mathrm{sqDTW}}$.  The index receives raw integers and a rational threshold.  Each window and the query use their own moments; irrational normalized coordinates are not assumed to be input machine words. The legal instance domain requires positive variance for the query and every compared window.  Our construction satisfies this requirement. Write $\operatorname{Win}_m(S)$ for all contiguous length-$m$ windows of $S$.

\subsection{Balanced encoding}

For $a,b\in\{0,1\}$ use the six-coordinate gadgets
\[
\begin{array}{ll}
 P_0=(1,-1,0,0,0,0),&P_1=(0,0,1,-1,0,0),\\
 Q_0=(0,0,0,0,1,-1),&Q_1=(0,0,-1,1,0,0).
\end{array}
\]
Every gadget has coordinate sum zero and squared norm two.  Disjoint supports, except that $Q_1=-P_1$, give
\begin{equation}
 \|P_a-Q_b\|_2^2=4+4ab,\qquad
 \|P_a-Q_b\|_\infty=1+ab.
 \label{eq:z:gadget}
\end{equation}
For Boolean vectors $u,q\in\{0,1\}^d$, concatenate the respective gadgets to obtain $p(u),q^{\flat}(q)\in\{-1,0,1\}^{k}$, where
\[
 k=6d,\qquad C=d+8,\qquad m=2k+1=12d+1.
\]
Define the anchored block
\[
 B(x)=(C,x_1,2C,x_2,\ldots,kC,x_k,(k+1)C).
\]
For an ordered data family $A=(u^{(1)},\ldots,u^{(n)})$, allowing repetitions, set
\[
 S_A=B(p(u^{(1)}))\circ\cdots\circ B(p(u^{(n)})),
 \qquad Q_q=B(q^{\flat}(q)).
\]
Thus $N=nm$, without additional separators.  The $n$ complete blocks are the \emph{canonical} windows; the others are \emph{shifted} windows.

\begin{lemma}[Moments of every window]\label{lem:z:moments}
All canonical blocks and queries share sum $H$, squared sum $E$, mean $\mu=H/m$, and variance $v=E/m-\mu^2$, where
\begin{align*}
 H&=\frac{C(k+1)(k+2)}2,\\
 E&=\frac{C^2(k+1)(k+2)(2k+3)}6+2d,\\
 v&=\frac{C^2k(k+1)(k+2)(5k+7)}{12(2k+1)^2}
       +\frac{2d}{m}\ \ge\ \frac{C^2k^2}{12}>0.
\end{align*}
Every shifted window $X$ has sum $H+e$ and squared sum $E+f$, with $|e|\le1$, $|f|\le2$.  Consequently
\begin{gather*}
 \mu_X=\mu+e/m,\qquad v_X=v+\eta,\\
 \eta=f/m-2\mu e/m-e^2/m^2,
\end{gather*}
and $|\eta|<C$, $v_X>2v/3>0$.
\end{lemma}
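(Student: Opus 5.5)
The plan is to compute the canonical moments directly, then handle a shifted window as a cyclic rotation of a block in which only the data coordinates are wrong.

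\textbf{Canonical moments.} Every canonical block and the query have the same anchor multiset $\{C,2C,\ldots,(k+1)C\}$. The data coordinates come from gadgets of coordinate sum zero and squared norm two. The sum is therefore $C\sum_{j=1}^{k+1}j=C(k+1)(k+2)/2=H$. The squared sum is $C^2\sum j^2+2d=E$, since $d$ gadgets each contribute squared norm $2$. For the variance, write $v=E/m-H^2/m^2$ with $m=2k+1$. Combining the anchor part over the common denominator $12(2k+1)^2$ should factor as
\[
C^2(k+1)(k+2)\bigl[2(2k+3)(2k+1)-3(k+1)(k+2)\bigr]/12,
\]
and the bracket equals $5k^2+7k=k(5k+7)$. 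This yields the displayed formula. The lower bound $v\ge C^2k^2/12$ follows from $(k+1)(k+2)(5k+7)\ge k(2k+1)^2$, which holds because $5k^3>4k^3$ and the lower-order terms compare the same way.

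\textbf{Shifted windows.} A window at offset $0<t<m$ is a suffix of block $i$ followed by a prefix of block $i+1$. Both blocks have anchors at the same odd positions, so the window contains each position of the block pattern exactly once. Its anchor multiset is again $\{jC\}$, and it contains one data entry per data position, with entries from block $i$ at positions $\ge t$ and from block $i+1$ at positions $<t$. The anchor contributions to the sum and squared sum are therefore exactly the canonical ones.

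The data part is the analogue of the canonical gadget sum. It consists of a suffix of $p(u^{(i)})$ and a complementary prefix of $p(u^{(i+1)})$, so only the one gadget (of six coordinates) containing the cut can be mixed. Every gadget wholly from one block has sum $0$ and squared norm $2$. The mixed gadget consists of entries of two gadgets $P_a,P_b$ restricted to complementary coordinate sets. Its sum is a partial sum of $P_a$ plus a partial sum of $P_b$, and each gadget's partial sums lie in $\{-1,0,1\}$ with a nonzero value only when the cut separates its $+1$ and $-1$. I expect this gadget-level bookkeeping to be the main obstacle: naively it gives $|e|\le2$, but the paper claims $|e|\le1$. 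To get the sharper bound I will check the finite cases. If $a=b$, the mixed gadget is a full copy of $P_a$, so $e=0$ and $f=0$. If $\{a,b\}=\{0,1\}$, the supports $\{1,2\}$ and $\{3,4\}$ are disjoint, so at most one of the two partial sums is nonzero, giving $|e|\le1$. The squared part lies in $[0,4]$, so $f\in[-2,2]$.

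\textbf{Mean and variance perturbation.} The identity $\mu_X=\mu+e/m$ is immediate. Expanding
\[
v_X=(E+f)/m-(\mu+e/m)^2
\]
gives the stated $\eta$. To bound $\eta$, use $\mu=C(k+1)(k+2)/(2(2k+1))\le C(k+2)/2$, since $(k+1)/(2k+1)\le1$. Then $|2\mu e/m|\le C(k+2)/(2k+1)<C$, with enough slack that adding $2/m+1/m^2$ keeps $|\eta|<C$; I will verify this using $C=d+8$ and $k=6d$. Finally, $C<v/3$ follows from $v\ge C^2k^2/12$ and $Ck^2/12\ge3$, so $v_X>v-C>2v/3>0$.
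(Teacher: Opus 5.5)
Your proposal is correct and follows essentially the paper's route: the canonical moments come directly from the anchor multiset and the zero-sum, squared-norm-two gadgets. In a shifted window the anchor multiset is unchanged and only one partial gadget is mixed, and $\eta$ and $v_X>2v/3$ are bounded by the same expansion together with $v\ge C^2k^2/12$. Your check of the mixed gadget (the case split $a=b$ versus disjoint supports $\{1,2\}$, $\{3,4\}$) is a minor variant of the paper's, which writes $e$ as a difference of two prefix sums of $P_0,P_1$ lying in $\{0,1\}$; both give $|e|\le1$ and $|f|\le2$.
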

\begin{proof}
The anchors are $C,2C,\ldots,(k+1)C$ and the $d$ gadgets each contribute sum zero and squared sum two.  Summing these terms gives $H,E$ and the displayed variance identity.  Its lower bound follows from
\begin{multline*}
 (k+1)(k+2)(5k+7)-k(2k+1)^2\\
   =k^3+18k^2+30k+14>0.
\end{multline*}
A shifted window is a proper suffix of one block followed by a proper prefix of the next.  Each physical phase occurs once, so its anchor multiset is unchanged.  A shift by $r\in\{1,\ldots,m-1\}$ replaces the first $\lfloor r/2\rfloor$ data coordinates of the left block with the corresponding prefix of the right block.  Complete gadgets have identical moments, leaving at most one partial gadget.  The seven prefix sums of $P_0,P_1$, including the empty prefix, are respectively
\[
 (0,1,0,0,0,0,0),\qquad (0,0,0,1,0,0,0),
\]
and their prefix squared sums are
\[
 (0,1,2,2,2,2,2),\qquad (0,0,0,1,2,2,2).
\]
Their differences give $|e|\le1$, $|f|\le2$, including shifts at anchor and gadget boundaries.  When $n=1$ there are no shifted windows. Expanding the variance gives $\eta$.  Since $k\ge6$, $C\ge1$, and $\mu\le C(k+2)/2$,
\begin{align*}
 |\eta|&\le 2/m+2\mu/m+1/m^2\\
 &<C\frac{k+2}{2k}+\frac1k+\frac1{4k^2}\\
 &\le C\left(\frac12+\frac2k+\frac1{4k^2}\right)<C.
\end{align*}
Thus $|\eta|/v<12/(Ck^2)\le1/3$, proving positivity and the bound on $v_X$.
\end{proof}

\begin{lemma}[Normalized first-cell barrier]\label{lem:z:first}
For every shifted window $X$,
\[
 |Z_X(1)-Z_{Q_q}(1)|>\frac{C-5}{\sqrt v}
                         =\frac{d+3}{\sqrt v}.
\]
\end{lemma}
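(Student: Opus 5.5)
The plan is a direct comparison of the two first normalized coordinates, using the exact moment bookkeeping of Lemma~\ref{lem:z:moments}. A shifted window starts at a nonzero phase of a block. So, as in Lemma~\ref{lem:rt-barrier}, its first raw value $x_1$ is either a data value in $\{-1,0,1\}$ or a later anchor $jC$ with $j\ge2$. The query's first value is $C$ in both cases. Write
\[
 Z_X(1)-Z_{Q_q}(1)=\frac{(x_1-\mu_X)-(C-\mu)}{\sqrt{v_X}}
 +(C-\mu)\Bigl(\frac1{\sqrt{v_X}}-\frac1{\sqrt v}\Bigr),
\]
and bound it from below by the absolute value of the first term minus the absolute value of the second.

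For the first term, substitute $\mu_X=\mu+e/m$ with $|e|\le1$. In the data case the numerator equals $x_1-C-e/m$, whose magnitude is at least $C-1-1/m$. In the anchor case it equals $(j-1)C-e/m\ge C-1/m$. Either way the numerator exceeds $C-2$. To replace $\sqrt{v_X}$ by $\sqrt v$, use $v_X=v+\eta$ with $|\eta|<C$ and $v\ge C^2k^2/12$. Then $\sqrt{v/v_X}\ge(1+C/v)^{-1/2}\ge1-6/(Ck^2)$. The resulting loss is at most $C\cdot6/(Ck^2)\,/\sqrt v=O(1/k^2)/\sqrt v$, which is far below one unit of $1/\sqrt v$.

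For the second term, use $0<\mu-C\le\mu\le C(k+2)/2$ and
\[
 \Bigl|\frac1{\sqrt{v_X}}-\frac1{\sqrt v}\Bigr|
 =\frac{|\eta|}{\sqrt{v}\sqrt{v_X}(\sqrt v+\sqrt{v_X})}.
\]
With $v_X>2v/3$ and $\sqrt v\ge Ck/\sqrt{12}$, the product $|C-\mu|\cdot|\eta|/(\sqrt{v_X}(\sqrt v+\sqrt{v_X}))$ is at most a constant times $C^2(k+2)/(C^2k^2)=O(1/k)$. Hence the second term is at most $O(1/k)/\sqrt v$, and with $k=6d\ge6$ this should be below $1/\sqrt v$. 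Combining the two terms gives $|Z_X(1)-Z_{Q_q}(1)|>(C-2-1)/\sqrt v\ge(C-5)/\sqrt v=(d+3)/\sqrt v$, since $C=d+8$.

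The main obstacle is only the constant bookkeeping in the second term: the constant in front of $1/k$ must be checked explicitly to be small enough, uniformly for all $d\ge1$. I would first try the crude bounds $\sqrt{v_X}\ge\sqrt{2/3}\,\sqrt v$ and $\sqrt v+\sqrt{v_X}\ge(1+\sqrt{2/3})\sqrt v$, together with $v\ge C^2k^2/12$. This yields roughly $12(k+2)/(2k^2\cdot 1.8\cdot0.8)\le 4.2(k+2)/k^2\le 1$ for $k\ge6$. The slack from $C-5$ versus $C-2$ (three units) should absorb it comfortably. No path argument is needed, since the bound concerns only the first cell.
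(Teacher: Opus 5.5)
Your proposal is correct and follows essentially the paper's argument: isolate the main term $x_1-C$, whose magnitude is at least $C-1$, and absorb the mean shift $e/m$ and the variance shift $\eta$ using the bounds $|e|\le1$, $|\eta|<C$, $v\ge C^2k^2/12$ and $v_X>2v/3$ from Lemma~\ref{lem:z:moments}. The only difference is the algebraic split. The paper writes $\sqrt v\,(Z_X(1)-Z_{Q_q}(1))=(a-C)+(\alpha-1)(a-\mu)-\alpha e/m$ with $\alpha=\sqrt{v/v_X}$ and uses the cruder bound $|a-\mu|\le(k+1)C$. You instead attach the variance mismatch to the query's smaller centered value $C-\mu$ and treat the window's scale multiplicatively, so you lose only about one unit below $C-1$ rather than the paper's $47/13$; either loss fits within the slack needed for $C-5$.
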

\begin{proof}
Let $a$ be its first element and $\alpha=\sqrt{v/v_X}$. The element $a$ is a data coordinate in $[-1,1]$ or an anchor $jC$ with $j\ge2$; hence $|a-C|\ge C-1$ and $|a-\mu|\le(k+1)C$. Lemma~\ref{lem:z:moments}, and the derivative of $(1+t)^{-1/2}$ for $|t|\le1/3$, imply
\[
 \alpha<3/2,\qquad |\alpha-1|\le3C/(2v).
\]
Since the query starts with $C$,
\begin{multline*}
 \sqrt v\,(Z_X(1)-Z_{Q_q}(1))=(a-C)\\
   +(\alpha-1)(a-\mu)-\alpha e/m.
\end{multline*}
The last two terms have total absolute value strictly less than
\[
 \frac{18(k+1)}{k^2}+\frac3{2(2k+1)}
       \le\frac{47}{13}<4;
\]
both summands decrease for $k\ge6$.  Subtracting this error from $|a-C|\ge C-1$ proves the claim.
\end{proof}

\begin{theorem}[Exact identities over all windows]\label{thm:z:identity}
For $n,d\ge1$ let $h=\min_{u\in A}\langle u,q\rangle$.  Then
\begin{align*}
 \min_{X\in\operatorname{Win}_m(S_A)}D_Z^{\mathrm{sqDTW}}(X,Q_q)
   &=\frac{4d+4h}{v},\\
 \min_{X\in\operatorname{Win}_m(S_A)}D_Z^{\mathrm{sqDFD}}(X,Q_q)
   &=\begin{cases}1/v,&h=0,\\4/v,&h>0.\end{cases}
\end{align*}
Thus thresholds $4d/v$ and $1/v$, respectively, decide whether $A$ contains a vector orthogonal to $q$.
\end{theorem}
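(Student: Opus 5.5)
The plan is to compute the optimum separately on canonical windows and on shifted windows. On a canonical window the optimum will be exactly the diagonal value. On a shifted window, and on any nondiagonal canonical path, every path will contain one cell whose squared normalized difference already exceeds the largest diagonal value that can occur. Both scores aggregate nonnegative local costs by a sum or a maximum, so the dominance rule of Lemma~\ref{fw:witness} then shows that the minimum over all windows is attained by a canonical diagonal. This is the same organisation as Lemma~\ref{lem:rt-barrier} and Theorem~\ref{thm:rt-raw}, but with normalized coordinates.

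\emph{Step 1: canonical windows.} By Lemma~\ref{lem:z:moments}, a canonical block $X=B(p(u))$ and the query $Q_q$ have the same mean $\mu$ and variance $v$. Hence $Z_X(i)-Z_{Q_q}(j)=(x_i-q_j)/\sqrt v$ holds exactly, with no perturbation.

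On the diagonal the anchors cancel. The data positions contribute $\|p(u)-q^\flat(q)\|_2^2/v=(4d+4\langle u,q\rangle)/v$ to the sum, by~\eqref{eq:z:gadget}. For the bottleneck they contribute $\max_k(1+u_kq_k)^2/v$, which is $1/v$ when $u\perp q$ and $4/v$ otherwise.

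For a nondiagonal path I will repeat the first-departure argument of Lemma~\ref{lem:rt-barrier}. Anchors have value at least $C$ and data lie in $[-1,1]$, so the departure cell pairs an anchor with a data value and costs at least $(C-1)^2/v=(d+7)^2/v$. The diagonal is legal in every fixed band, and a band only removes paths, so the band adds nothing beyond this.

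\emph{Step 2: shifted windows.} Every path visits the cell $(1,1)$. By Lemma~\ref{lem:z:first}, its squared cost exceeds $(d+3)^2/v$. This is the step I expect to be the main obstacle. The delicate moment perturbation, namely that shifted windows have slightly different $\mu_X$ and $v_X$, has already been absorbed into that lemma, so what remains is to check that its weakened barrier still beats every canonical diagonal value.

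\emph{Step 3: comparison.} The largest possible diagonal value is $8d/v$ for the sum and $4/v$ for the bottleneck. The two alternative-path barriers clear these for every $d\ge1$:
\[
(d+7)^2-8d=d^2+6d+49>0,\qquad (d+3)^2-8d=d^2-2d+9>0.
\]
Both $(d+7)^2$ and $(d+3)^2$ also exceed $4$. Since the aggregation is nonnegative, every nondiagonal canonical path and every shifted-window path has cost strictly above every canonical diagonal value. The global minimum is therefore the minimum over $u\in A$ of the diagonal value, which is $(4d+4h)/v$ for squared DTW. For squared DFD it is $1/v$ if $h=0$ and $4/v$ if $h>0$.

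The decision statement follows directly. The smallest nonorthogonal value $(4d+4)/v$ lies strictly above the threshold $4d/v$, and $4/v$ lies strictly above $1/v$.
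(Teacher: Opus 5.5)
Your proposal is correct and follows the paper's proof essentially step for step. Shared canonical moments give diagonal costs $(4d+4\langle u,q\rangle)/v$ and $1/v$ or $4/v$; a first departure costs at least $(C-1)^2/v=(d+7)^2/v$; and Lemma~\ref{lem:z:first} gives the shifted-window first-cell bound $(d+3)^2/v$, with both barriers exceeding $8d/v$, after which sum and bottleneck dominance reduces the minimum to the canonical diagonals.
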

\begin{proof}
A canonical block and query share their moments.  On the diagonal the anchors cancel, so~\eqref{eq:z:gadget} gives squared DTW cost $(4d+4\langle u,q\rangle)/v$ and squared DFD cost $1/v$ or $4/v$. Any nondiagonal path has a first horizontal or vertical departure from the diagonal.  As anchors alternate with data, that cell aligns an anchor and a data coordinate and costs at least
\[
 (C-1)^2/v=(d+7)^2/v>8d/v.
\]
It exceeds every canonical diagonal cost for either score.  Hence the diagonal is optimal in every canonical block.  Every path in a shifted window visits $(1,1)$, whose cost, by Lemma~\ref{lem:z:first}, exceeds $(d+3)^2/v>8d/v$, since $(d+3)^2-8d=(d-1)^2+8>0$. Both path sum and bottleneck dominate this cell.  All shifted windows are therefore more expensive than all canonical candidates.  Minimizing the canonical costs proves the identities.
\end{proof}

\subsection{Exact comparisons and indexing}

\begin{lemma}[Exact comparison without real-root primitives]
\label{lem:z:arithmetic}
On the constructed instances, raw inputs, rational thresholds, and the integer coefficients needed for exact normalized squared-score dynamic programming have $O(\log d)$ bits.  One window can be evaluated in $O(m^2)$ Word-RAM operations, using a constant number of machine words per coefficient when $d=\Theta(\log n)$.
\end{lemma}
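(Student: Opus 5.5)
The plan is to avoid computing any normalized coordinate and instead clear denominators, so that every normalized squared local cost becomes a rational with a controlled integer denominator. For a window $X$ with integer entries, write $H_X=\sum_i x_i$ and $E_X=\sum_i x_i^2$, and set $V_X=mE_X-H_X^2=m^2v_X$, which is a positive integer by Lemma~\ref{lem:z:moments}. Then
\[
 Z_X(i)=\frac{mx_i-H_X}{\sqrt{V_X}}.
\]
For the query $Q=Q_q$ use the same notation, putting $y_j=mx_i$-type numerators $a_i=mx_i-H_X$ and $b_j=mq_j-H_Q$. The local cost expands as
\[
 (Z_X(i)-Z_Q(j))^2=\frac{a_i^2}{V_X}+\frac{b_j^2}{V_Q}-\frac{2a_ib_j}{\sqrt{V_XV_Q}}.
\]
The only irrational ingredient is the single window-level constant $\sqrt{V_XV_Q}$.

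I then split into the two window types already separated in Theorem~\ref{thm:z:identity}. For a canonical window, Lemma~\ref{lem:z:moments} gives $V_X=V_Q=m^2v$, so every local cost equals $(a_i-b_j)^2/(m^2v)$. This is an integer over a common integer denominator. The DP can therefore run on the integer numerators $(a_i-b_j)^2$, and the result is compared with the threshold through one cross multiplication with $4d/v$ or $1/v$, both rational with denominator $O(\log d)$ bits.

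For a shifted window, the all-window identity does not need the exact value. It suffices to decide the comparison with the threshold, and Lemma~\ref{lem:z:first} already shows that the first cell exceeds every threshold. Computing the cost of cell $(1,1)$ exactly as a rational-plus-root expression would be awkward. Instead I compare it with the threshold exactly by isolating the root term and squaring twice. This uses integer sign tests on polynomial expressions in $a_1,b_1,V_X,V_Q$, a constant number of multiplications. Alternatively, the evaluator first tests $V_X=V_Q$ and $H_X=H_Q$; if they are not both equal, the window is noncanonical and is rejected by the first-cell bound without further arithmetic. If they are equal, the window is treated exactly as in the canonical case.

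For a general evaluator, valid for arbitrary inputs within the construction's windows, I would do the following. Represent each local cost as $\alpha+\beta\sqrt{V_XV_Q}^{\,-1}$ with integer coefficients over the fixed denominator $V_XV_Q$. The DP sums coefficient pairs, and the minima require comparing two numbers of the form $r+s\sqrt{D}$. Such a comparison reduces to integer sign tests after one squaring. This is the step I expect to be the main obstacle. Minima over three predecessors combined with sums of pairs must keep the representation closed, and every comparison must stay $O(1)$ words. For DFD the max/min of single cells only needs the same pairwise comparison.

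The bit bounds follow from Corollary~\ref{fw:algebraic-closure}. Entries are $O(d^2)$, so $H,E,V$ are $\mathrm{poly}(d)$. The numerators $a_i,b_j$ are $\mathrm{poly}(d)$, a path sums at most $2m-1$ terms, and the one or two squarings keep all quantities polynomial in $d$, hence $O(\log d)$ bits. There are $m^2$ cells, each needing $O(1)$ operations, which gives $O(m^2)$ time per window.
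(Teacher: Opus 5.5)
Your ``general evaluator'' paragraph is essentially the paper's proof. The closure you flag as the main obstacle follows immediately from your own observation that $\sqrt{D}$, with $D=V_XV_Q$, is the only irrational quantity in a window:
\begin{itemize}
\item each local cost is $(A_{ij}-B_{ij}\sqrt D)/D$ with $A_{ij}=a_i^2V_Q+b_j^2V_X$ and $B_{ij}=2a_ib_j$;
\item path sums add the integer pairs, and DP minima and bottleneck maxima only select existing expressions;
\item any comparison, including one with a rational threshold after clearing its denominator, is the sign of a single $a-b\sqrt D$.
\end{itemize}
The paper decides that sign by cases on the signs of $a$ and $b$. It squares only when $a$ and $-b\sqrt D$ have opposite signs, so once rather than twice, and never across an unchecked sign.

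Your main route is genuinely different. Windows with $V_X=V_Q$ get a purely integer DP with integer threshold tests such as $\sum(a_i-b_j)^2\le4dm^2$; the extra test $H_X=H_Q$ is unnecessary. All other windows are rejected via Lemma~\ref{lem:z:first}. This is simpler, and it is correct on the constructed instances for the thresholds actually used. It is, however, a promise-based decision procedure: for a shifted window with $V_X\ne V_Q$ it never computes the exact score. It only infers the comparison from an analytic bound specific to this construction.

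The paper's uniform argument certifies exact finite-word evaluation of every window without knowing which windows are canonical. That uniform mechanism is also what Corollary~\ref{cor:z:rounding} reuses to compute the exact floors $\lfloor R\ell_{ij}\rfloor$ in all windows.
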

\begin{proof}
For a fixed window and query put
\[
 s_X=\sum_i x_i,\quad J_X=m\sum_i x_i^2-s_X^2,\quad a_i=mx_i-s_X,
\]
and define $s_Q,J_Q,b_j$ similarly.  Both $J$ values are positive, and $Z_X(i)=a_i/\sqrt{J_X}$.  With $D=J_XJ_Q$, a local squared cost is
\begin{equation}
 \frac{A_{ij}-B_{ij}\sqrt D}{D},\quad
 A_{ij}=a_i^2J_Q+b_j^2J_X,\quad B_{ij}=2a_ib_j.
 \label{eq:z:root}
\end{equation}
Every local cost for this window shares the same root.  Path summation adds the two integer coefficients; bottleneck maxima and DP minima select existing expressions.  No operation raises the algebraic degree. To compare expressions, determine the sign of $a-b\sqrt D$. For $b=0$, use the sign of $a$.  For $b>0,a\le0$ the sign is negative; for $b>0,a>0$ it is the sign of $a^2-b^2D$.  For $b<0,a\ge0$ the sign is positive; for $b<0,a<0$ it is the sign of $b^2D-a^2$. These branches cover equality, including square $D$, without squaring across an unchecked sign.  Comparison to a rational threshold first clears its positive denominator.  Thus it suffices to compare each window with the common threshold; values from different quadratic fields need not be compared directly.

Raw values are $O(d^2)$ and $m=O(d)$, so $|s_X|,|a_i|=O(d^3)$ and $J_X=O(d^6)$. The local coefficients are $A_{ij}=O(d^{12})$, $B_{ij}=O(d^6)$. A path visits at most $2m-1$ cells.  Its coefficients and all squared comparison products remain $d^{O(1)}$, hence have $O(\log d)$ bits. They fit a constant number of $\Theta(\log N)$-bit words on the hard family.  Standard DP uses $O(m^2)$ states and constant work per state. Finally $v=J_0/m^2$ for $J_0=mE-H^2$, and the thresholds are $4dm^2/J_0$ and $m^2/J_0$, also rational with $O(\log d)$-bit numerators and denominators.  The reduction generates raw integers and these rationals, without first evaluating any normalized coordinates.
\end{proof}

\begin{corollary}[Rational threshold for ordinary DFD]\label{cor:z:rational}
The same construction reduces orthogonality to ordinary normalized DFD with rational threshold $r=m/\lfloor\sqrt{J_0}\rfloor$.
\end{corollary}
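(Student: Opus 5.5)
By Theorem~\ref{thm:z:identity}, the minimum of $D_Z^{\mathrm{sqDFD}}$ over all windows is $1/v$ in a Yes instance and $4/v$ in a No instance. Since $D_Z^{\mathrm{DFD}}$ is the nonnegative square root of $D_Z^{\mathrm{sqDFD}}$ window by window, and the square root is increasing, the minimum ordinary normalized DFD over windows is exactly $1/\sqrt v$ or $2/\sqrt v$. So it suffices to show that the rational $r=m/\lfloor\sqrt{J_0}\rfloor$ satisfies
\[
 \frac1{\sqrt v}\le r<\frac2{\sqrt v}.
\]
The rest is to check that $r$ and the comparison are legal in the Word-RAM.

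First, recall $v=J_0/m^2$, so $\sqrt v=\sqrt{J_0}/m$, and $1/\sqrt v=m/\sqrt{J_0}$, $2/\sqrt v=2m/\sqrt{J_0}$. Put $s=\lfloor\sqrt{J_0}\rfloor$. Then $s\le\sqrt{J_0}$ gives $r\ge m/\sqrt{J_0}=1/\sqrt v$, which is the Yes direction. For the No direction we need $m/s<2m/\sqrt{J_0}$, that is, $\sqrt{J_0}<2s$. This holds because $\sqrt{J_0}<s+1\le 2s$ as soon as $s\ge1$.

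Positivity of $s$ follows from Lemma~\ref{lem:z:moments}: $J_0=m^2v\ge m^2C^2k^2/12\ge1$, and in fact $J_0=\Theta(d^6)$. I should also note that $s$ is a polynomial-in-$d$ integer computable exactly, for example by integer square root via binary search in $O(\log d)$ word operations. That fits the $n^{o(1)}$ query-encoder budget, and $r$ has $O(\log d)$-bit numerator and denominator.

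Second, the index compares $D_Z^{\mathrm{DFD}}\le r$, and this is equivalent to $D_Z^{\mathrm{sqDFD}}\le r^2$ because both sides are nonnegative. The latter is a comparison of an expression $(A-B\sqrt D)/D$ against the rational $r^2=m^2/s^2$. Lemma~\ref{lem:z:arithmetic} already handles exactly such comparisons with $O(\log d)$-bit coefficients, so the predicate is exactly decidable. In any case the reduction only needs the mathematical truth value, as for soft-DTW.

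Finally, apply the all-window equivalence and the resource bounds ($N=nm$, encoders in $O(nd)$ and $O(d)$ time) to Lemma~\ref{lem:rt-transfer}. The only step needing any care is the inequality $\sqrt{J_0}<2s$, and it is immediate once $J_0\ge1$ is established from the variance lower bound.
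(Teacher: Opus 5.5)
Your proof is correct and follows the paper's argument. You take the two optimal values $m/\sqrt{J_0}$ and $2m/\sqrt{J_0}$ from Theorem~\ref{thm:z:identity} and use $\lfloor\sqrt{J_0}\rfloor\le\sqrt{J_0}<\lfloor\sqrt{J_0}\rfloor+1\le2\lfloor\sqrt{J_0}\rfloor$, with the floor positive by the variance bound; the paper cites $\sqrt{J_0}>2$, but you rightly note that $J_0\ge1$ already suffices, and your integer-square-root computation and squared comparison via Lemma~\ref{lem:z:arithmetic} match the paper's finishing remarks.
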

\begin{proof}
The two optimal values are $m/\sqrt{J_0}$ and $2m/\sqrt{J_0}$. The variance bound gives $\sqrt{J_0}>2$.  Writing $b=\lfloor\sqrt{J_0}\rfloor$ therefore yields
\[
 m/\sqrt{J_0}\le m/b<2m/\sqrt{J_0}.
\]
Integer binary search using squared comparisons computes $b$ in $\operatorname{poly}(d)$ time with $O(\log d)$-bit integers.
\end{proof}

\begin{theorem}[Window-normalized indexing lower bounds]\label{thm:z:index}
Fix $\beta\ge1$ and $0<\delta<1$.  Under deterministic SETH, no uniform Word-RAM exact index for $D_Z^{\mathrm{sqDTW}}$, $D_Z^{\mathrm{sqDFD}}$, or $D_Z^{\mathrm{DFD}}$ has preprocessing $O(N^\beta)$ and worst-case query time $O(N^{1-\delta})$, when its length domain covers the hard family selected below.  Under randomized SETH the same exclusion holds for each fixed per-input two-sided error $0\le p<1/2$.  This probability includes preprocessing and query randomness, and time is worst-case over all random choices.
\end{theorem}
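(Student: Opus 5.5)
The plan is to instantiate the certificate of Proposition~\ref{fw:certificate} and then invoke Lemma~\ref{lem:rt-transfer} once for each of the three scores. Fix $(\beta,\delta)$, and also $p$ in the randomized case. Lemma~\ref{lem:rt-transfer} then selects a coefficient $c$ via Lemma~\ref{lem:rt-ovi} with $(b,\epsilon)=(\beta+\tfrac12,\delta/2)$. We put $d=\lceil c\log_2 n\rceil$, so that the hard family consists of queries of length $m=12d+1$ against targets of length $N=nm$. The data encoder $A\mapsto S_A$ concatenates the blocks $B(p(u^{(i)}))$. It never looks at $q$ and runs in $O(nd)$ time. The query encoder $q\mapsto(Q_q,\tau_q)$ outputs $B(q^\flat(q))$ together with a rational threshold. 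It depends only on $q$ and the public quantities $d,m,C$. Its threshold is computed from $J_0=mE-H^2$, which is itself a function of $d$ alone.

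For the thresholds, sqDTW uses $\tau=4dm^2/J_0$ and sqDFD uses $\tau=m^2/J_0$. Ordinary DFD uses $r=m/\lfloor\sqrt{J_0}\rfloor$ from Corollary~\ref{cor:z:rational}. Computing this last threshold costs $\operatorname{poly}(d)=n^{o(1)}$ time by integer binary search, which fits within the query-encoder budget.

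The all-window equivalence comes directly from Theorem~\ref{thm:z:identity}. If $h=0$, the minimum equals $4d/v$ (respectively $1/v$), so the predicate accepts. If $h\ge1$, the minimum is at least $(4d+4)/v>4d/v$ (respectively $4/v>1/v$), so every physical window is rejected, shifted ones included. For ordinary DFD the two optima are $m/\sqrt{J_0}$ and $2m/\sqrt{J_0}$, and the chosen $r$ lies in $[m/\sqrt{J_0},\,2m/\sqrt{J_0})$. Since the square root is monotone, Yes instances satisfy $D\le r$ and No instances satisfy $D>r$. The legality of the instance domain also needs checking. Lemma~\ref{lem:z:moments} gives positive variance for the query and for every window, so each generated query is a legal input. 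The word-size conditions come from Lemma~\ref{lem:z:arithmetic}: raw integers have $O(\log d)$ bits and the rational thresholds have $O(\log d)$-bit numerators and denominators. Proposition~\ref{fw:bits} then places everything in a constant number of words. The reduction never evaluates normalized coordinates, so no root primitive is needed by the encoders.

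The step I expect to require the most care is confirming that the transfer applies to an index whose input is raw integers and whose predicate is defined through irrational normalized values. The predicate's truth value is a mathematical fact about finite inputs, and the reduction invokes the hypothetical index once as a black box. Following the reasoning after Proposition~\ref{fw:precision}, no exact-root model is needed. Lemma~\ref{lem:z:arithmetic} is relevant only to show that the problem is itself legal in Word-RAM, i.e.\ that the O($m^2$) per-window check and the threshold comparison are exactly implementable.

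A second point is the randomized branch. Both encoders are deterministic, so Lemma~\ref{lem:rt-transfer} preserves the joint per-input error $p$ and the worst-case running times without further amplification. Finally, the hard-length coverage hypothesis is exactly the statement's requirement that the domain contain $m=12d+1$ for the selected $c$ and all large $n$. The time calculation of Lemma~\ref{lem:rt-transfer} then yields an OnlineOV index with preprocessing $O(n^{\beta+1/2})$ and query time $O(n^{1-\delta/2})$. This contradicts Lemma~\ref{lem:rt-ovi} under the respective SETH variant.
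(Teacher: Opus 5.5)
Your proposal is correct and follows essentially the same route as the paper. Both choose $c$ through Lemma~\ref{lem:rt-transfer} with adjusted exponents $(\beta+1/2,\delta/2)$, take all-window equivalence from Theorem~\ref{thm:z:identity} (with Corollary~\ref{cor:z:rational} for ordinary DFD) and finite-word legality from Lemma~\ref{lem:z:arithmetic}, and let the deterministic encoders carry the randomized error event through unchanged. Your extra checks on positive variance via Lemma~\ref{lem:z:moments} and on the black-box use of an irrationally defined predicate are consistent with the paper's framework; they spell out what its sketch leaves implicit.
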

\begin{proof}[Proof sketch]
Apply the full transfer proof of Lemma~\ref{lem:rt-transfer}. For its selected fixed $c$, set $d=\lceil c\log_2 n\rceil$. The data encoder uses only $A$, takes $O(nd)$ time, and gives $N=n(12d+1)$. The query and squared thresholds take $O(d)$ time; the rational ordinary-DFD threshold takes $\operatorname{poly}(d)$ time. Theorem~\ref{thm:z:identity} gives all-window equivalence and Lemma~\ref{lem:z:arithmetic} supplies finite-word comparisons. Thus the same adjusted exponents $(\beta+1/2,\delta/2)$ contradict OnlineOV hardness. The required lengths are $12\lceil c\log_2 n\rceil+1$. Deterministic encoders preserve the index's joint per-input error event, so the randomized branch follows under randomized SETH.
\end{proof}

\begin{corollary}[Feasible-window approximation below two]\label{cor:z:approx}
For every fixed $1\le\rho<2$, Theorem~\ref{thm:z:index} also excludes an index returning a feasible window $\widehat X$ with
\[
 D_Z^{\mathrm{DFD}}(\widehat X,Q)
 \le\rho\min_{X\in\operatorname{Win}_m(S)}D_Z^{\mathrm{DFD}}(X,Q).
\]
The conclusion also applies when a window and a feasible coupling with that cost guarantee are returned.  In the randomized case feasibility and the guarantee must hold jointly with probability at least $1-p$ for each fixed input.
\end{corollary}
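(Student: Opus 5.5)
The plan mirrors the raw DFD argument of Corollary~\ref{rs:dfd-approx}, now using the exact normalized identities of Theorem~\ref{thm:z:identity} and the arithmetic of Lemma~\ref{lem:z:arithmetic}. By Theorem~\ref{thm:z:identity}, the optimal ordinary normalized DFD over all windows is $1/\sqrt v$ in a Yes instance and $2/\sqrt v$ in a No instance. Write $1/\sqrt v=m/\sqrt{J_0}$. The point is that every \emph{feasible} value in a No instance is at least $2/\sqrt v$, since it is bounded below by the all-window minimum. Fix a rational $g$ with $\rho<g<2$, chosen independently of $n$. A returned window with the guarantee then has value at most $\rho/\sqrt v<g/\sqrt v$ in a Yes instance. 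In a No instance its value is at least $2/\sqrt v>g/\sqrt v$. The instances are therefore separated by the threshold $g/\sqrt v$, which fits the multiplicative-gap interface of Proposition~\ref{fw:approx} with $g(N)=g$.

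The verification step runs as follows. During preprocessing the composed algorithm additionally retains $S_A$, at $O(N)$ extra cost. When the index returns a window, it reads the window's contents. It then computes the exact normalized squared DFD of that window against $Q_q$ in $O(m^2)$ time via Lemma~\ref{lem:z:arithmetic}, with each value of the form $(A-B\sqrt D)/D$. Finally it compares this value with the rational threshold $g^2m^2/J_0$, using the sign procedure of that lemma. If instead a coupling is returned, the algorithm checks that it is a legal monotone path, possibly restricted to the band. It then takes the maximum of the local squared costs along the coupling, which needs $O(m)$ exact comparisons, and compares that bottleneck with the same threshold. Squaring is harmless here because both sides are nonnegative. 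Any malformed output, out-of-range window, or illegal coupling is treated as a No answer. This is safe in both cases: a Yes instance with valid output never triggers it, and a No instance must answer No anyway.

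The resource and error accounting reuses Theorem~\ref{thm:z:index} and Lemma~\ref{lem:rt-transfer}. The extra query work is $\operatorname{poly}(d)=n^{o(1)}$, and it is absorbed by the exponent slack $(\beta+1/2,\delta/2)$. The rational quantities $g^2m^2/J_0$ and the cross products have $O(\log d)$ bits. In the randomized case, the composed answer is wrong only if feasibility or the guarantee fails. The hypothesis says that happens with probability at most $p$ per input, so the joint error is preserved.

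I expect the main obstacle to be the exact comparison with the threshold $g/\sqrt v$, which is irrational in general. I would handle it by comparing squared values, both of which are rational multiples of the root expressions. Alternatively, one can use the rational threshold idea of Corollary~\ref{cor:z:rational}, picking a rational strictly between $\rho m/\sqrt{J_0}$ and $2m/\sqrt{J_0}$ by integer binary search on $\sqrt{J_0}$.
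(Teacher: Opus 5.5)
Your proposal is correct and follows essentially the paper's own proof. You fix a rational $g$ with $\rho<g<2$, retain $S_A$, verify the returned window exactly via Lemma~\ref{lem:z:arithmetic} against the rational squared threshold $g^2/v=g^2m^2/J_0$, reject invalid output, and absorb the $\operatorname{poly}(d)$ verification cost in the common transfer. For a returned coupling, checking its bottleneck directly as in Corollary~\ref{rs:dfd-approx}, rather than re-evaluating the window's exact distance, is an equally valid variant of the same argument.
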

\begin{proof}[Proof sketch]
Reuse the feasible-output verification reduction of Corollary~\ref{rs:dfd-approx}. Choose a fixed rational $g$ with $\rho<g<2$, retain $S_A$, and reject invalid output. For a returned window, evaluate its exact squared DFD by Lemma~\ref{lem:z:arithmetic} and compare with $g^2/v$. A successful Yes output has squared score at most $\rho^2/v<g^2/v$; every feasible No output has squared score at least $4/v$. A returned feasible coupling with the promised bottleneck likewise upper-bounds its window's exact distance. The extra $O(m^2)=n^{o(1)}$ verification time is absorbed by the common transfer. Feasibility and approximation must hold jointly with probability at least $1-p$; this argument does not cover $\rho=2$, an expectation-only promise, or an arbitrary underestimating value.
\end{proof}

\subsection{Dyadic rounding}

The exact squared DTW gap $4/v$ decreases with $d$.  Nevertheless a prespecified precision growing only with query length preserves it. The following result concerns its stated score definition, rather than an unspecified floating-point implementation.

\begin{corollary}[Length-dependent dyadic rounding]\label{cor:z:rounding}
Let $R(m)$ be the least power of two not smaller than $8m^5$. Independently normalize each positive-variance window and query, replace each local squared cost $\ell_{ij}$ with $\lfloor R(m)\ell_{ij}\rfloor/R(m)$, and perform all subsequent DP additions, maxima and minima exactly on integers with common denominator $R(m)$.  Exact rational-threshold indexing for these rounded squared DTW and squared DFD scores obeys Theorem~\ref{thm:z:index}, under its respective deterministic or randomized assumption and hard-length coverage.
\end{corollary}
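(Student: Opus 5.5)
The plan is to show that rounding perturbs each window's optimum by less than half the relevant gap. The rational thresholds $T_D=(4d+2)/v$ and $T_F=5/(2v)$ from Table~\ref{tab:gap-catalog} then still separate Yes from No instances over all windows. After that, the transfer of Theorem~\ref{thm:z:index} goes through unchanged.

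\textbf{Step 1: rounding error per path.} Each rounded local cost $\hat\ell_{ij}$ satisfies $\ell_{ij}-1/R\le\hat\ell_{ij}\le\ell_{ij}$. For DTW, any path visits at most $2m-1$ cells, so every path cost drops by less than $(2m-1)/R$. Hence the rounded optimum $\hat D$ satisfies $D-(2m-1)/R<\hat D\le D$ for every window, with no assumption that the optimal path is preserved. For DFD, $\max$ is monotone and each term drops by less than $1/R$, so $D-1/R<\hat D\le D$.

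\textbf{Step 2: compare with the gaps.} By Theorem~\ref{thm:z:identity}:
\begin{itemize}
\item DTW: the Yes value is $4d/v$, and every No window has value at least $(4d+4)/v$. After rounding, Yes is at most $4d/v<T_D$, and No exceeds $(4d+4)/v-2m/R$.
\item DFD: the Yes value is $1/v$, which rounds to at most $1/v$. Every No window has value at least $4/v$, which rounds above $4/v-1/R$.
\end{itemize}
It therefore suffices to show $2m/R\le 2/v$ and $1/R<3/(2v)$, i.e.\ $v\le R/m$ suffices for both. From Lemma~\ref{lem:z:moments}, $v=O(C^2k^2)=O(d^4)$; explicitly $v\le C^2k^2\cdot c'$ with $C=d+8$, $k=6d$, $m=12d+1$. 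Since $R\ge 8m^5$, we need only verify $v\le 8m^4$. This holds for all $d\ge1$ by crude bounds: $(5k+7)(k+2)(k+1)k/(12(2k+1)^2)\le k^2\cdot$(small constant) and $C\le m$. I expect this inequality check, uniform in small $d$, to be the only fiddly point. I would bound $v\le C^2(k+2)^2/2+1\le m^2\cdot m^2$ and use $C\le m$, $k+2\le m$.

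\textbf{Step 3: exact arithmetic and the transfer.} Computing $\lfloor R\ell_{ij}\rfloor$ exactly requires the sign tests of Lemma~\ref{lem:z:arithmetic}. Write $R\ell_{ij}=(RA_{ij}-RB_{ij}\sqrt D)/D$, and find the integer $t$ with $tD\le RA_{ij}-RB_{ij}\sqrt D<(t+1)D$ by binary search. Each comparison is of the form $a-b\sqrt D$ with $d^{O(1)}$ integers, since $R=O(m^5)$ is a power of two of $O(\log d)$ bits. Subsequent DP values are integers over the common denominator $R$ with $d^{O(1)}$ magnitude. The thresholds $T_D,T_F$ are rationals built from $J_0$, and their comparisons clear denominators. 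Finally, the encoders are exactly those of Theorem~\ref{thm:z:index}, and only the thresholds change. The all-window equivalence of Step 2, together with Proposition~\ref{fw:bits} and Lemma~\ref{lem:rt-transfer}, yields the deterministic and randomized exclusions on the same hard lengths $12\lceil c\log_2 n\rceil+1$.
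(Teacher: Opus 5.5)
Your proposal is correct and follows essentially the same route as the paper: the per-cell bound $\ell_{ij}-1/R<\hat\ell_{ij}\le\ell_{ij}$ is propagated through at most $2m-1$ cells (or through one bottleneck cell), combined with the crude bound $v\le m^4$ from $C\le m$, with thresholds $T_D=(4d+2)/v$ and $T_F=5/(2v)$, and exact floors are obtained by binary search on the sign of $(RA_{ij}-tD)-RB_{ij}\sqrt D$. The only detail you leave implicit is the binary-search range, which the paper fixes via $\sum_i Z_X(i)^2=m$, giving $0\le\ell_{ij}\le4m$ and hence $0\le t\le4mR$; the polynomial coefficient bounds of Lemma~\ref{lem:z:arithmetic} (with $D\ge1$) would supply it equally well.
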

\begin{proof}
Use the same construction.  Since $C=d+8\le m$ and $(k+1)C\le m^2$, its common canonical variance satisfies $0<v\le m^4$. Every local rounding error is strictly smaller than
\[
 1/R\le 1/(8m^5)\le 1/(8mv).
\]
A path has at most $2m-1$ cells, so its sum loses less than $1/(4v)$. This also bounds the error after minimizing over paths: evaluate the exact cost on a rounded-optimal path for the lower bound, and use downward rounding for the upper bound.  The same reasoning applies to the finite minimum over all windows.  In particular, neither the optimal path nor the optimal window needs to remain unchanged. By Theorem~\ref{thm:z:identity}, exact squared DTW has yes value $4d/v$ and no value at least $(4d+4)/v$.  Therefore the rounded scores remain strictly separated by the rational threshold
\[
 T_D=(4d+2)/v.
\]
For squared DFD a path maximum loses less than $1/R$, and both levels of minimization preserve that error.  Its values $1/v$ and $4/v$ remain strictly separated by $T_F=5/(2v)$.

For any positive-variance length-$m$ sequence, $\sum_i Z_X(i)^2=m$, so $|Z_X(i)|\le\sqrt m$ and $0\le\ell_{ij}\le4m$.  A local quantized integer is at most $4mR$, and a path integer sum is less than $8m^2R$. Since $R<16m^5$, all these integers are polynomial in $m$. The exact floor does not require an approximate square root. Using~\eqref{eq:z:root}, testing an integer $t\le R\ell_{ij}$ means testing the sign of
\[
 (RA_{ij}-tD)-RB_{ij}\sqrt D.
\]
The sign branches in Lemma~\ref{lem:z:arithmetic} decide it exactly. Binary search between the accepted lower bound $0$ and rejected upper bound $4mR+1$ obtains $\lfloor R\ell_{ij}\rfloor$ in $O(\log m)$ comparisons.  All operands, including squared comparison products, are still $d^{O(1)}$ on the construction.  A window can therefore be evaluated in $O(m^2\log m)$ finite-word operations.  If its final DP integer is $L$, threshold comparisons can be written without roots as
\[
 LJ_0\le(4d+2)m^2R\quad\hbox{or}\quad 2LJ_0\le5m^2R.
\]
The rule $R(m)$ is fixed independently of the data and target index exponents; its construction uses $O(\log m)$ integer doublings. Data construction still costs $O(nd)$, and query construction costs $\operatorname{poly}(d)$.  The same adjusted exponents and deterministic reduction maps used in Theorem~\ref{thm:z:index} give the two SETH consequences.  Single-window evaluation certifies the score's finite-word semantics; the reduction does not enumerate all windows when answering a query.
\end{proof}

The precision is $O(\log m)$ fractional bits, or $O(\log\log N)$ on the hard family.  Corollary~\ref{cor:z:rounding} does not certify fixed IEEE precision, rounded normalized coordinates, or fresh rounding at each DP update.  Nor do these results cover normalized absolute-loss DTW, free endpoints, path-length normalization, or other elastic scores without separate arguments.  Raw $L_p$ subsequence indexing already has a direct predecessor~\cite{chao2022inherent}; the additional steps here are the balanced canonical moments, all-shift perturbation bounds, independently normalized first-cell barrier, and exact finite-word comparison.

\section{Discussion}
\label{sec:conclusion}

The reductions establish conditional indexing lower bounds for the measures and parameter restrictions in Table~\ref{in:catalog}. They construct a static target independently of the query and preserve the decision over all equal-length windows. The dimension coefficient is selected after the preprocessing and query exponents; constant query lengths and an arbitrary prescribed small logarithmic coefficient are outside these bounds. Deterministic and randomized indexes use their respective SETH assumptions.

The required alignment arguments depend on the measure. Anchors and mandatory endpoints suffice for several global path scores. Edit operations require additional cost bounds, derivative DTW requires control of the inverse recurrence and block boundaries, and subsequence and local scores require bounds on skipped or unconsumed positions. CID adds a global complexity factor, while soft-DTW adds the entropy of the path family.

Independent normalization admits exact lower bounds when canonical moments agree and shifted moments can be controlled. The resulting shrinking gap is consistent with the conditional attenuation bound in Section~\ref{fw:normalization}. Requiring every adjacent window to have the same mean would instead impose periodicity.

On that promise, Corollary~\ref{fw:periodic-index} gives an explicit quadratic-preprocessing, linear-space index. Under the separately stated moment-stability premises, Proposition~\ref{fw:path-perturbation} and Corollary~\ref{fw:path-signal} quantify the attenuation for absolute-cost DTW, soft-DTW and DFD. The squared-DTW and TS-SW gaps prove exact hardness; they do not give fixed-factor approximation hardness. For raw and normalized DFD, the gap excludes each fixed feasible approximation factor below two. The endpoint two is a limit of these constructions, not a general approximation upper bound.

The exact quadratic-field comparisons and the prescribed dyadic rule use finite-word arithmetic. Their guarantees concern those operations, and do not extend to arbitrary IEEE arithmetic, intermediate rounding, or other local losses. The cSoft-DTW Word-RAM reduction invokes an exact predicate; its extension to unit-cost exact transcendental operations assumes OnlineOV hardness in that model.

Approximation, restricted input distributions, other query-length domains, free endpoints, and other normalization rules remain separate questions. The worst-case bounds are compatible with practical gains from filtering and early abandonment.

\paragraph{Verification status.}
The results are supported by the paper proofs and proof sketches above. We do not claim an end-to-end machine-checked formalization. A reproducible Lean or Rocq/Coq development would need to supply the score-specific all-window semantics, the complete parameter-sensitive OnlineOV argument, the joint probability space for randomized preprocessing and queries, and the machine resource bounds. Checking local Boolean identities or a conditional dependency interface alone does not verify these remaining obligations. The normalized moment estimates, quadratic-root comparisons, and prescribed rounding rule are also part of those obligations.

\bibliographystyle{plainnat}
\bibliography{references}
\end{document}